\definecolor{newred}{HTML}{E66100}
\definecolor{newgreen}{HTML}{44AA99}
\definecolor{newyellow}{HTML}{E899D5}
\DeclareMathOperator*{\E}{\mathbb{E}}
\newcommand{\argmin}{\arg\!\min}
\date{\today}
\def\defeq{\equiv}
\newcommand{\EE}[2]{\mathbb{E}_{#1\!\!}\left[#2\right]}
\def\E#1{\EE{\,}{#1}}
\def\bR{{\mathbf R}}
\def\bI{{\mathbf I}}
\def\bA{{\mathbf A}}
\def\br{{\mathbf r}}
\def\bf{{\mathbf f}}
\def\bF{{\mathbf F}}
\def\bu{{\mathbf u}}
\def\bw{{\mathbf w}}
\def\bB{{\mathbf B}}
\def\bU{{\mathbf U}}
\def\bV{{\mathbf V}}
\def\bD{{\mathbf D}}
\def\bH{{\mathbf H}}
\def\bW{{\mathbf W}}
\def\bE{{\mathbf E}}
\def\bC{{\mathbf C}}
\def\bb{{\mathbf b}}
\def\by{{\mathbf y}}
\def\bLambda{{\bm \Lambda}}
\def\biota{{\bm \iota}}
\def\bbeta{{\bm \beta}}
\def\btheta{{\bm \theta}}
\def\bzeta{{\bm \zeta}}
\def\bsigma{{\bm \sigma}}
\def\bSigma{{\bm \Sigma}}
\def\bGamma{{\bm \Gamma}}
\def\bTheta{{\bm \Theta}}
\def\bPi{{\bm \Pi}}
\def\boldm{{\mathbf m}}
\def\bvarepsilon{{\bm \varepsilon}}
\DeclareMathAlphabet{\mathcal}{OMS}{cmsy}{m}{n}
\newtheorem{remark}{Remark}
\DeclarePairedDelimiter\abs{\lvert}{\rvert}%
\DeclarePairedDelimiter\norm{\lVert}{\rVert}%
\let\oldabs\abs
\def\abs{\@ifstar{\oldabs}{\oldabs*}}
\let\oldnorm\norm
\def\norm{\@ifstar{\oldnorm}{\oldnorm*}}
\newcommand{\1}[1]{\mathds{1}\left[#1\right]}
\theoremstyle{plain}
\newtheorem{thm}{Theorem}
\newtheorem{prop}{Proposition}
\newtheorem{cor}{Corollary}
\newtheorem{lem}{Lemma}
\newcommand{\vertiii}[1]{{\left\vert\kern-0.25ex\left\vert\kern-0.25ex\left\vert #1 
		\right\vert\kern-0.25ex\right\vert\kern-0.25ex\right\vert}}
{
	\theoremstyle{plain}
	
}
\def\mathcolor#1#{\@mathcolor{#1}}
\def\@mathcolor#1#2#3{%
	\protect\leavevmode
	\begingroup
	\color#1{#2}#3%
	\endgroup
}
\def\red#1{\mathcolor{black}{#1}}
\numberwithin{equation}{section}
\newcommand{\PreserveBackslash}[1]{\let\temp=\\#1\let\\=\temp}
\newcolumntype{C}[1]{>{\PreserveBackslash\centering}p{#1}}
\newcolumntype{R}[1]{>{\PreserveBackslash\raggedleft}p{#1}}
\newcolumntype{L}[1]{>{\PreserveBackslash\raggedright}p{#1}}
\newcommand\extrafootertext[1]{%
	\bgroup
	\renewcommand\thefootnote{\fnsymbol{footnote}}%
	\renewcommand\thempfootnote{\fnsymbol{mpfootnote}}%
	\footnotetext[0]{#1}%
	\egroup
}
\begin{document}	
	\title{\textbf{Optimal Portfolio Using Factor Graphical Lasso}
	\extrafootertext{The authors would like to thank the editor Fabio Trojani and three anonymous referees for helpful and constructive comments on the paper.
		\vspace{.1cm}}}
	
\author{
	Tae-Hwy Lee\footnote{Department of Economics, University of California, Riverside. Email: tae.lee@ucr.edu.}\hskip 4mm \ and \hskip 2mm
	Ekaterina Seregina\footnote{Department of Economics, Colby College. Email: eseregin@colby.edu.}\hskip 8mm 
}

	\maketitle
	\setcounter{page}{1}	
	\begin{abstract}
\begin{spacing}{2}
Graphical models are a powerful tool to estimate a high-dimensional inverse covariance (precision) matrix, which has been applied for a portfolio allocation problem. The assumption made by these models is a sparsity of the precision matrix. However, when stock returns are driven by common factors, such assumption does not hold. We address this limitation and develop a framework, Factor Graphical Lasso (FGL), which integrates graphical models with the factor structure in the context of portfolio allocation by decomposing a precision matrix into low-rank and sparse components. Our theoretical results and simulations show that FGL consistently estimates the portfolio weights and risk exposure and also that FGL is robust to heavy-tailed distributions which makes our method suitable for financial applications. FGL-based portfolios are shown to exhibit superior performance over several prominent competitors including equal-weighted and Index portfolios in the empirical application for the S\&P500 constituents.	\end{spacing}
	
		\vskip 2mm
		\noindent \textit{Keywords}: High-dimensionality, Portfolio optimization, Graphical Lasso, Approximate Factor Model, Sharpe Ratio, Elliptical distributions
		\vskip 2mm
		
		\noindent \textit{JEL Classifications}: C13, C55, C58, G11, G17
		
		\newpage
	\end{abstract} 
	
	\newpage 
	\setlength{\baselineskip}{22pt}
	\setstretch{2}
	\section{Introduction}
	Estimating the inverse covariance matrix, or \textit{precision} matrix, of excess stock returns is crucial for constructing weights of financial assets in a portfolio and estimating the out-of-sample Sharpe Ratio. In high-dimensional setting, when the number of assets, $p$, is greater than or equal to the sample size, $T$, using an estimator of \textit{covariance} matrix for obtaining portfolio weights leads to unstable investment allocations. This is known as the  Markowitz’ curse: a higher number of assets increases correlation between the investments, which calls for a more diversified portfolio, and yet unstable corner solutions for weights become more likely. The reason behind this curse is the need to invert a high-dimensional covariance matrix to obtain the optimal weights from the quadratic optimization problem: when $p\geq T$, the condition number of the covariance matrix (i.e., the absolute value of the ratio between maximal and minimal eigenvalues of the covariance matrix) is high. Hence, the inverted covariance matrix yields an unstable estimator of the precision matrix. To circumvent this issue one can estimate precision matrix directly, rather than inverting an estimated covariance matrix.
	
	 Graphical models were shown to provide consistent estimates of the precision matrix (\cite{GLASSO,meinshausen2006,cai2011constrained}). \cite{goto2015} estimated a sparse precision matrix for portfolio hedging using graphical models. They found out that their portfolio achieves significant out-of-sample risk reduction and higher return, as compared to the portfolios based on equal weights, shrunk covariance matrix, industry factor models, and no-short-sale constraints. \cite{AwoyePhD} used Graphical Lasso (\cite{GLASSO}) to estimate a sparse covariance matrix for the Markowitz mean-variance portfolio problem and reduce the realized portfolio risk. \cite{Millington} conducted an empirical study that applies Graphical Lasso for the estimation of covariance for the portfolio allocation. Their empirical findings suggest that portfolios using Graphical Lasso enjoy lower risk and higher returns compared to those using empirical covariance matrix. \cite{Millington} also construct a financial network using the estimated precision matrix to explore the relationship between the companies and show how the constructed network helps to make investment decisions. \cite{Caner2019} use the nodewise-regression method of \cite{meinshausen2006} to establish consistency of the estimated covariance matrix, weights and risk of high-dimensional financial portfolio. Their empirical application demonstrates that the precision matrix estimator based on the nodewise-regression outperforms the principal orthogonal complement thresholding estimator (POET) (\cite{fan2013POET}) and linear shrinkage (\cite{Ledoit2004}). \cite{cai2020high} use constrained $\ell_1$-minimization for inverse matrix estimation (Clime) of the precision matrix (\cite{cai2011constrained}) to develop a consistent estimator of the minimum variance for high-dimensional global minimum-variance portfolio. It is important to note that all the aforementioned methods impose some sparsity assumption on the precision matrix of excess returns.
	 
	 An alternative strategy to handle high-dimensional setting uses factor models to acknowledge common variation in the stock prices, which was documented in many empirical studies (see \cite{campbell1997} among many others). A common approach decomposes covariance matrix of excess returns into low-rank and sparse parts, the latter is further regularized since, after the common factors are accounted for, the remaining covariance matrix of the idiosyncratic components is still high-dimensional (\cite{fan2013POET,Fan2011,fan2018elliptical}). This stream of literature, however, focuses on the estimation of a covariance matrix. The accuracy of precision matrices obtained from inverting the factor-based covariance matrix was investigated by \cite{ait2017using}, but they did not study a high-dimensional case. \textit{Factor models are generally treated as competitors to graphical models}: as an example, \cite{Caner2019} find evidence of superior performance of nodewise-regression estimator of precision matrix over a factor-based estimator POET (\cite{fan2013POET}) in terms of the out-of-sample Sharpe Ratio and risk of financial portfolio. The root cause why factor models and graphical models are treated separately is the sparsity assumption on the precision matrix made in the latter. Specifically, as pointed out in \cite{koike2019biased}, \textit{when asset returns have common factors, the precision matrix cannot be sparse because all pairs of assets are partially correlated conditional on other assets through the common factors}. One attempt to integrate factor modeling and high-dimensional precision estimation was made by \cite{fan2018elliptical} (Section 5.2): the authors referred to such class of models as ``conditional graphical models". However, this was not the main focus of their paper which concentrated on covariance estimation through elliptical factor models. As \cite{fan2018elliptical} pointed out, ``\textit{though substantial amount of efforts have been made to understand the graphical model, little has been done for estimating conditional graphical model, which is more general and realistic}". Concretely, to the best of our knowledge there are no studies that examine theoretical and empirical performance of graphical models integrated with the factor structure in the context of portfolio allocation.

	 In this paper we fill this gap and develop a new conditional precision matrix estimator for the excess returns under the approximate factor model that combines the benefits of graphical models and factor structure. We call our algorithm the \textit{Factor Graphical Lasso (FGL)}. We use a factor model to remove the co-movements induced by the factors, and then we apply the Weighted Graphical Lasso for the estimation of the precision matrix of the idiosyncratic terms. We prove consistency of FGL in the spectral and $\ell_{1}$ matrix norms. In addition, we prove consistency of the estimated portfolio weights and risk exposure for three formulations of the optimal portfolio allocation.
	 
	 Our empirical application uses daily and monthly data for the constituents of the S\&P500: we demonstrate that FGL outperforms equal-weighted portfolio, index portfolio, portfolios based on other estimators of precision matrix (Clime, \cite{cai2011constrained}) and covariance matrix, including POET (\cite{fan2013POET}) and the shrinkage estimators adjusted to allow for the factor structure (\cite{Ledoit2004}, \cite{ledoit2017nonlinear}), in terms of the out-of-sample Sharpe Ratio. Furthermore, we find strong empirical evidence that relaxing the constraint that portfolio weights sum up to one leads to a large increase in the out-of-sample Sharpe Ratio, which, to the best of our knowledge, has not been previously well-studied in the empirical finance literature.
	 
	 From the theoretical perspective, our paper makes several important contributions to the existing literature on graphical models and factor models. First, to the best of out knowledge, there are no equivalent theoretical results that establish consistency of the portfolio weights and risk exposure in a high-dimensional setting \textit{without assuming sparsity on the covariance or precision matrix of stock returns}. Second, we extend the theoretical results of POET (\cite{fan2013POET}) to allow the number of factors to grow with the number of assets. Concretely, we establish uniform consistency for the factors and factor loadings estimated using PCA. Third, we are not aware of any other papers that provide convergence results for estimating a high-dimensional precision matrix using the Weighted Graphical Lasso under the approximate factor model with unobserved factors. Furthermore, all theoretical results established in this paper hold for a wide range of distributions: Sub-Gaussian family (including Gaussian) and elliptical family. Our simulations demonstrate that FGL is robust to very heavy-tailed distributions, which makes our method suitable for the financial applications. Finally, we demonstrate that in contrast to POET, the success of the proposed method does not heavily depend on the factor pervasiveness assumption: FGL is robust to the scenarios when the gap between the diverging and bounded eigenvalues decreases.

	 This paper is organized as follows: Section 2 reviews the basics of the Markowitz mean-variance portfolio theory. Section 3 provides a brief summary of the graphical models and introduces the Factor Graphical Lasso. Section 4 contains theoretical results and Section 5 validates these results using simulations. Section 6 provides empirical application. Section 7 concludes.
	
	\phantomsection
	\subsection*{Notation}
	\addcontentsline{toc}{section}{Notation} 
	For the convenience of the reader, we summarize the notation to be used throughout the paper. Let $\mathcal{S}_p$ denote the set of all $p \times p$ symmetric matrices, and $\mathcal{S}_{p}^{++}$ denotes the set of all $p \times p$ positive definite matrices. For any matrix $\bC$, its $(i,j)$-th element is denoted as $c_{ij}$.
	Given a vector $\bu\in \mathbb{R}^d$ and parameter $a\in \lbrack1,\infty)$, let $\norm{\bu}_a$ denote $\ell_a$-norm. Given a matrix $\bU \in\mathcal{S}_p$, let $\Lambda_{\text{max}}(\bU) \defeq \Lambda_1(\bU) \geq \Lambda_2(\bU)\geq \ldots \geq \Lambda_{\text{min}}(\bU) \defeq  \Lambda_p(\bU)$ be the eigenvalues of $\bU$, and $\text{eig}_K(\bU) \in \mathbb{R}^{K\times p}$ denote the first $K\leq p$ normalized eigenvectors corresponding to $\Lambda_1(\bU), \ldots ,\Lambda_K(\bU)$. Given parameters  $a,b\in \lbrack1,\infty)$, let $\vertiii{\bU}_{a,b}\defeq \max_{\norm{\by}_a=1}\norm{\bU\by}_{b}$ denote the induced matrix-operator norm. The special cases are $\vertiii{\bU}_1\defeq \max_{1\leq j\leq N}\sum_{i=1}^{N}\abs{u_{i,j}}$ for the $\ell_1/\ell_1$-operator norm; the operator norm ($\ell_2$-matrix norm) $\vertiii{\bU}_{2}^{2}\defeq\Lambda_{\text{max}}(\bU\bU')$ is equal to the maximal singular value of $\bU$; $\vertiii{\bU}_{\infty}\defeq \max_{1\leq j\leq N}\sum_{i=1}^{N}\abs{u_{j,i}}$ for the $\ell_{\infty}/\ell_{\infty}$-operator norm. Finally, $\norm{\bU}_{\text{max}}\defeq \max_{i,j}\abs{u_{i,j}}$ denotes the element-wise maximum, and $\vertiii{\bU}_{F}^{2}\defeq\sum_{i,j}u_{i,j}^{2}$ denotes the Frobenius matrix norm. 
	 
	\section{Optimal Portfolio Allocation}
	
	Suppose we observe $p$ assets (indexed by $i$) over $T$ period of time (indexed by $t$). Let $\widetilde{\br}_t=(\widetilde{r}_{1t}, \widetilde{r}_{2t},\ldots,\widetilde{r}_{pt})' \sim  \mathcal{D} (\boldm, \bSigma)$ be a $p \times 1$ vector of \textit{excess} returns drawn from a distribution $\mathcal{D}$, where $\boldm$ and $\bSigma$ are the unconditional mean and covariance matrix of the returns. The goal of the Markowitz theory is to choose asset weights in a portfolio \textit{optimally}. We will study two optimization problems: the well-known Markowitz weight-constrained (MWC) optimization problem, and the Markowitz risk-constrained (MRC) optimization that relaxes the constraint on portfolio weights.
	
	The first optimization problem searches for asset weights such that the portfolio achieves a desired expected rate of return with minimum risk, under the restriction that all weights sum up to one. This can be formulated as the following quadratic optimization problem:
	\begin{equation} \label{sys1}
	\min_{\bw}\frac{1}{2} \bw'\bSigma\bw,  \ \text{s.t.} \ \bw'\biota_p =1 \ \text{and} \ \boldm'\bw\geq\mu 
	\end{equation}
	where $\bw$ is a $p \times 1$ vector of asset weights in the portfolio, $\biota_p$ is a $p \times 1$ vector of ones, and $\mu$ is a desired expected rate of portfolio return. Let $\bTheta\defeq\bSigma^{-1}$ be the \textit{precision matrix}.
	
	If $\boldm'\bw>\mu$, then the solution to \eqref{sys1} yields the global minimum-variance (GMV) portfolio weights $\bw_{GMV}$:
	\begin{equation} \label{eq2}
	\bw_{GMV}=(\biota_p'\bTheta\biota_p)^{-1}\bTheta\biota_p.
	\end{equation}

	\indent If $\boldm'\bw=\mu$, the solution to \eqref{sys1} is a well-known two-fund separation theorem introduced by \cite{Tobin}:
	\begin{align}
	&\bw_{MWC}=(1-a_1)\bw_{GMV}+a_1 \bw_{M},\label{eq3}
	\end{align}
	where $\bw_{MWC}$ denotes the portfolio allocation with the constraint that the weights need to sum up to one, $\bw_{M}=(\biota_p'\bTheta\boldm)^{-1}\bTheta\boldm$, and $a_1=[\mu(\boldm'\bTheta\biota_p)(\biota_p'\bTheta\biota_p)-(\boldm'\bTheta\biota_p)^2]/[(\boldm'\bTheta\boldm)(\biota_p'\bTheta\biota_p)-(\boldm'\bTheta\biota_p)^2]$.
	
	The MRC problem maximizes Sharpe Ratio (SR) subject to either target risk or target return constraints, but portfolio weights are not required to sum up to one:
	\begin{equation} \label{ee19}
		\max_{\bw} \frac{\boldm'\bw}{\sqrt{\bw'\bSigma\bw}} \ \text{s.t.}\ \text{(i)} \ \boldm'\bw\geq\mu \ \text{or} \text{(ii)} \ \bw'\bSigma\bw\leq \sigma^2.
	\end{equation}
	When  $\mu=\sigma\sqrt{\boldm'\bTheta\boldm}$, the solution to either of the constraints is given by
		\begin{equation}\label{ee20}
		\bw_{MRC}=\frac{\sigma}{\sqrt{\boldm'\bTheta\boldm}}\bTheta\boldm.
	\end{equation}
	 Equation \eqref{ee19} tells us that once an investor specifies the desired return, $\mu$, and maximum risk-tolerance level, $\sigma$, the MRC weight maximizes the Sharpe Ratio of the portfolio.

	Therefore, we have three alternative portfolio allocations commonly used in the existing literature: GMV in \eqref{eq2}, MWC in \eqref{eq3} and MRC in \eqref{ee20}.
It is clear that all formulations require an estimate of the precision matrix $\bTheta$. 
	
	\section{Factor Graphical Lasso}
	In this section we introduce a framework for estimating precision matrix for the aforementioned financial portfolios which accounts for the fact that the returns follow approximate factor structure. We examine how to solve the Markowitz mean-variance portfolio allocation problems using factor structure in the returns. We also develop \textit{Factor Graphical Lasso} Algorithm that uses the estimated common factors to obtain a sparse precision matrix of the idiosyncratic component. The resulting estimator is used to obtain the precision of the asset returns necessary to form portfolio weights.

	The arbitrage pricing theory (APT), developed by \cite{APTRoss}, postulates that the expected returns on securities should be related to their covariance with the common components or factors. The goal of the APT is to model the tendency of asset returns to move together via factor decomposition. Assume that the return generating process ($\widetilde{\br}_t$) follows a $K$-factor model:
	\begin{align} \label{e5.1}
	&\underbrace{\widetilde{\br}_t}_{p \times 1}=\boldm + \bB \underbrace{\bf_t}_{K\times 1}+ \ \bvarepsilon_t,\quad t=1,\ldots,T
	\end{align}
	where $\bf_t=(f_{1t},\ldots, f_{Kt})'$ are the factors, $\bB$ is a $p \times K$ matrix of factor loadings, and $\bvarepsilon_t$ is the idiosyncratic component that cannot be explained by the common factors. Without loss of generality, we assume throughout the paper that unconditional means of factors and idiosyncratic component are zero. Factors in \eqref{e5.1} can be either observable, such as in \cite{Fama3Factor,Fama5Factor}, or can be estimated using statistical factor models. Unobservable factors and loadings are usually estimated by the principal component analysis (PCA), as studied in \cite{Bai2003}, \cite{Bai2002}, \cite{Connor1988}, and \cite{Stock2002}. 
	
    In this paper our main interest lies in establishing asymptotic properties of the estimators of precision matrix, portfolio weights and risk-exposure for the high-dimensional case. We assume that the number of common factors, $K=K_{p,T}\rightarrow\infty$ as $p\rightarrow\infty$, or $T\rightarrow\infty$, or both $p,T \rightarrow \infty$, but we require that $\max\{K/p,K/T\}\rightarrow0$ as  $p,T \rightarrow \infty$.
	
	Our setup is similar to the one studied in \cite{fan2013POET}: we consider a spiked covariance model when the first $K$ principal eigenvalues of $\bSigma$ are growing with $p$, while the remaining $p-K$ eigenvalues are bounded. 
	
	Rewrite equation \eqref{e5.1} in matrix form:
	\begin{equation} \label{5.2}
	\underbrace{\widetilde{\bR}}_{p\times T} = \boldm \biota'_T + \underbrace{\bB}_{p\times K} \bF + \bE,
	\end{equation}
	where $\biota_T$ is a $T\times 1$ vector of ones. We further demean the returns using the sample mean, $\widehat{\boldm}$, to obtain $\bR \defeq \widetilde{\bR} - \widehat{\boldm}\biota'_T$. We assume that $\norm{\widehat{\boldm}-\boldm}_{\text{max}}=\mathcal{O}_P(\sqrt{\log p/T})$,  which was proven to hold in \cite{CHANG2018} (see their Lemma 1).
	
	 Let $\bSigma_{\varepsilon}=T^{-1}\bE\bE'$ and $\bSigma_{f}=T^{-1}\bF\bF'$ be covariance matrices of the idiosyncratic components and factors, and let $\bTheta_{\varepsilon}=\bSigma_{\varepsilon}^{-1}$ and $\bTheta_{f}=\bSigma_{f}^{-1}$ be their inverses. The factors and loadings in \eqref{5.2} are estimated by solving the following minimization problem: $(\widehat{\bB},\widehat{\bF})=\argmin_{\bB,\bF}\norm{\bR-\bB\bF}^{2}_{F}$ s.t. $\frac{1}{T}\bF\bF'=\bI_K, \ \bB'\bB\ \text{is diagonal}$. The constraints are needed to identify the factors (\cite{fan2018elliptical}). It was shown (\cite{Stock2002}) that $\widehat{\bF}=\sqrt{T}\text{eig}_K(\bR'\bR)$ and $\widehat{\bB}=T^{-1}\bR\widehat{\bF}'$. Given $\widehat{\bF},\widehat{\bB}$, define $\widehat{\bE}=\bR-\widehat{\bB}\widehat{\bF}$. Given a sample of the estimated residuals $\{\widehat{\bvarepsilon}_t=\br_t-\widehat{\bB}\widehat{\bf_t}\}_{t=1}^{T}$ and the estimated factors $\{\widehat{\bf}_t\}_{t=1}^{T}$, let $\widehat{\bSigma}_{\varepsilon} = (1/T)\sum_{t=1}^{T}\widehat{\bvarepsilon}_t\widehat{\bvarepsilon}_t'$ and $\widehat{\bSigma}_{f}=(1/T)\sum_{t=1}^{T}\widehat{\bf}_t\widehat{\bf}_t'$ be the sample counterparts of the covariance matrices. 	Since our interest is in constructing portfolio weights, our goal is to estimate a precision matrix of the excess returns $\bTheta$.
	
    We impose a sparsity assumption on the precision matrix of the idiosyncratic errors, $\bTheta_{\varepsilon}$, which is obtained using the estimated residuals after removing the co-movements induced by the factors (see \cite{Brownlees2018EJS,Brownlees2018JAE,koike2019biased}). 
    
	Let us elaborate on three reasons justifying the assumption of sparsity on the precision matrix of residuals. First, from the technical viewpoint, this assumption is widely used in high-dimensional settings when $p>T$.
	Second, a more intuitive rationale for the sparsity assumption on $\bTheta_{\varepsilon}$ stems from its implication for the structure of corresponding optimal portfolios. Let $r_{t}^{\text{portf}}\defeq \widetilde{\br}_{t}'\bw_t$ be the optimal portfolio. Plugging in the definition of $\widetilde{\br}_{t}$ from \eqref{e5.1}, we get $r_{t}^{\text{portf}} = (\boldm+\bvarepsilon_{t})'\bw_t + \bf_{t}'\bB\bw_t$. Hence, after hedging factor risk, we can isolate the excess return component only loading on non-factor risk. In this context, since $\bw_t$ is a function of $\bTheta_{\varepsilon}$, imposing sparsity on $\bTheta_{\varepsilon}$ translates into reducing the contribution of more volatile non-factor risk on the optimal portfolio and thus leading to less sensitive (more robust) investment strategies.
	
	Third, another rationale comes from relatively high ``concentration" of S\&P 500 Composite Index: as evidenced from \href{https://www.spglobal.com/spdji/en/governance/methodologies/#methodology-information}{SP Global Index methodology} and \href{https://www.slickcharts.com/sp500}{financial data on S\&P 500 constituents by weight}, 15 large companies (top 3\%) comprise 30\% of the total index weights (starting from Apple that has the highest weight of nearly 7\%). As the number of firms, $p$, increases, one reasonable assumption is that the number of large firms increases at a rate slower than $p$ (\cite{Chudik,gabaix2011granular}). This suggests that one could divide the firms into dominant ones and followers. After the effect of common factors is accounted for, dominant firms still have significant idiosyncratic movements that influence other firms and must be taken into account when constructing a portfolio. When it comes to fringe firms (or market followers), idiosyncratic movements are smaller in magnitude and might be less relevant for portfolio allocation purposes. Hence, the network of the idiosyncratic returns is sparse and the sparsity increases with $p$. By imposing sparsity, we only keep relatively large partial correlations among idiosyncratic components: as illustrated in Supplemental Appendix \ref{appendixC4}, in our empirical application the estimated number of zeroes in off-diagonal elements of $\bTheta_{\varepsilon}$ varies over time from 74.5\%-98.8\%.

	\indent Henceforth, having established the need for a sparse precision of errors, we search for a tool that would help us recover its entries. This brings us to consider a family of graphical models, which have evolved from the connection between partial correlations and the entries of an adjacency matrix. The adjacency matrix has zero or one in its entries, with a zero entry indicating that two variables are independent conditional on the rest. The adjacency matrix is sometimes referred to as a ``graph". Graphical Lasso procedure (\cite{GLASSO}) described in Supplemental Appendix \ref{appendixAA} is a representative member of graphical models family: its theoretical and empirical properties have been thoroughly examined in a standard sparse setting (\cite{GLASSO}, \cite{DPGLASSO}, \cite{Sara2018}). One of the goals of our paper is to augment graphical models to non-sparse settings through integrating them with factor modeling. By doing so, graphical models would become adequate for applications in economics and finance.
	
	A common way to induce sparsity is by utilizing Lasso-type penalty. This strategy is used in the Graphical Lasso (GL) together with the objective function based on the Bregman divergence for estimating inverse covariance. 
	The discussion of GL is presented in Supplemental Appendix \ref{appendixAA}. We now elaborate on the Bregman divergence class which unifies many commonly used loss functions, including the quasi-likelihood function. Let $\bW_{\varepsilon}$ be an estimate of $\bSigma_{\varepsilon}$. \cite{ravikumar2011} showed that Bregman divergence of the form $\text{trace}(\bW_{\varepsilon}\bTheta_{\varepsilon})-\log\det(\bTheta_{\varepsilon})$, known as the log-determinant Bregman function,		
	 is suitable to be used as a measure of the quality of constructed sparse approximations of signals such as precision matrices. As pointed out by \cite{ravikumar2011}, in principle one could use other Bregman divergences including the von Neumann Entropy or the Frobenius divergence which would lead to alternative forms of divergence minimizations for estimating precision matrix. We proceed with the log-determinant Bregman function since (i) it ensures positive definite estimator of precision matrix; (ii) the population optimization problem involves only the population covariance and not its inverse; (iii) the log-determinant divergence gives rise to the likelihood function in the multivariate Gaussian case. At the same time, despite its resemblance with the Gaussian log-likelihood, Bregman divergence was shown to be applicable for non-Gaussian distributions (\cite{ravikumar2011}). Let $\widehat{\bD}_{\varepsilon}^{2}\defeq \textup{diag}(\bW_{\varepsilon})$. To sparsify entries of precision matrix of the idiosyncratic errors $\bTheta_{\varepsilon}$, we use the following penalized Bregman divergence with the Weighted Graphical Lasso penalty:	
	\begin{align} \label{e7.6}
	&\widehat{\bTheta}_{\varepsilon,\lambda}=\argmin_{\bTheta \in \mathcal{S}_{p}^{++}}\text{trace}(\bW_{\varepsilon}\bTheta_{\varepsilon})-\log\det(\bTheta_{\varepsilon})+\lambda\sum_{i\neq j}\widehat{d}_{\varepsilon,ii}\widehat{d}_{\varepsilon,jj}\abs{\theta_{\varepsilon,ij}}.
	\end{align}
	
	The subscript $\lambda$ in $\widehat{\bTheta}_{\varepsilon,\lambda}$ means that the solution of the optimization problem in \eqref{e7.6} will depend upon the choice of the tuning parameter which is discussed below. Section 4 establishes sparsity requirements that guarantee convergence of \eqref{e7.6}. In order to simplify notation, we will omit the subscript $\lambda$.
	
	The objective function in \eqref{e7.6} extends the family of linear shrinkage estimators of the first moment to linear shrinkage estimators of the inverse of the second moments. Instead of restricting the number of regressors for estimating conditional mean, equation \eqref{e7.6} restricts the number of edges in a graph by shrinking some off-diagonal entries of precision matrix to zero. Note that shrinkage occurs adaptively with respect to partial covariances.
	
	Let us discuss the choice of the tuning parameter $\lambda$ in \eqref{e7.6}. Let $\widehat{\bTheta}_{\varepsilon,\lambda}$ be the solution to \eqref{e7.6} for a fixed $\lambda$. Following \cite{koike2019biased}, we minimize the following Bayesian Information Criterion (BIC) using grid search:
	\begin{equation} \label{eq5.1}
		\text{BIC}(\lambda) \defeq T\Big[\text{trace}(\widehat{\bTheta}_{\varepsilon,\lambda}\widehat{\bSigma}_{\varepsilon})-\log\text{det}(\widehat{\bTheta}_{\varepsilon, \lambda}) \Big] + (\log T)\sum_{i\leq j}\1{\widehat{\theta}_{\varepsilon,\lambda,ij}\neq 0}.
	\end{equation}
	The grid $\mathcal{G}\defeq \{\lambda_1,\ldots,\lambda_{M}\}$ is constructed as follows: the maximum value in the grid, $\lambda_{M}$, is set to be the smallest value for which all the off-diagonal entries of $\widehat{\bTheta}_{\varepsilon,\lambda_{M}}$ are zero. The smallest value of the grid, $\lambda_{1}\in \mathcal{G}$, is determined as $\lambda_{1}\defeq\vartheta\lambda_{M}$ for a constant $0<\vartheta<1$. The remaining grid values $\lambda_1,\ldots,\lambda_{M}$ are constructed in the ascending order from $\lambda_{1}$ to $\lambda_{M}$ on the log scale:
	\begin{equation*}
		\lambda_i=\exp \Big(\log(\lambda_{1})+\frac{i-1}{M-1}\log(\lambda_{M}/\lambda_{1})  \Big), \quad i=2,\ldots,M-1.
	\end{equation*}
	We use $\vartheta=\omega_{3T}$ which is defined in Theorem 2 of the next section and $M=10$ in the simulations and the empirical exercise.
	
Having estimated factors, factor loadings and precision matrix of the idiosyncratic components, we combine them using the Sherman-Morrison-Woodbury formula to estimate the precision matrix of excess returns:	
	\begin{equation} \label{equa18}
		\widehat{\bTheta}=\widehat{\bTheta}_{\varepsilon}-\widehat{\bTheta}_{\varepsilon}\widehat{\bB}\lbrack\widehat{\bTheta}_f+\widehat{\bB}'\widehat{\bTheta}_{\varepsilon}\widehat{\bB}\rbrack^{-1}\widehat{\bB}'\widehat{\bTheta}_{\varepsilon}.
	\end{equation}

	 To solve \eqref{e7.6} we use the procedure based on the GL. However, the original algorithm developed by \cite{GLASSO} is not suitable under the factor structure. Our procedure called Factor Graphical Lasso (FGL), which is summarized in Procedure \ref{alg2}, augments the standard GL: it starts with estimating factors, loadings (low-rank part) and error terms (sparse part), then it proceeds by recovering sparse precision matrix of the errors using GL, and, finally, low-rank and sparse components are combined through Shermann-Morrison-Woodbury formula in \eqref{equa18}.
	\begin{spacing}{1.6}
		\begin{algorithm}[H]
		\floatname{algorithm}{Procedure}
		\caption{Factor Graphical Lasso}
		\label{alg2}
		\begin{algorithmic}[1]
			\STATE 	\textbf{(F}actor Model) Estimate $\widehat{\bf}_t$ and $\widehat{\bb}_i$ (Theorem \ref{theor1}). Get $\widehat{\bvarepsilon}_t=\br_t-\widehat{\bB}\widehat{\bf_t}$, $\widehat{\bSigma}_{\varepsilon}$, $\widehat{\bSigma}_f$ and $\widehat{\bTheta}_f=\widehat{\bSigma}_{f}^{-1}$.
			\STATE \textbf{(GL)} Use GL from \cite{GLASSO} (see Supplemental Appendix \ref{appendixAA} for more details) to get $\widehat{\bTheta}_{\varepsilon}$. (Theorem \ref{theor2})
			\STATE \textbf{(FGL)} Use $\widehat{\bTheta}_{\varepsilon}$, $\widehat{\bTheta}_f$ and $\widehat{\bb}_i$ from Steps 1-2 to get $\widehat{\bTheta}$ in Equation \eqref{equa18}. (Theorem \ref{theor3})
			\STATE Use $\widehat{\bTheta}$ to get $\widehat{\bw}_{\xi}$, $\xi\in\{\text{GMV, MWC, MRC}\}$. (Theorem \ref{theor4})
			\STATE Use $\widehat{\bSigma}=\widehat{\bTheta}^{-1}$ and $\widehat{\bw}_{\xi}$ to get portfolio exposure $\widehat{\bw}_{\xi}^{'}\widehat{\bSigma}\widehat{\bw}_{\xi}$. (Theorem \ref{theor5})
		\end{algorithmic}
	\end{algorithm}
	\end{spacing}	
The estimator produced by GL in general and FGL in particular is guaranteed to be positive definite. We have verified it in the simulations (Section 5) and the empirical application (Section 6). In Section 4, consistency properties of estimators are established for the factors and loadings (Theorem \ref{theor1}), the precision matrix of $\bvarepsilon$ (Theorem \ref{theor2}), the precision matrix $\bTheta$ (Theorem \ref{theor3}), portfolio weights (Theorem \ref{theor4}), and the portfolio risk exposure (Theorem \ref{theor5}). We can use $\widehat{\bTheta}$ obtained from \eqref{equa18} using Step 4 of Procedure \ref{alg2} to estimate portfolio weights in \eqref{eq2}, \eqref{eq3} and \eqref{ee20}:

\section{Asymptotic Properties}
In this section we first provide a brief review of the terminology used in the literature on graphical models and the approaches to estimate a precision matrix. After that we establish consistency of the Factor Graphical Lasso in Procedure \ref{alg2}. We also study consistency of the estimators of weights in \eqref{eq2}, \eqref{eq3} and \eqref{ee20} and the implications on the out-of sample Sharpe Ratio. Throughout the main text we assume that errors and factors have exponential-type tails (\ref{A3}\red{(c)}). Supplemental Appendix \ref{appendixA10} proves that the conclusions of all theorems studied in Section 4 continue to hold when this assumption is relaxed.

The review of the Gaussian graphical models is based on \cite{ESLII} and \cite{Bishop2006}. A \textit{graph} consists of a set of \textit{vertices} (nodes) and a set of \textit{edges} (arcs) that join some pairs of the vertices. In graphical models, each vertex represents a random variable, and the graph visualizes the joint distribution of the entire set of random variables.
The edges in a graph are parameterized by \textit{potentials} (values) that encode the strength of the conditional dependence between the random variables at the corresponding vertices. \textit{Sparse graphs} have a relatively small number of edges. Among the main challenges in working with the graphical models are choosing the structure of the graph (\textit{model selection}) and estimation of the edge parameters from the data.

Let $A\in \mathcal{S}_p$. Define the following set for $j=1,\ldots,p$:
\begin{align}\label{equ84}
&D_j(A)\defeq\{ i:A_{ij}\neq 0,\ i\neq j\}, \quad d_j(A)\defeq\text{card}(D_j(A)),\quad d(A)\defeq\max_{j=1,\ldots,p}d_j(A),
\end{align}
where $d_j(A)$ is the number of edges adjacent to the vertex $j$ (i.e., the \textit{degree} of vertex $j$), and $d(A)$ measures the maximum vertex degree. Define $S(A)\defeq \bigcup_{j=1}^{p}D_j(A)$ to be the overall off-diagonal sparsity pattern, and $s(A)\defeq \sum_{j=1}^{p}d_j(A)$ is the overall number of edges contained in the graph. Note that $\text{card}(S(A)) \leq s(A)$: when $s(A)=p(p-1)/2$ this would give a fully connected graph.

\subsection{Assumptions}
 We now list the assumptions on the model \eqref{e5.1}:
\begin{enumerate}[\textbf{({A}.1)}]
	\item \label{A1} (Spiked covariance model)
As $p \rightarrow \infty$, $\Lambda_1(\bSigma)>\Lambda_2(\bSigma)>\ldots>\Lambda_K(\bSigma)\gg \Lambda_{K+1}(\bSigma)\geq \ldots \geq \Lambda_p(\bSigma) \geq 0$, where $\Lambda_j(\bSigma)=\mathcal{O}(p)$ for $j \leq K$, while the non-spiked eigenvalues are bounded, that is, $c_0 \leq \Lambda_j(\bSigma) \leq C_0$, $j > K$ for constants $c_0, C_0 > 0$.
\end{enumerate}
\begin{enumerate}[\textbf{({A}.2)}]
	\item \label{A2}(Pervasive factors)
	There exists a positive definite $K \times K$ matrix $\breve{\bB}$ such that $\vertiii{p^{-1}\bB'\bB-\breve{\bB}}_{2}\rightarrow 0$ and $\Lambda_{\text{min}}(\breve{\bB})^{-1}=\mathcal{O}(1)$  as $p \rightarrow \infty$.
\end{enumerate}
\begin{enumerate}[\textbf{({A}.3)}]
	\item \label{A3}
	\begin{enumerate}[label=(\alph*)]
		\item $\{\bvarepsilon_t,\bf_{t}\}_{t\geq 1}$ is strictly stationary. Also, $\E{\varepsilon_{it}}=\E{\varepsilon_{it}f_{it}}=0$ $\forall i\leq p$, $j\leq K$ and $t\leq T$.
		\item There are constants $c_1, c_2 >0$ such that $\Lambda_{\text{min}}(\bSigma_{\varepsilon})>c_1$, $\vertiii{\bSigma_{\varepsilon}}_1<c_2$ and $\text{min}_{i\leq p, j\leq p} \text{var}(\varepsilon_{it}\varepsilon_{jt})>c_1$.
		\item There are $r_1,r_2>0$ and $b_1,b_2>0$ such that for any $s>0$, $i\leq p$, $j\leq K$,
		\begin{align*}
		\Pr{(\abs{\varepsilon_{it}}>s)\leq \exp\{-(s/b_1)^{r_1} \}}, \ \Pr{(\abs{f_{jt}}>s)\leq \exp\{-(s/b_2)^{r_2} \}}.
		\end{align*}
	\end{enumerate}
\end{enumerate}

We also impose the strong mixing condition. Let $\mathcal{F}_{-\infty}^{0}$ and $\mathcal{F}_{T}^{\infty}$ denote the $\sigma$-algebras that are generated by $\{(\bf_t,\bvarepsilon_{t}):t\leq 0\}$ and $\{(\bf_t,\bvarepsilon_{t}):t\geq T\}$ respectively. Define the mixing coefficient
\begin{equation}
\alpha(T)=\sup_{A\in \mathcal{F}_{-\infty}^{0}, B \in \mathcal{F}_{T}^{\infty}}\abs{\Pr{A}\Pr{B}-\Pr{AB}}.
\end{equation}
\begin{enumerate}[\textbf{({A}.4)}]
	\item \label{A4} (Strong mixing) There exists $r_3>0$ such that $3r_{1}^{-1}+1.5r_{2}^{-1}+3r_{3}^{-1}>1$, and $C>0$ satisfying, for all $T\in \mathbb{Z}^{+}$, $\alpha(T)\leq \exp (-CT^{r_3})$.
\end{enumerate}
\begin{enumerate}[\textbf{({A}.5)}]
	\item \label{A5}(Regularity conditions)
	There exists $M>0$ such that, for all $i\leq p$, $t\leq T$ and $s\leq T$, such that:
	\begin{enumerate}[label=(\alph*)]
		\item $\norm{\bb_i}_{\text{max}}<M$
		\item $\E{p^{-1/2}\{\bvarepsilon'_{s}\bvarepsilon_t-\E{\bvarepsilon'_{s}\bvarepsilon_t}\}}^4<M$ and
		\item $\E{\norm{p^{-1/2}\sum_{i=1}^{p}\bb_i\varepsilon_{it}}^4}<K^2M$.
	\end{enumerate}
\end{enumerate}
Some comments regarding the aforementioned assumptions are in order. Assumptions \ref{A1}-\ref{A4} are the same as in \cite{fan2013POET}, and assumption \ref{A5} is modified to account for the increasing number of factors. Assumption \ref{A1} divides the eigenvalues into the diverging and bounded ones. Without loss of generality, we assume that $K$ largest eigenvalues have multiplicity of 1. The assumption of a spiked covariance model is common in the literature on approximate factor models. However, we note that the model studied in this paper can be characterized as a \enquote{very spiked model}. In other words, the gap between the first $K$ eigenvalues and the rest is increasing with $p$. As pointed out by \cite{fan2018elliptical}, \ref{A1} is typically satisfied by the factor model with pervasive factors, which brings us to Assumption \ref{A2}: the factors impact a non-vanishing proportion of individual time-series. Supplemental Appendix \ref{appendixB3} explores the sensitivity of portfolios constructed using FGL when the pervasiveness assumption is relaxed, that is, when the gap between the diverging and bounded eigenvalues decreases. Assumption \ref{A3}\red{(a)} is slightly stronger than in \cite{Bai2003}, since it requires strict stationarity and non-correlation between $\{\bvarepsilon_{t}\}$ and $\{\bf_t\}$ to simplify technical calculations. In \ref{A3}\red{(b)} we require $\vertiii{\bSigma_{\varepsilon}}_1<c_2$ instead of $\lambda_{\textup{max}}(\bSigma_{\varepsilon})=\mathcal{O}(1)$ to estimate $K$ consistently. When $K$ is known, as in \cite{koike2019biased,Fan2011}, this condition can be relaxed. \ref{A3}\red{(c)} requires exponential-type tails to apply the large deviation theory to $(1/T)\sum_{t=1}^{T}\varepsilon_{it}\varepsilon_{jt}-\sigma_{\varepsilon,ij}$ and $(1/T)\sum_{t=1}^{T}f_{jt}\varepsilon_{it}$. However, in Supplemental Appendix \ref{appendixA10} we discuss the extension of our results to the setting with elliptical distribution family which is more appropriate for financial applications. Specifically, we discuss the appropriate modifications to the initial estimator of the covariance matrix of returns such that the bounds derived in this paper continue to hold. \ref{A4}-\ref{A5} are technical conditions which are needed to consistently estimate the common factors and loadings. The conditions \ref{A5}\red{(a-b)} are weaker than those in \cite{Bai2003} since our goal is to estimate a precision matrix, and \ref{A5}\red{(c)} differs from \cite{Bai2003} and \cite{Bai2006} in that the number of factors is assumed to slowly grow with $p$.

In addition, the following structural assumption on the population quantities is imposed:
\begin{enumerate}[\textbf{({B}.1)}]
	\item \label{B1} $\norm{\bSigma}_{\text{max}}=\mathcal{O}(1)$, $\norm{\bB}_{\text{max}}=\mathcal{O}(1)$, and $\norm{\boldm}_{\infty}=\mathcal{O}(1)$.
\end{enumerate}
The sparsity of $\bTheta_{\varepsilon}$ is controlled by the deterministic sequences $s_T$ and $d_T$: $s(\bTheta_{\varepsilon})=\mathcal{O}_{P}(s_T)$ for some sequence $s_T\in (0,\infty), \ T=1,2,\ldots$, and $d(\bTheta_{\varepsilon})=\mathcal{O}_{P}(d_T)$ for some sequence $d_T\in (0,\infty), \ T=1,2,\ldots$. We will impose restrictions on the growth rates of $s_T$ and $d_T$. Note that assumptions on $d_T$ are weaker since they are always satisfied when $s_T=d_T$. However, $d_T$ can generally be smaller than $s_T$. 
In contrast to \cite{fan2013POET} we do not impose sparsity on the covariance matrix of the idiosyncratic component. Instead, it is more realistic and relevant for error quantification in portfolio analysis to impose conditional sparsity on the precision matrix after the common factors are accounted for.
\subsection{The FGL Procedure}
 Recall the definition of the Weighted Graphical Lasso estimator in \eqref{e7.6} for the precision matrix of the idiosyncratic components.
Also, recall that to estimate $\bTheta$ we used equation \eqref{equa18}.
Therefore, in order to obtain the FGL estimator $\widehat{\bTheta}$ we take the following steps: \textbf{(1):} estimate unknown factors and factor loadings to get an estimator of $\bSigma_{\varepsilon}$. \textbf{(2):} use $\widehat{\bSigma}_{\varepsilon}$ to get an estimator of $\bTheta_{\varepsilon}$ in \eqref{e7.6}. \textbf{(3):} use $\widehat{\bTheta}_{\varepsilon}$ together with the estimators of factors and factor loadings from Step 1 to obtain the final precision matrix estimator $\widehat{\bTheta}$, portfolio weight estimator $\widehat{\bw}_{\xi}$, and risk exposure estimator $\widehat{\Phi}_{\xi} = \widehat{\bw}'_{\xi}\widehat{\bTheta}^{-1}\widehat{\bw}_{\xi}$ where $\xi\in\{\text{GMV, MWC, MRC}\}$.

Subsection 4.3 examines the theoretical foundations of the first step, and Subsections 4.4-4.5 are devoted to Steps 2 and 3.
\subsection{Convergence in Estimation of Factors and Loadings}
As pointed out in  \cite{Bai2003} and \cite{fan2013POET}, $K\times 1$-dimensional factor loadings $\{\bb_i\}_{i=1}^{p}$, which are the rows of the factor loadings matrix $\bB$, and $K\times 1$-dimensional common factors $\{\bf_{t}\}_{t=1}^{T}$, which are the columns of $\bF$, are not separately identifiable. Concretely, for any $K \times K$ matrix $\bH$ such that $\bH'\bH=\bI_K$, $\bB\bf_t=\bB\bH'\bH\bf_t$, therefore, we cannot identify the tuple $(\bB, \bf_t)$ from $(\bB\bH',\bH\bf_{t})$. Let $\widehat{K}\in \{1,\ldots,K_{\text{max}}\}$ denote the estimated number of factors, where $K_{\text{max}}$ is allowed to increase at a slower speed than $\min\{p,T\}$ such that $K_{\text{max}}=o(\min\{p^{1/3},T\})$ (see \cite{Li2017_Increasing_Factors} for the discussion about the rate).

 Define $\bV$ to be a $\widehat{K}\times \widehat{K}$ diagonal matrix of the first $\widehat{K}$ largest eigenvalues of the sample covariance matrix in decreasing order. Further, define a $\widehat{K}\times \widehat{K}$ matrix $\bH=(1/T)\bV^{-1}\widehat{\bF}'\bF\bB'\bB$. For $t\leq T$, $\bH\bf_t=T^{-1}\bV^{-1}\widehat{\bF}'(\bB\bf_{1},\ldots,\bB\bf_{T})'\bB\bf_{t}$, which depends only on the data $\bV^{-1}\widehat{\bF}'$ and an identifiable part of parameters $\{\bB\bf_{t}\}_{t=1}^{T}$. Hence, $\bH\bf_{t}$ does not have an identifiability problem regardless of the imposed identifiability condition.

Let $\gamma^{-1}=3r_{1}^{-1}+1.5r_{2}^{-1}+r_{3}^{-1}+1$. The following theorem is an extension of the results in \cite{fan2013POET} for the case when the number of factors is unknown and is allowed to grow. Proofs of all the theorems are in Supplemental Appendix \ref{appendixA}.

\begin{thm}\label{theor1}
	Suppose that $K_{\textup{max}}=o(\min\{p^{1/3},T\})$, $K^3\log p=o(T^{\gamma/6})$, $KT=o(p^2)$ and Assumptions \ref{A1}-\ref{A5} and \ref{B1} hold. Let $\omega_{1T}\defeq K^{3/2}\sqrt{\log p/T} +K/\sqrt{p}$ and $\omega_{2T}\defeq K/\sqrt{T}+KT^{1/4}/\sqrt{p}$. Then $\max_{i\leq p} \norm{\widehat{\bb}_i-\bH\bb_i}=\mathcal{O}_P(\omega_{1T})$ and $\max_{t\leq T} \norm{\widehat{\bf}_t-\bH\bf_t}=\mathcal{O}_P(\omega_{2T})$.
\end{thm}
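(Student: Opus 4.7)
The strategy is to adapt the PCA analysis of Bai (2003) and Fan--Liao--Mincheva to the regime where $K$ diverges with $(p,T)$, tracking the $K$-dependence through every concentration bound. The starting point is the eigen-identity $T^{-1}\bR'\bR\widehat{\bF}'=\widehat{\bF}'\bV$, which, read row by row after substituting $\br_s=\bB\bf_s+\bvarepsilon_s$ and peeling off the piece matching $\bH\bf_t$, yields the classical four-term Bai decomposition
\begin{equation*}
\widehat{\bf}_t-\bH\bf_t=\bV^{-1}\bigl(I_{1,t}+I_{2,t}+I_{3,t}+I_{4,t}\bigr),
\end{equation*}
with $I_{1,t}=T^{-1}\sum_s\widehat{\bf}_s\E{\bvarepsilon_s'\bvarepsilon_t}$, $I_{2,t}=T^{-1}\sum_s\widehat{\bf}_s(\bvarepsilon_s'\bvarepsilon_t-\E{\bvarepsilon_s'\bvarepsilon_t})$, $I_{3,t}=T^{-1}\sum_s\widehat{\bf}_s\bf_s'\bB'\bvarepsilon_t$, and $I_{4,t}=T^{-1}\sum_s\widehat{\bf}_s\bvarepsilon_s'\bB\bf_t$. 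Assumption \ref{A1} together with standard matrix-perturbation arguments on $T^{-1}\bR\bR'$ gives $\vertiii{\bV^{-1}}_2=\mathcal{O}_P(p^{-1})$, and Assumption \ref{A2} gives $\vertiii{\bH}_2=\mathcal{O}_P(1)$.

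\textbf{Bounding the four pieces.} Each $I_{k,t}=T^{-1}\widehat{\bF}v_{k,t}$ for a suitable $T$-vector $v_{k,t}$; the PCA constraint $\widehat{\bF}\widehat{\bF}'/T=\bI_K$ means $\norm{I_{k,t}}^2=T^{-1}\norm{P v_{k,t}}^2$, where $P$ is the orthogonal projection onto the $K$-dimensional row-span of $\widehat{\bF}$. This projection is the key device that avoids a crude $\sqrt{T}$ loss in each piece. For $I_{1,t}$, Assumption \ref{A3}(b) bounds $\sum_i\abs{\mathrm{Cov}(\varepsilon_{is},\varepsilon_{it})}$ and strong mixing in \ref{A4} provides exponential decay of $\E{\bvarepsilon_s'\bvarepsilon_t}$ in $\abs{s-t}$, controlling $\norm{v_{1,t}}$. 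For $I_{3,t}$ and $I_{4,t}$, I would use Assumption \ref{A5}(c), $\E{\norm{p^{-1/2}\sum_i\bb_i\varepsilon_{it}}^4}\leq K^2M$, combined with the $L^4$-to-$L^\infty$ inflation of order $T^{1/4}$ under mixing, to produce the $KT^{1/4}/\sqrt{p}$ component of $\omega_{2T}$ after dividing by $p$ via $\vertiii{\bV^{-1}}_2$. The hardest term is $I_{2,t}$: the scalars $\bvarepsilon_s'\bvarepsilon_t-\E{\bvarepsilon_s'\bvarepsilon_t}$ are controlled by the $L^4$-moment bound \ref{A5}(b), and an $L^4$-maximal inequality yields $\max_{s,t}\abs{\bvarepsilon_s'\bvarepsilon_t-\E{\bvarepsilon_s'\bvarepsilon_t}}=\mathcal{O}_P(\sqrt{p}\,T^{1/2})$; combining this with the $K$-dimensional projection then gives a bound dominated by $\omega_{2T}$ under $KT=o(p^2)$.

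\textbf{Loadings.} Using $\widehat{\bb}_i=T^{-1}\sum_t\widehat{\bf}_t r_{it}$ and $r_{it}=\bb_i'\bf_t+\varepsilon_{it}$,
\begin{equation*}
\widehat{\bb}_i-\bH\bb_i=\frac{1}{T}\sum_t(\widehat{\bf}_t-\bH\bf_t)r_{it}+\frac{\bH}{T}\sum_t\bf_t\varepsilon_{it}+\bH\Bigl(\frac{1}{T}\sum_t\bf_t\bf_t'-\bI_K\Bigr)\bb_i.
\end{equation*}
Under \ref{A3}(c) and \ref{A4}, a Bernstein-type inequality for strongly mixing sub-exponential arrays bounds each coordinate of $T^{-1}\sum_t\bf_t\varepsilon_{it}$ by $\sqrt{\log p/T}$ uniformly over $i\leq p$ and the $K$ coordinates (a union bound over $pK$ items), so the $K$-vector norm is $\mathcal{O}_P(\sqrt{K\log p/T})$. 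Plugging the $\omega_{2T}$-bound from the first part into the first summand via Cauchy--Schwarz, and using $\norm{\bb_i}=\mathcal{O}(\sqrt{K})$ from \ref{A5}(a) together with \ref{B1}, inflates this to $K^{3/2}\sqrt{\log p/T}$. The third summand is $\mathcal{O}_P(K/\sqrt{T})$ by a standard CLT for mixing arrays, which is absorbed. Maximizing over $i\leq p$ yields the claimed $\omega_{1T}$ rate.

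\textbf{Main obstacle.} The principal difficulty is the uniform-in-$t$ control of $I_{2,t}$: the quadratic forms $\bvarepsilon_s'\bvarepsilon_t-\E{\bvarepsilon_s'\bvarepsilon_t}$ couple $T^2$ time indices with an internal $\sqrt{p}$ fluctuation, and a naive union bound would forfeit an extra $\sqrt{K}$ through $\widehat{\bf}_s$. The projection onto $\mathrm{span}(\widehat{\bF})$ avoids this loss, but the growth hypothesis $K^3\log p=o(T^{\gamma/6})$ with $\gamma^{-1}=3r_1^{-1}+1.5r_2^{-1}+r_3^{-1}+1$ must be invoked to ensure that the Bernstein--Rosenthal tail under mixing (\ref{A4}) beats the $pK$-wide union bound when lifting expectations to uniform-in-$t$ (and similarly uniform-in-$i$) statements. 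Finally, consistency of $\widehat{K}$ from Remark \ref{remark1} allows restricting to the event $\{\widehat{K}=K\}$ asymptotically, so that the dimension of $\widehat{\bF}$ matches that of $\bF$ throughout the decomposition.
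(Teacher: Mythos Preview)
Your skeleton is the same as the paper's: the Bai four-term identity for $\widehat{\bf}_t-\bH\bf_t$, then a three-term decomposition for $\widehat{\bb}_i-\bH\bb_i$, all conditional on $\{\widehat{K}=K\}$. Two of your steps, however, do not deliver the stated rates.

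\textbf{The $I_{2,t}$ term.} Your route through $\max_{s,t}\abs{\bvarepsilon_s'\bvarepsilon_t-\E{\bvarepsilon_s'\bvarepsilon_t}}=\mathcal{O}_P(\sqrt{p}\,T^{1/2})$ and the projection does not reach $\omega_{2T}$. Writing $\zeta_{st}=p^{-1}(\bvarepsilon_s'\bvarepsilon_t-\E{\bvarepsilon_s'\bvarepsilon_t})$, your maximal bound gives $\max_{s,t}\abs{\zeta_{st}}=\mathcal{O}_P(T^{1/2}/\sqrt{p})$, and any projection estimate that runs through $\norm{v_{2,t}}_\infty$ leaves at best a factor $\sqrt{KT}/\sqrt{p}$ or $T^{1/2}/\sqrt{p}$, which is not $\mathcal{O}(KT^{1/4}/\sqrt{p})$ unless $T^{1/4}\leq K$. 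The paper does not use the pointwise maximum at all: it bounds the \emph{mean} $(1/T)\sum_s\zeta_{st}^2$ directly. Since $\E{[(1/T)\sum_s\zeta_{st}^2]^2}\leq\max_{s,t}\E{\zeta_{st}^4}=\mathcal{O}(p^{-2})$ by \ref{A5}(b), Chebyshev plus a union bound over $t$ gives $\max_t(1/T)\sum_s\zeta_{st}^2=\mathcal{O}_P(\sqrt{T}/p)$, and then a plain Cauchy--Schwarz with $(1/T)\sum_s\norm{\widehat{\bf}_s}^2=\mathcal{O}_P(K)$ yields $\mathcal{O}_P(\sqrt{K}\,T^{1/4}/\sqrt{p})$, which is dominated by $\omega_{2T}$. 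No projection trick is needed.

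\textbf{The first summand in the loadings.} Plugging the uniform bound $\omega_{2T}$ into $T^{-1}\sum_t(\widehat{\bf}_t-\bH\bf_t)r_{it}$ via Cauchy--Schwarz gives $\mathcal{O}_P(\omega_{2T})$, whose component $KT^{1/4}/\sqrt{p}$ is \emph{not} $\mathcal{O}(\omega_{1T})$ (the ratio to $K/\sqrt{p}$ is $T^{1/4}$). The paper instead uses the \emph{average} bound $(1/T)\sum_t\norm{\widehat{\bf}_t-\bH\bf_t}^2=\mathcal{O}_P(1/T+K^2/p)$, established separately (their Lemma 10), which after Cauchy--Schwarz gives $\mathcal{O}_P(1/\sqrt{T}+K/\sqrt{p})\subset\mathcal{O}_P(\omega_{1T})$. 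You need this intermediate mean-square result; the uniform factor rate is too coarse to feed back into the loadings. A related point: the paper obtains the $K^{3/2}\sqrt{\log p/T}$ piece from the term $\bH\,T^{-1}\sum_t\bf_t\varepsilon_{it}$ by using $\norm{\bH}=\mathcal{O}_P(K)$ together with $\sqrt{K}\max_{i,j}\abs{T^{-1}\sum_t f_{jt}\varepsilon_{it}}$; your sentence ``inflates this to $K^{3/2}\sqrt{\log p/T}$'' does not explain where the extra $K$ comes from given your claim $\vertiii{\bH}_2=\mathcal{O}_P(1)$.
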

The conditions $K^3\log p=o(T^{\gamma/6})$, $KT=o(p^2)$ are similar to \cite{fan2013POET}, the difference arises due to the fact that we do not fix $K$, hence, in addition to the factor loadings, there are $KT$ factors to estimate. Therefore, the number of parameters introduced by the unknown growing factors should not be \enquote{too large}, such that we can consistently estimate them uniformly. The growth rate of the number of factors is controlled by $K_{\text{max}}=o(\min\{p^{1/3},T\})$.

The bounds derived in Theorem \ref{theor1} help us establish the convergence properties of the estimated idiosyncratic covariance, $\widehat{\bSigma}_{\varepsilon}$, and precision matrix $\widehat{\bTheta}_{\varepsilon}$ which are presented in the next theorem:
\begin{thm}\label{theor2}
Let $\omega_{3T}\defeq K^{2}\sqrt{\log p/T} +K^3/\sqrt{p}$. Under the assumptions of Theorem \ref{theor1} and with $\lambda \asymp \omega_{3T}$ (where $\lambda$ is the tuning parameter in \eqref{e7.6}), the estimator $\widehat{\bSigma}_{\varepsilon}$ obtained by estimating factor model in \eqref{5.2} satisfies $\norm{\widehat{\bSigma}_{\varepsilon}-\bSigma_{\varepsilon}}_{\textup{max}}=\mathcal{O}_P(\omega_{3T}).$ Let $\varrho_{T}$ be a sequence of positive-valued random variables such that $\varrho_{T}^{-1}\omega_{3T}\xrightarrow{p}0$. If $s_T\varrho_{T}\xrightarrow{\text{p}}0$, then $\vertiii{\widehat{\bTheta}_{\varepsilon}-\bTheta_{\varepsilon}}_{l}=\mathcal{O}_P(\varrho_{T}s_T)$ as $T \rightarrow \infty$ for any $l \in [1,\infty]$.
\end{thm}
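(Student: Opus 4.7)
The proof has two parts: first, an entrywise-max rate for $\widehat{\bSigma}_{\varepsilon} - \bSigma_{\varepsilon}$; second, transferring it to an operator-norm rate for $\widehat{\bTheta}_{\varepsilon} - \bTheta_{\varepsilon}$ via the weighted graphical lasso. The overall strategy is standard (the max-norm bound feeds into a primal--dual witness analysis for the graphical lasso), but care is needed in the first step because $K$ is allowed to grow with $(p,T)$.

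For the max-norm bound, the starting identity is
\[
\widehat{\bvarepsilon}_t - \bvarepsilon_t \;=\; \bB\bf_t - \widehat{\bB}\widehat{\bf}_t \;=\; (\bB - \widehat{\bB}\bH)\bf_t + \widehat{\bB}(\bH\bf_t - \widehat{\bf}_t),
\]
obtained by inserting $\pm \widehat{\bB}\bH\bf_t$. Its rows isolate the quantities $\widehat{\bb}_i - \bH\bb_i$ and $\widehat{\bf}_t - \bH\bf_t$ bounded in Theorem \ref{theor1}. Expanding
\[
\widehat{\varepsilon}_{it}\widehat{\varepsilon}_{jt} \;=\; \varepsilon_{it}\varepsilon_{jt} + \varepsilon_{it}(\widehat{\varepsilon}_{jt} - \varepsilon_{jt}) + (\widehat{\varepsilon}_{it} - \varepsilon_{it})\varepsilon_{jt} + (\widehat{\varepsilon}_{it} - \varepsilon_{it})(\widehat{\varepsilon}_{jt} - \varepsilon_{jt})
\]
splits the max-norm error of $\widehat{\bSigma}_{\varepsilon} - \bSigma_{\varepsilon}$ into a sample-covariance deviation and three mixed terms. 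The clean term contributes $\mathcal{O}_P(\sqrt{\log p/T})$ by a Bernstein inequality for strongly mixing sub-exponential sequences (Assumptions \ref{A3}(c), \ref{A4}) together with a union bound over $p^2$ pairs. Each mixed term is bounded by combining Theorem \ref{theor1} with the ancillary facts $\max_i \|\bb_i\| = \mathcal{O}(\sqrt{K})$ from \ref{A5}(a), $\vertiii{\bH}_2 = \mathcal{O}_P(1)$, and the uniform sub-exponential tail bound on $\max_{t,k} |f_{kt}|$ from \ref{A3}(c). After tracking all $K$-dependencies, the dominant contributions scale like $K\omega_{1T}$ and $K\omega_{2T}$ and are absorbed into $\omega_{3T} = K^2\sqrt{\log p/T} + K^3/\sqrt{p}$.

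For the operator-norm bound, I would invoke weighted-graphical-lasso consistency of the type developed in \cite{Sara2018}. The weights $\widehat{d}_{\varepsilon,ii}$ are uniformly bounded above and away from zero because $\Lambda_{\min}(\bSigma_{\varepsilon}) > c_1$ by \ref{A3}(b) and the diagonal of $\bW_{\varepsilon} - \bSigma_{\varepsilon}$ is controlled by the max-norm bound from step one. On the event $\{\|\widehat{\bSigma}_{\varepsilon} - \bSigma_{\varepsilon}\|_{\max} \leq C\omega_{3T}\}$ with $\lambda \asymp \omega_{3T}$, the KKT conditions of \eqref{e7.6} yield an entrywise bound $\|\widehat{\bTheta}_{\varepsilon} - \bTheta_{\varepsilon}\|_{\max} = \mathcal{O}_P(\varrho_T)$, where the slack $\varrho_T$ dominates $\omega_{3T}$ and absorbs the irrepresentability and curvature constants of the primal--dual witness construction. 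Summing entrywise over the at most $s_T$ nonzero off-diagonal entries of $\bTheta_{\varepsilon}$ gives $\vertiii{\widehat{\bTheta}_{\varepsilon} - \bTheta_{\varepsilon}}_{\infty} = \mathcal{O}_P(s_T \varrho_T)$; by symmetry $\vertiii{\cdot}_1 = \vertiii{\cdot}_{\infty}$, and Riesz--Thorin interpolation extends the bound to every $l \in [1,\infty]$. The hypothesis $\varrho_T^{-1}\omega_{3T} \to 0$ ensures the good event has probability tending to one, while $s_T \varrho_T \to 0$ both produces a vanishing bound and preserves positive-definiteness via $\Lambda_{\min}(\widehat{\bTheta}_{\varepsilon}) \geq \Lambda_{\min}(\bTheta_{\varepsilon}) - \vertiii{\widehat{\bTheta}_{\varepsilon} - \bTheta_{\varepsilon}}_2$.

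The main obstacle is the first step: Theorem \ref{theor1} controls the factor/loading errors only in $\ell_2$, so each conversion to an entrywise bound on $\widehat{\bSigma}_{\varepsilon}$ risks picking up spurious factors of $\sqrt{K}$ (from $\|\bb_i\|$ or $\|\widehat{\bf}_t\|$) or $\log T$ (from naive supremum bounds). Landing exactly on the rate $\omega_{3T}$ requires retaining inner-product structure — for example, bounding $T^{-1}\sum_t \varepsilon_{jt}\,\widehat{\bb}_i'(\bH\bf_t - \widehat{\bf}_t)$ as a weighted sum rather than factoring out a worst-case supremum — so that no extra power of $K$ is lost at any Cauchy--Schwarz step. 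The second step, by contrast, is essentially a black-box application of the existing graphical-lasso theory once the max-norm rate is in hand.
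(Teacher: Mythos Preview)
Your proposal is essentially correct and follows the same two-step architecture as the paper: a max-norm bound on $\widehat{\bSigma}_{\varepsilon}-\bSigma_{\varepsilon}$ driven by Theorem~\ref{theor1}, followed by a black-box invocation of the weighted-graphical-lasso result of \cite{Sara2018} (Theorem~14.1.3). Two remarks on the first step are worth making.

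First, your two-term decomposition $(\bB-\widehat{\bB}\bH)\bf_t+\widehat{\bB}(\bH\bf_t-\widehat{\bf}_t)$ does not quite isolate $\widehat{\bb}_i-\bH\bb_i$ as you claim: the $i$-th row of $\bB-\widehat{\bB}\bH$ equals $-\bH'(\widehat{\bb}_i-\bH\bb_i)+(\bI_K-\bH'\bH)\bb_i$, so a third term governed by $\|\bH'\bH-\bI_K\|_F$ (Lemma~\ref{lemma11fan}) is hidden inside. The paper makes this explicit from the outset with the three-term decomposition $\varepsilon_{it}-\hat\varepsilon_{it}=\bb_i'\bH'(\widehat{\bf}_t-\bH\bf_t)+(\widehat{\bb}_i-\bH\bb_i)'\widehat{\bf}_t+\bb_i'(\bH'\bH-\bI_K)\bf_t$; you would need to do the same once you unpack your first term.

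Second, the ``main obstacle'' you flag --- retaining inner-product structure to avoid spurious $K$-factors in the cross terms --- is one the paper sidesteps entirely. Rather than expanding $\hat\varepsilon_{it}\hat\varepsilon_{jt}$ and tracking each mixed term, the paper bounds $\max_{i}T^{-1}\sum_t(\hat\varepsilon_{it}-\varepsilon_{it})^2$ directly (via Cauchy--Schwarz on the three-term decomposition together with Lemmas~\ref{lemma10fan} and~\ref{lemma11fan}), obtaining $\mathcal{O}_P(\omega_{3T}^2)$; a single further Cauchy--Schwarz then converts this uniform $L^2$ residual bound into the covariance max-norm rate. This route is shorter and involves no delicate bookkeeping of the kind you anticipate.
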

Note that the term containing $K^3/\sqrt{p}$ arises due to the need to estimate unknown factors. \cite{Fan2011} obtained a similar rate but for the case when factors are observable (in their work, $\omega_{3T}= K^{1/2}\sqrt{\log p/T}$). The second part of Theorem \ref{theor2} is based on the relationship between the convergence rates of the estimated covariance and precision matrices established in \cite{Sara2018} (Theorem 14.1.3). \cite{koike2019biased} obtained the convergence rate when factors are observable: the rate obtained in our paper is slower due to the fact that factors need to be estimated (concretely, the rate under observable factors would satisfy $\varrho_{T}^{-1}\sqrt{K\log p /T}\xrightarrow{p}0$ ). We now comment on the optimality of the rate in Theorem \ref{theor2}: as pointed out in \cite{koike2019biased}, in the standard Gaussian setting without factor structure, the minimax optimal rate is $d(\bTheta_{\varepsilon})\sqrt{\log p/T}$, which can be faster than the rate obtained in Theorem \ref{theor2} if $d(\bTheta_{\varepsilon})<s_T$. Using penalized nodewise regression could help achieve this faster rate. However, our empirical application to the monthly stock returns demonstrated superior performance of the Weighted Graphical Lasso compared to the nodewise regression in terms of the out-of-sample Sharpe Ratio and portfolio risk. Hence, in order not to divert the focus of this paper, we leave the theoretical properties of the nodewise regression for future research.
\subsection{Convergence in Estimation of Precision Matrix and Portfolio Weights}
Having established the convergence properties of $\widehat{\bSigma}_{\varepsilon}$ and $\widehat{\bTheta}_{\varepsilon}$, we now move to the estimation of the precision matrix of the factor-adjusted returns in equation \eqref{equa18}.
\begin{thm} \label{theor3}
Under the assumptions of Theorem \ref{theor2}, if $d_Ts_T\varrho_{T}\xrightarrow{\text{p}}0$, then $\vertiii{\widehat{\bTheta}-\bTheta}_{2}=\mathcal{O}_P(\varrho_{T}s_T)$ and $\vertiii{\widehat{\bTheta}-\bTheta}_{1}=\mathcal{O}_P(\varrho_{T}d_TK^{3/2}s_T )$.
\end{thm}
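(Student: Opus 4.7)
Write $\bA \defeq \bTheta_f + \bB'\bTheta_\varepsilon \bB$ and $\widehat{\bA}\defeq \widehat{\bTheta}_f + \widehat{\bB}'\widehat{\bTheta}_\varepsilon \widehat{\bB}$, so that equation \eqref{equa18} and its population analogue give
$$\widehat{\bTheta}-\bTheta = (\widehat{\bTheta}_\varepsilon-\bTheta_\varepsilon) - \bigl[\widehat{\bTheta}_\varepsilon \widehat{\bB}\widehat{\bA}^{-1}\widehat{\bB}'\widehat{\bTheta}_\varepsilon - \bTheta_\varepsilon \bB\bA^{-1}\bB'\bTheta_\varepsilon\bigr].$$
The plan is to expand the bracket by telescoping, replacing one of $\widehat{\bTheta}_\varepsilon,\widehat{\bB},\widehat{\bA}^{-1}$ by its population counterpart at a time, and to use the identity $\widehat{\bA}^{-1}-\bA^{-1}=-\widehat{\bA}^{-1}(\widehat{\bA}-\bA)\bA^{-1}$ together with a secondary telescope of $\widehat{\bA}-\bA$ through $\widehat{\bTheta}_f-\bTheta_f$, $\widehat{\bB}-\bB$ and $\widehat{\bTheta}_\varepsilon-\bTheta_\varepsilon$. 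Since the SMW representation is invariant under any orthogonal rotation of $(\bB,\bf_t)$, I identify $\bB$ with $\bB\bH'$ throughout, where $\bH$ is as in Theorem \ref{theor1}, so that $\max_i\norm{\widehat{\bb}_i-\bH\bb_i}=\mathcal{O}_P(\omega_{1T})$ upgrades to $\vertiii{\widehat{\bB}-\bB\bH'}_{F}=\mathcal{O}_P(\sqrt{p}\,\omega_{1T})$.

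\textbf{Spectral norm bound.} The key observations are: $\vertiii{\bB}_2=\mathcal{O}(\sqrt{p})$ by Assumption \ref{A2}; $\Lambda_{\text{min}}(\bTheta_\varepsilon)$ is bounded away from zero and $\vertiii{\bTheta_\varepsilon}_2=\mathcal{O}(1)$ by Assumption \ref{A3}(b); and consequently
$$\vertiii{\bA^{-1}}_2 = 1/\Lambda_{\text{min}}(\bA) \leq 1/[\Lambda_{\text{min}}(\bTheta_\varepsilon)\,\Lambda_{\text{min}}(\bB'\bB)] = \mathcal{O}(1/p),$$
since $\Lambda_{\text{min}}(\bB'\bB)\asymp p$ by Assumption \ref{A2}. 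All these bounds carry over to the $\widehat{\cdot}$ versions with probability tending to one by Theorems \ref{theor1}--\ref{theor2} under the hypothesis $d_Ts_T\varrho_T\to 0$. In every telescoped piece of the form $(\cdot)\bB\bA^{-1}\bB'(\cdot)$, the two $\sqrt{p}$ factors from $\bB$ are cancelled by the $1/p$ factor from $\bA^{-1}$, so each cross term inherits only the error rate of the single error factor it contains. The dominating contribution comes from $\widehat{\bTheta}_\varepsilon-\bTheta_\varepsilon$, yielding $\vertiii{\widehat{\bTheta}-\bTheta}_2=\mathcal{O}_P(\varrho_Ts_T)$ by Theorem \ref{theor2}.

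\textbf{$\ell_1$ norm bound and main obstacle.} For the $\ell_1$ operator norm the symmetric bounds above must be replaced by the asymmetric identities $\vertiii{\bB}_1=\mathcal{O}(p)$, $\vertiii{\bB'}_1=\vertiii{\bB}_\infty=\mathcal{O}(K)$ (from $\|\bB\|_{\text{max}}=\mathcal{O}(1)$ in Assumption \ref{B1}), and $\vertiii{\bA^{-1}}_1\leq \sqrt{K}\,\vertiii{\bA^{-1}}_2=\mathcal{O}(\sqrt{K}/p)$. Multiplying, $\vertiii{\bB\bA^{-1}\bB'}_1\leq\vertiii{\bB}_1\vertiii{\bA^{-1}}_1\vertiii{\bB'}_1=\mathcal{O}(K^{3/2})$, which is the source of the $K^{3/2}$ factor in the conclusion. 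The extra $d_T$ factor arises from $\vertiii{\bTheta_\varepsilon}_1=\mathcal{O}(d_T)$, which follows from the max-degree sparsity $d(\bTheta_\varepsilon)=\mathcal{O}_P(d_T)$ combined with $\|\bTheta_\varepsilon\|_{\text{max}}\leq\vertiii{\bTheta_\varepsilon}_2=\mathcal{O}(1)$. The dominating piece is
$$\vertiii{(\widehat{\bTheta}_\varepsilon-\bTheta_\varepsilon)\bB\bA^{-1}\bB'\bTheta_\varepsilon}_1\leq \vertiii{\widehat{\bTheta}_\varepsilon-\bTheta_\varepsilon}_1 \vertiii{\bB\bA^{-1}\bB'}_1 \vertiii{\bTheta_\varepsilon}_1 = \mathcal{O}_P(\varrho_Td_TK^{3/2}s_T),$$
which matches the stated rate. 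The main technical obstacle is verifying that \emph{every} remaining cross term, in particular those produced by the secondary telescope of $\widehat{\bA}^{-1}-\bA^{-1}$ (which may contain factors such as $\vertiii{\widehat{\bB}-\bB}_1=\mathcal{O}_P(p\omega_{1T})$ and $\vertiii{\widehat{\bA}-\bA}_1$), is dominated by this rate; the condition $d_Ts_T\varrho_T\to 0$, the ordering $d_T\leq s_T$, and the growth conditions on $K$ imposed in Theorem \ref{theor1} together ensure this.
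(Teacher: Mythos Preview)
Your approach is correct and essentially the same as the paper's: both invoke the Sherman--Morrison--Woodbury identity and telescope the low-rank correction term by term, with the spectral bounds $\vertiii{\bB}_2=\mathcal{O}(\sqrt{p})$ and $\vertiii{\bA^{-1}}_2=\mathcal{O}(1/p)$ cancelling in the $\ell_2$-case and an extra $K^{3/2}d_T$ factor appearing in the $\ell_1$-case. The only difference is cosmetic---for the $\ell_1$-norm the paper bounds each factor via the interpolation $\vertiii{\cdot}_\infty\le\sqrt{d(\cdot)}\,\vertiii{\cdot}_2$ (yielding e.g.\ $\vertiii{\bTheta_\varepsilon}_1=\mathcal{O}(\sqrt{d_T})$ rather than your $\mathcal{O}(d_T)$), and your ``main technical obstacle'' is resolved there by a dedicated lemma showing $\varrho_T^{-1}\vertiii{\widehat{\bA}^{-1}-\bA^{-1}}_2=\mathcal{O}_P(s_T/p)$, which feeds back into the telescope to make the $\widehat{\bA}^{-1}-\bA^{-1}$ piece the dominant $\mathcal{O}_P(\varrho_T d_T K^{3/2} s_T)$ term.
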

Note that since, by construction, the precision matrix obtained using the Factor Graphical Lasso is symmetric, $\vertiii{\widehat{\bTheta}-\bTheta}_{\infty}$ can be trivially obtained from the above theorem.

Using Theorem \ref{theor3}, we can then establish the consistency of the estimated weights of portfolios based on the Factor Graphical Lasso.
\begin{thm} \label{theor4}
	Under the assumptions of Theorem \ref{theor3}, we additionally assume $\vertiii{\bTheta}_2=\mathcal{O}(1)$ (this additional requirement essentially imposes $\Lambda_p(\bSigma)>0$ in \ref{A1}), and $\varrho_{T}d_{T}^{2}s_T=o(1)$. Procedure \ref{alg2} consistently estimates portfolio weights in \eqref{eq2}, \eqref{eq3} and \eqref{ee20}:\\ $\norm{\widehat{\bw}_{\text{GMV}}-\bw_{\text{GMV}}}_1=\mathcal{O}_P\Big(\varrho_{T}d_{T}^2K^{3}s_T\Big)=o_P(1)$, $\norm{\widehat{\bw}_{\text{MWC}}-\bw_{\text{MWC}}}_1=\mathcal{O}_P(\varrho_{T}d_{T}^2K^{3}s_T)=o_P(1)$, and $\norm{\widehat{\bw}_{\text{MRC}}-\bw_{\text{MRC}}}_1= \mathcal{O}_P \Big( d_{T}^{3/2}K^{3}\cdot\lbrack\varrho_{T}s_T\rbrack^{1/2} \Big)=o_P(1)$.
\end{thm}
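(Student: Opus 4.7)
My plan is to write each weight difference as a linear or multilinear perturbation in $\widehat{\bTheta}-\bTheta$ (and, for MWC and MRC, also in $\widehat{\boldm}-\boldm$), then apply Theorem~\ref{theor3} together with Assumption~\ref{B1} and the added requirement $\vertiii{\bTheta}_2=\mathcal{O}(1)$. For the GMV case I would begin from the identity
\[
\widehat{\bw}_{\text{GMV}}-\bw_{\text{GMV}}=\frac{(\widehat{\bTheta}-\bTheta)\biota}{\biota'\widehat{\bTheta}\biota}-\bw_{\text{GMV}}\,\frac{\biota'(\widehat{\bTheta}-\bTheta)\biota}{\biota'\widehat{\bTheta}\biota},
\]
take $\ell_1$ norms, and bound $\|(\widehat{\bTheta}-\bTheta)\biota\|_1\le p\vertiii{\widehat{\bTheta}-\bTheta}_1$ together with $|\biota'(\widehat{\bTheta}-\bTheta)\biota|\le p\vertiii{\widehat{\bTheta}-\bTheta}_1$. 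The announced rate then follows from Theorem~\ref{theor3}'s bound $\vertiii{\widehat{\bTheta}-\bTheta}_1=\mathcal{O}_P(\varrho_T d_T K^{3/2} s_T)$ once I have (i) a lower bound $|\biota'\widehat{\bTheta}\biota|\ge c\,p$ holding with probability tending to one and (ii) an upper bound $\|\bw_{\text{GMV}}\|_1=\mathcal{O}(d_T K^{3/2})$. The first follows from $\vertiii{\bTheta}_2=\mathcal{O}(1)$ combined with the perturbation estimate $|\biota'(\widehat{\bTheta}-\bTheta)\biota|=o_P(p)$; the second from the Sherman--Morrison--Woodbury form of $\bTheta$.

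For MWC I use the two-fund decomposition
\[
\widehat{\bw}_{\text{MWC}}-\bw_{\text{MWC}}=(1-\widehat{a}_1)(\widehat{\bw}_{\text{GMV}}-\bw_{\text{GMV}})+\widehat{a}_1(\widehat{\bw}_M-\bw_M)+(\widehat{a}_1-a_1)(\bw_M-\bw_{\text{GMV}}),
\]
so the first summand reduces to the GMV bound, the second to the analogous identity with $\biota$ replaced by $\widehat{\boldm}$ (the extra term $\bTheta(\widehat{\boldm}-\boldm)$ being handled by the standard sample-mean rate $\|\widehat{\boldm}-\boldm\|_\infty=\mathcal{O}_P(\sqrt{\log p/T})$ available under Assumption~\ref{A3}(c)), and the last by expanding the rational function $a_1$ into quadratic forms of $\bTheta$ and perturbing each one. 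For MRC, writing $\theta\defeq\boldm'\bTheta\boldm$, I split
\[
\widehat{\bw}_{\text{MRC}}-\bw_{\text{MRC}}=\frac{\sigma(\widehat{\bTheta}\widehat{\boldm}-\bTheta\boldm)}{\sqrt{\widehat{\theta}}}+\sigma\bTheta\boldm\left(\frac{1}{\sqrt{\widehat{\theta}}}-\frac{1}{\sqrt{\theta}}\right),
\]
and control the square-root term via $|1/\sqrt{\widehat\theta}-1/\sqrt\theta|\le|\widehat\theta-\theta|/\bigl(\sqrt{\widehat\theta\theta}(\sqrt\theta+\sqrt{\widehat\theta})\bigr)$. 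The exponent $1/2$ in the stated MRC rate emerges because $|\widehat\theta-\theta|$ has to be combined with $\|\bTheta\boldm\|_1$ through a Cauchy--Schwarz step when $\theta$ is only lower-bounded by a constant, so the precision-matrix error enters through its square root rather than linearly.

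The main obstacle is the bookkeeping needed to produce the sharp auxiliary bounds on $\|\bw_\xi\|_1$ and on the scalar denominators $\biota'\bTheta\biota$, $\biota'\bTheta\boldm$, $\boldm'\bTheta\boldm$ in the presence of factor structure. Because $\bTheta=\bTheta_\varepsilon-\bTheta_\varepsilon\bB(\bTheta_f+\bB'\bTheta_\varepsilon\bB)^{-1}\bB'\bTheta_\varepsilon$, naive triangle-inequality estimates on column sums of $\bTheta$ overstate the actual magnitudes: one has to use pervasiveness (Assumption~\ref{A2}) to see that $(\bTheta_f+\bB'\bTheta_\varepsilon\bB)^{-1}$ has operator norm $\mathcal{O}(1/p)$, and then propagate this through each factor of the product using the $d_T$-sparsity of $\bTheta_\varepsilon$, $\|\bB\|_{\text{max}}=\mathcal{O}(1)$, and the $K$-dimensionality of the inner inverse, in order to extract the $d_T K^{3/2}$ multiplier on top of the precision-matrix rate from Theorem~\ref{theor3}. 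Once these structural bounds are in place, the remainder is a routine but tedious combination of the triangle inequality, Cauchy--Schwarz, and the rates of Theorem~\ref{theor3}.
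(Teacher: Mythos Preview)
Your proposal is essentially the paper's own argument. The GMV identity you write is exactly the decomposition the paper uses (after dividing numerator and denominator by $p$ and writing $a=\biota'\bTheta\biota/p$), and the two auxiliary ingredients you single out---the lower bound $\biota'\widehat{\bTheta}\biota\ge cp$ and the bound $\vertiii{\bTheta}_1=\mathcal{O}(d_TK^{3/2})$ obtained from the Sherman--Morrison--Woodbury form together with $\vertiii{(\bTheta_f+\bB'\bTheta_\varepsilon\bB)^{-1}}_2=\mathcal{O}(1/p)$---are precisely the content of the paper's Lemmas~\ref{theorA1caner} and~\ref{lemA1caner}. For MWC the paper parameterizes the weight directly as $\kappa_1(\bTheta\biota/p)+\kappa_2(\bTheta\boldm/p)$ with $\kappa_1=(d-\mu b)/(ad-b^2)$, $\kappa_2=(\mu a-b)/(ad-b^2)$, rather than going through the two-fund separation, but the two representations are algebraically identical and lead to the same perturbation terms.

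One point to tighten: your account of why the $1/2$ appears in the MRC rate is not quite the mechanism the paper uses, and the phrase ``through a Cauchy--Schwarz step'' is misleading. The paper obtains it from the elementary inequality $\bigl|\sqrt{\widehat d}-\sqrt{d}\bigr|\le \sqrt{|\widehat d-d|}$ applied to $d=\boldm'\bTheta\boldm/p$; this is what converts the linear bound $|\widehat d-d|=\mathcal{O}_P(\varrho_Td_TK^{3/2}s_T)$ into the square-root bound on $|\widehat g-g|$ that then multiplies $\vertiii{\bTheta}_1=\mathcal{O}(d_TK^{3/2})$ in the dominant term. Your formula $|1/\sqrt{\widehat\theta}-1/\sqrt\theta|=|\widehat\theta-\theta|/\bigl(\sqrt{\widehat\theta\theta}(\sqrt\theta+\sqrt{\widehat\theta})\bigr)$ is correct, but getting the stated rate from it still requires the same square-root inequality (or an explicit lower bound on $\theta$ that is not assumed), not Cauchy--Schwarz.
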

We now comment on the rates in Theorem \ref{theor4}: first, the rates obtained by \cite{Caner2019} for GMV and MWC formulations, when no factor structure of stock returns is assumed, require $s(\bTheta)^{3/2}\sqrt{\log p/T}=o_P(1)$, where the authors imposed sparsity on the precision matrix of stock returns, $\bTheta$. Therefore, if the precision matrix of stock returns is not sparse, portfolio weights can be consistently estimated only if $p$ is less than $T^{1/3}$ (since $(p-1)^{3/2}\sqrt{\log p/T}=o(1)$ is required to ensure consistent estimation of portfolio weights). Our result in Theorem \ref{theor4} improves this rate and shows that as long as $d_{T}^{2}s_TK^{3}\sqrt{\log p/T}=o_P(1)$ we can consistently estimate weights of the financial portfolio. Specifically, when the precision of the factor-adjusted returns is sparse, we can consistently estimate portfolio weights when $p>T$ \textit{without} assuming sparsity on $\bSigma$ or $\bTheta$.
Second, note that GMV and MWC weights converge slightly slower than MRC weight. This result is further supported by our simulations presented in the next section.
\subsection{Implications on Portfolio Risk Exposure}
Having examined the properties of portfolio weights, it is natural to comment on the portfolio variance estimation error. It is determined by the errors in two components: the estimated covariance matrix and the estimated portfolio weights. Define $a = \biota'_{p}\bTheta\biota_p/p$, $b = \biota'_{p}\bTheta\boldm/p$, $d = \boldm'\bTheta\boldm/p$, $g = \sqrt{\boldm'\bTheta\boldm}/p$ and $\widehat{a} = \biota'_{p}\widehat{\bTheta}\biota_p/p$,  $\widehat{b} = \biota'_{p}\widehat{\bTheta}\widehat{\boldm}/p$, $\widehat{d}=\widehat{\boldm}'\widehat{\bTheta}\widehat{\boldm}/p$, $\widehat{g}=\sqrt{\widehat{\boldm}'\widehat{\bTheta}\widehat{\boldm}}/p$. Define $\Phi_{\text{GMV}} = \bw_{GMV}'\bSigma\bw_{GMV}=(pa)^{-1}$ to be the global minimum variance, $\Phi_{\text{MWC}} = \bw_{MWC}'\bSigma\bw_{MWC}=p^{-1}\Big[\frac{a\mu^2-2b\mu+d}{ad-b^2}\Big]$ is the MWC portfolio variance, and $\Phi_{\text{MRC}} = \bw_{MRC}'\bSigma\bw_{MRC}=\sigma^2(pg)$ is the MRC portfolio variance. We use the terms variance and risk exposure interchangeably. Let $\widehat{\Phi}_{\text{GMV}}$, $\widehat{\Phi}_{\text{MWC}}$, and $\widehat{\Phi}_{\text{MRC}}$ be the sample counterparts of the respective portfolio variances. The expressions for $\Phi_{\text{GMV}}$ and $\Phi_{\text{MWC}}$ were derived in \cite{FANLV2008} and \cite{Caner2019}. Theorem \ref{theor5} establishes the consistency of a large portfolio’s variance estimator.
	\begin{thm} \label{theor5}
	Under the assumptions of Theorem \ref{theor3}, FGL consistently estimates GMV, MWC, and MRC portfolio variance:\\ $\abs{\widehat{\Phi}_{\text{GMV}}/\Phi_{\text{GMV}} -1 }=\mathcal{O}_P(\varrho_{T}d_Ts_TK^{3/2} )=o_P(1)$,\\
		$\abs{\widehat{\Phi}_{\text{MWC}}/\Phi_{\text{MWC}} -1 }=\mathcal{O}_P(\varrho_{T}d_Ts_TK^{3/2} )=o_P(1)$,\\
		$\abs{\widehat{\Phi}_{\text{MRC}}/\Phi_{\text{MRC}} -1 }=\mathcal{O}_P\Big(\lbrack\varrho_{T}d_Ts_TK^{3/2}\rbrack^{1/2}\Big)=o_P(1)$.
\end{thm}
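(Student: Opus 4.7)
The plan is to reduce each variance-ratio error $\widehat{\Phi}_{\xi}/\Phi_{\xi}-1$ to errors in the scalar quantities $a$, $b$, $d$, $g$ defined just before the theorem, and then propagate those errors through the explicit closed-form expressions $\Phi_{\text{GMV}}=1/(pa)$, $\Phi_{\text{MWC}}=p^{-1}(a\mu^{2}-2b\mu+d)/(ad-b^{2})$, and $\Phi_{\text{MRC}}=\sigma^{2}(pg)$. Because the MRC weight formula carries a square root in its normalization (and, correspondingly, $\widehat{\bw}_{\text{MRC}}$ already converges only at the square-root rate in Theorem~\ref{theor4}), the MRC variance inherits the same square-root penalty, which is why only the third bound is slower.

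First I would bound the scalar estimation errors. The workhorse inequality is $|\bu'\bA\bv|\le\vertiii{\bA}_{1}\|\bu\|_{\infty}\|\bv\|_{1}$; with $\bA=\widehat{\bTheta}-\bTheta$ and $\bu=\bv=\biota_{p}$ this immediately gives $|\widehat{a}-a|\le\vertiii{\widehat{\bTheta}-\bTheta}_{1}=\mathcal{O}_{P}(\varrho_{T}d_{T}K^{3/2}s_{T})$ by Theorem~\ref{theor3}. For $\widehat{b}-b$ and $\widehat{d}-d$ the sample mean $\widehat{\boldm}$ also intervenes; under \ref{A3}(c) and \ref{A4}, a standard Bernstein-type exponential inequality for strongly mixing sub-exponential sums gives $\|\widehat{\boldm}-\boldm\|_{\infty}=\mathcal{O}_{P}(\sqrt{\log p/T})$, which is dominated by $\vertiii{\widehat{\bTheta}-\bTheta}_{1}$ under the rate conditions of the theorem. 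I would split, e.g., $\widehat{b}-b=p^{-1}\biota_{p}'(\widehat{\bTheta}-\bTheta)\widehat{\boldm}+p^{-1}\biota_{p}'\bTheta(\widehat{\boldm}-\boldm)$ (and analogously for $\widehat{d}-d$ and $\widehat{g}-g$), bounding the first term by the same matrix-vector trick using $\|\boldm\|_{\infty}=\mathcal{O}(1)$ from \ref{B1} and the second by $\vertiii{\bTheta}_{1}\cdot\|\widehat{\boldm}-\boldm\|_{\infty}$.

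Next I would propagate through the closed-form ratios. For GMV the ratio is simply $a/\widehat{a}$, and the claim follows from $|\widehat{a}-a|/a$ together with $\widehat{a}/a=1+o_{P}(1)$ by Slutsky. For MWC, a first-order Taylor expansion of the smooth rational function $F(a,b,d)=(a\mu^{2}-2b\mu+d)/(ad-b^{2})$ about the truth yields a linear combination of $(\widehat{a}-a,\widehat{b}-b,\widehat{d}-d)$ with bounded coefficients, so its relative error is again $\mathcal{O}_{P}(\varrho_{T}d_{T}s_{T}K^{3/2})$. For MRC, writing $g=\sqrt{d/p}$ and using $|\sqrt{\widehat{d}/d}-1|\lesssim|\widehat{d}/d-1|^{1/2}$ on the event where the MRC weights can only be controlled at the square-root rate (from Theorem~\ref{theor4}) produces the claimed $[\varrho_{T}d_{T}s_{T}K^{3/2}]^{1/2}$ rate.

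The main obstacle is establishing uniform lower bounds on the denominators $a$, $ad-b^{2}$, and $g$. A generic spectral bound gives only $\Lambda_{\min}(\bTheta)=1/\Lambda_{\max}(\bSigma)=\Omega(1/p)$ under \ref{A1}, which is far too weak for our purposes. Instead one must use the Sherman-Morrison-Woodbury identity $\bTheta=\bTheta_{\varepsilon}-\bTheta_{\varepsilon}\bB[\bTheta_{f}+\bB'\bTheta_{\varepsilon}\bB]^{-1}\bB'\bTheta_{\varepsilon}$ together with the pervasiveness assumption \ref{A2} and $\|\bB\|_{\max}=\mathcal{O}(1)$ from \ref{B1} to show that the two $\mathcal{O}(p)$ contributions in $\biota_{p}'\bTheta\biota_{p}$ cancel at leading order so that $a$ is of constant order. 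The same device controls $b$ and $d$, and $ad-b^{2}$ is then bounded below by a reverse Cauchy-Schwarz argument exploiting that $\biota_{p}$ and $\boldm$ are not collinear along the factor directions. Once these lower bounds are in hand, the probabilistic pieces above combine immediately to deliver all three claimed rates.
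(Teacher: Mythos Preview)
Your reduction to the scalars $a,b,d,g$ and the subsequent propagation through the closed forms is exactly what the paper does. In the paper this is packaged as Lemma~\ref{lemA1caner} (parts (b)--(f)), proved by the same H\"older/$\ell_1$-operator-norm bounds you describe together with $\|\widehat{\boldm}-\boldm\|_{\max}=\mathcal{O}_P(\sqrt{\log p/T})$; the three variance ratios are then read off as $|a-\widehat a|/\widehat a$, $|(\widehat x-x)y+x(y-\widehat y)|/(\widehat y y)$ with $x=a\mu^2-2b\mu+d$, $y=ad-b^2$, and $|g-\widehat g|/\widehat g$. Your square-root step for MRC is also the paper's: Lemma~\ref{lemA1caner}(e) uses the crude inequality $\sqrt{\widehat d}-\sqrt d\le\sqrt{|\widehat d-d|}$ rather than a first-order expansion, which is precisely why the MRC rate carries the exponent $1/2$.

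Where you diverge from the paper is the lower-bound discussion. The paper does \emph{not} attempt the Sherman--Morrison--Woodbury cancellation argument you sketch; it simply asserts in Lemma~\ref{lemA1caner}(a) that $a\ge C_0>0$ (identifying $C_0$ with $\Lambda_{\min}(\bTheta)$) and adds $(ad-b^2)>0$ as an explicit hypothesis, then proceeds. Your instinct that this is the delicate step is correct---under the spiked model \ref{A1} one has $\Lambda_{\min}(\bTheta)=1/\Lambda_{\max}(\bSigma)=\mathcal{O}(1/p)$, so the paper's justification via ``the minimal eigenvalue of $\bTheta$'' is at best heuristic. However, your proposed SMW fix does not actually deliver a constant-order lower bound either: take $K=1$, $\bB=\biota_p$, $\bSigma_f=1$, $\bSigma_\varepsilon=\bI_p$ (which satisfies \ref{A2} and \ref{B1}); then $\bSigma^{-1}=\bI_p-\biota_p\biota_p'/(1+p)$ and $a=\biota_p'\bSigma^{-1}\biota_p/p=1/(1+p)\to 0$. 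In other words, when $\biota_p$ (or $\boldm$) lies in the column space of $\bB$, the two SMW pieces do not ``cancel at leading order'' but rather cancel almost completely. The paper, following the portfolio literature it cites, effectively treats the constant-order lower bounds on $a$, $d$, and $ad-b^2$ as standing regularity conditions rather than consequences of the factor assumptions.
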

\cite{Caner2019} derived a similar result for $\Phi_{\text{GMV}}$ and $\Phi_{\text{MWC}}$ under the assumption that precision matrix of stock returns is sparse. Also, \cite{DING2020} derived the bounds for $\Phi_{\text{GMV}}$ under the factor structure assuming sparse covariance matrix of idiosyncratic components and gross exposure constraint on portfolio weights which limits negative positions.

The empirical application in Section 6 reveals that the portfolios constructed using MRC formulation have higher risk compared with GMV and MWC alternatives: using monthly and daily returns of the components of S\&P500 index, MRC portfolios exhibit higher out-of-sample risk and return compared to the alternative formulations. Furthermore, the empirical exercise demonstrates that the higher return of MRC portfolios outweighs higher risk for the monthly data which is evidenced by the increased out-of-sample Sharpe Ratio.

\section{Monte Carlo}
In order to validate our theoretical results, we perform several simulation studies which are divided into four parts. The first set of results computes the empirical convergence rates and compares them with the theoretical expressions derived in Theorems \ref{theor3}-\ref{theor5}. The second set of results compares the performance of the FGL with several alternative models for estimating covariance and precision matrix. To highlight the benefit of using the information about factor structure as opposed to standard graphical models, we include Graphical Lasso by \cite{GLASSO} (GL) that does not account for the factor structure. To explore the benefits of using FGL for error quantification in \eqref{equa18}, we consider several alternative estimators of covariance/precision matrix of the idiosyncratic component in \eqref{equa18}: (1) linear shrinkage estimator of covariance developed by \cite{Ledoit2004} further referred to as Factor LW or FLW; (2) nonlinear shrinkage estimator of covariance by \cite{ledoit2017nonlinear} (Factor NLW or FNLW); (3) POET (\cite{fan2013POET}); (4) constrained $\ell_{1}$-minimization for inverse matrix estimator, Clime (\cite{cai2011constrained}) (Factor Clime or FClime). Furthermore, we discovered that in certain setups the estimator of covariance produced by POET is not positive definite. In such cases we use the matrix symmetrization procedure as in \cite{fan2018elliptical} and then use eigenvalue cleaning as in \cite{Callot2017} and \cite{Hautsch2012}. This estimator is referred to as Projected POET; it coincides with POET when the covariance estimator produced by the latter is positive definite. The third set of results examines the performance of FGL and Robust FGL (described in Supplemental Appendix \ref{appendixA10}) when the dependent variable follows elliptical distribution. The fourth set of results explores the sensitivity of portfolios constructed using different covariance and precision estimators of interest when the pervasiveness assumption \ref{A2} is relaxed, that is, when the gap between the diverging and bounded eigenvalues decreases.  All exercises in this section use 100 Monte Carlo simulations.
We consider the following setup: let $p = T^{\delta}$, $\delta = 0.85$, $K = 2(\log T)^{0.5}$ and $T = \lbrack 2^h \rbrack, \ \text{for} \ h=7,7.5,8,\ldots,9.5$. A sparse precision matrix of the idiosyncratic components is constructed as follows: we first generate the adjacency matrix using a random graph structure. Define a $p \times p$ adjacency matrix $\bA_{\varepsilon}$ which is used to represent the structure of the graph:
\begin{align}
a_{\varepsilon,ij}=\begin{cases}
1, & \text{for} \ i\neq j\ \ \text{with probability $q$},\\
0, & \text{otherwise.}
\end{cases}
\end{align}
Let $a_{\varepsilon,ij}$ denote the $i,j$-th element of the adjacency matrix $\bA_{\varepsilon}$. We set $a_{\varepsilon,ij} = a_{\varepsilon,ji}=1, \ \text{for} \ i\neq j$ with probability $q$, and $0$ otherwise. Such structure results in $s_T = p(p-1)q/2$ edges in the graph. To control sparsity, we set $q = 1/(pT^{0.8})$, which makes $s_T = \mathcal{O}(T^{0.05})$. The adjacency matrix has all diagonal elements equal to zero. Hence, to obtain a positive definite precision matrix we apply the procedure described in \cite{HUGE}: using their notation, $\bTheta_{\varepsilon}=\bA_{\varepsilon}\cdot v+\bI(\abs{\tau}+0.1+u)$, where $u>0$ is a positive number added to the diagonal of the precision matrix to control the magnitude of partial correlations, $v$ controls the magnitude of partial correlations with $u$, and $\tau$ is the smallest eigenvalue of $\bA_{\varepsilon}\cdot v$.
In our simulations we use $u=0.1$ and $v=0.3$.

Factors are assumed to have the following structure:
\begin{align} 
&\bf_{t}=\phi_f\bf_{t-1}+\bzeta_t \label{e51}\\
&\underbrace{\br_t}_{p \times 1}=\boldm + \bB \underbrace{\bf_t}_{K\times 1}+\ \bvarepsilon_t, \quad t=1,\ldots,T \label{e52}
\end{align}
 where $m_i \sim \mathcal{N}(1,1)$ independently for each $i=1,\ldots,p$, $\bvarepsilon_{t}$ is a $p \times 1$ random vector of idiosyncratic errors following $\mathcal{N}(\bm{0},\bSigma_{\varepsilon})$, with sparse $\bTheta_{\varepsilon}$ that has a random graph structure described above, $\bf_{t}$ is a $K \times 1$ vector of factors, $\phi_f$ is an autoregressive parameter in the factors which is a scalar for simplicity, $\bB$ is a $p\times K$ matrix of factor loadings, $\bzeta_t$ is a $K \times 1$ random vector with each component independently following  $\mathcal{N}(0,\sigma^{2}_{\zeta})$. To create $\bB$ in \eqref{e52} we take the first $K$ rows of an upper triangular matrix from a Cholesky decomposition of the $p \times p$ Toeplitz matrix parameterized by $\rho$. For the first set of results we set $\rho = 0.2$, $\phi_f = 0.2$ and $\sigma^{2}_{\zeta} = 1$. The specification in \eqref{e52} leads to the low-rank plus sparse decomposition of the covariance matrix of stock returns $\br_t$.

As a first exercise, we compare the empirical and theoretical convergence rates of the precision matrix, portfolio weights and exposure.
A detailed description of the procedure and the simulation results is provided in Supplemental Appendix \ref{appendixB1}. We confirm that the empirical rates and theoretical rates from Theorems \ref{theor3}-\ref{theor5} are matched.

As a second exercise, we compare the performance of FGL with the alternative models listed at the beginning of this section. We consider two cases: \textbf{Case 1} is the same as for the first set of simulations ($p<T$): $p = T^{\delta}$, $\delta = 0.85$, $K = 2(\log T)^{0.5}$, $s_T = \mathcal{O}(T^{0.05})$. \textbf{Case 2} captures the cases when $p>T$ with $p = 3\cdot T^{\delta}$, $\delta = 0.85$, all else equal. The results for Case 2 are reported in \autoref{f4}-\ref{f5a}, and Case 1 is located in Supplemental Appendix \ref{appendixB1A}. FGL demonstrates superior performance for estimating precision matrix and portfolio weights in both cases, exhibiting consistency for both Case 1 and Case 2 settings. Also, FGL outperforms GL for estimating portfolio exposure and consistently estimates the latter, however, depending on the case under consideration some alternative models produce lower averaged error.


As a third exercise, we examine the performance of FGL and Robust FGL (described in Supplemental Appendix \ref{appendixA10}) when the dependent variable follows elliptical distributions. A detailed description of the data generating process (DGP) and simulation results are provided in Supplemental Appendix \ref{appendixB2}. We find that the performance of FGL for estimating the precision matrix is comparable with that of Robust FGL: this suggests that our FGL algorithm is robust to heavy-tailed distributions even without additional modifications.

As a final exercise, we explore the sensitivity of portfolios constructed using different covariance and precision estimators of interest when the pervasiveness assumption \ref{A2} is relaxed. A detailed description of the data generating process (DGP) and simulation results are provided in Supplemental Appendix \ref{appendixB3}. We verify that FGL exhibits robust performance when the gap between the diverging and bounded eigenvalues decreases. In contrast, POET and Projected POET are most sensitive to relaxing pervasiveness assumption which is consistent with our empirical findings and also with the simulation results by \cite{onatski2013POET}.
\section{Empirical Application}
In this section we examine the performance of the Factor Graphical Lasso for constructing a financial portfolio using daily data. The description and empirical results for monthly data can be found in Supplemental Appendix \ref{appendixC}. We first describe the data and the estimation methodology, then we list four metrics commonly reported in the finance literature, and, finally, we present the results.
\subsection{Data}
We use daily returns of the components of the S\&P500 index. The data on historical S\&P500 constituents and stock returns is fetched from CRSP and Compustat using SAS interface. For the daily data the full sample size has 5040 observations on 420 stocks from January 20, 2000 - January 31, 2020. We use January 20, 2000 - January 24, 2002 (504 obs) as the first training (estimation) period and January 25, 2002 - January 31, 2020 (4536 obs) as the out-of-sample (OOS) test period. Supplemental Appendix \ref{appendixC1} examines the performance of different competing methods for longer training periods. We roll the estimation window (training periods) over the test sample to rebalance the portfolios monthly. At the end of each month, prior to portfolio construction, we remove stocks with less than 2 years of historical stock return data. The performance of the competing models is compared with the Index -- the composite S\&P500 index listed as \textsuperscript{$\wedge$}GSPC.
  We take the risk-free rate and Fama/French factors from \href{https://mba.tuck.dartmouth.edu/pages/faculty/ken.french/data_library.html}{Kenneth R. French's data library.}
\subsection{Performance Measures}
Similarly to \cite{Caner2019}, we consider four metrics commonly reported in the finance literature: the Sharpe Ratio, the portfolio turnover, the average return and the risk of a portfolio (which is defined as the square root of the out-of-sample variance of the portfolio). We consider two scenarios: with and without transaction costs. Let $T$ denote the total number of observations, the training sample consists of $m=504$ observations, and the test sample is $n=T-m$.

When transaction costs are not taken into account, the out-of-sample average portfolio return, variance and SR are
\begin{align}\label{sharperatio}
\hat{\mu}_{\text{test}}=\frac{1}{n}\sum_{t=m}^{T-1}\widehat{\bw}'_{t}\br_{t+1}, \ \hat{\sigma}_{\text{test}}^{2}=\frac{1}{n-1}\sum_{t=m}^{T-1}(\widehat{\bw}'_{t}\br_{t+1}-\hat{\mu}_{\text{test}})^2,\
\text{SR} = \hat{\mu}_{\text{test}}/ \hat{\sigma}_{\text{test}}.
\end{align}

When transaction costs are considered, we follow \cite{ban2018machine}, \cite{Caner2019}, \cite{Demiguel2009optimal}, and \cite{Li2015sparse} to account for the transaction costs, further denoted as $\textup{tc}$. In line with the aforementioned papers, we set $\textup{tc}=10 \text{bps}$. Define the excess portfolio at time $t+1$ with transaction costs (tc) as 
\begin{align}
r_{t+1,\text{portfolio}} = \ \widehat{\bw}'_{t}\br_{t+1}-\textup{tc}(1+\widehat{\bw}'_{t}\br_{t+1})\sum_{j=1}^{p}\abs{\hat{w}_{t+1,j}-\hat{w}_{t,j}^{+}},
\end{align} 
where
\begin{align}\hat{w}_{t,j}^{+}=\hat{w}_{t,j}\frac{1+r_{t+1,j}+r^{f}_{t+1}}{1+r_{t+1,\text{portfolio}}+r^{f}_{t+1}},
\end{align} 
$r_{t+1,j}+r^{f}_{t+1}$ is sum of the excess return of the $j$-th asset and risk-free rate, and $r_{t+1,\text{portfolio}}+r^{f}_{t+1}$ is the sum of the excess return of the portfolio and risk-free rate. The out-of-sample average portfolio return, variance, Sharpe Ratio and turnover are defined accordingly:
\begin{align}
\hat{\mu}_{\text{test,tc}}=\frac{1}{n}\sum_{t=m}^{T-1}r_{t,\text{portfolio}},\ &\hat{\sigma}_{\text{test,tc}}^{2}=\frac{1}{n-1}\sum_{t=m}^{T-1}(r_{t,\text{portfolio}}-\hat{\mu}_{\text{test,tc}})^2,\
\text{SR}_{\text{tc}} =\hat{\mu}_{\text{test,tc}}/ \hat{\sigma}_{\text{test,tc}},\\
&\text{Turnover}=\frac{1}{n}\sum_{t=m}^{T-1}\sum_{j=1}^{p}\abs{\hat{w}_{t+1,j}-\hat{w}_{t,j}^{+}}.
\end{align}
\subsection{Description of Empirical Design}
In the empirical application for constructing financial portfolio we consider two scenarios, when the factors are unknown and estimated using the standard PCA (statistical factors), and when the factors are known. The number of statistical factors, $\hat{K}$, is estimated in accordance with Remark \ref{remark1} in Supplemental Appendix \ref{appendixC0}. For the scenario with known factors we include up to 5 Fama-French factors: FF1 includes the excess return on the market, FF3 includes FF1 plus size factor (Small Minus Big, SMB) and value factor (High Minus Low, HML), and FF5 includes FF3 plus profitability factor (Robust Minus Weak, RMW) and risk factor (Conservative Minus Agressive, CMA).\\
\indent We examine the performance of Factor Graphical Lasso for three alternative portfolio allocations \eqref{eq2}, \eqref{eq3} and \eqref{ee20} and compare it with the equal-weighted portfolio (EW), index portfolio (Index), FClime, FLW, FNLW (as in the simulations, we use alternative covariance and precision estimators that incorporate the factor structure through Sherman-Morrison inversion formula), POET, Projected POET, and factor models without sparsity restriction on the residual risk (FF1, FF3, and FF5).\\
\indent In \autoref{tab3} and Supplemental Appendix \ref{appendixC}, we report the daily and monthly portfolio performance for three alternative portfolio allocations in \eqref{eq2}, \eqref{eq3} and \eqref{ee20}. We consider a relatively risk-averse investor in a sense that they are willing to tolerate no more risk than that incurred by holding the S\&P500 Index: the target level of risk for the weight-constrained and risk-constrained Markowitz portfolio (MWC and MRC) is set at $\sigma=0.013$ which is the standard deviation of the daily excess returns of the S\&P500 index in the first training set. A return target $\mu=0.0378\%$ which is equivalent to $10\%$ yearly return when compounded. Transaction costs for each individual stock are set to be a constant $0.1\%$. Supplemental Appendix \ref{appendixC1} provides the results for less risk-averse investors that have higher target levels of risk and return for both monthly and daily data. 

To compare the relative performance of investment strategies induced by different precision matrix estimators, we use a stepwise multiple testing procedure developed in \cite{RomanoWolf2005} and further covered in \cite{RomanoWolf2016}. Let $\text{SR}^{P} = \mu_{\text{test}}/\sigma_{\text{test}}$ be the population counterpart of the sample Sharpe Ratio defined in \eqref{sharperatio}. We compare each strategy $s$, $1\leq s \leq S$, with the benchmark (Index) strategy, indexed as $S+1$. Define $\chi_{s} \defeq \text{SR}_{s}^{P} - \text{SR}_{S+1}^{P}$. The test statistic is $\hat{\chi}_{s} \defeq \text{SR}_{s} - \text{SR}_{S+1}$. For a given strategy $s$, we consider the individual testing problem $\mathbb{H}_0: \chi_{s} \leq 0 \quad \text{vs.} \quad \mathbb{H}_A: \chi_{s} > 0$.
Using the stepwise multiple testing procedure we aim at identifying as many strategies as possible for which $\chi_{s} > 0$: we relabel the strategies according to the size of the individual test statistics, from largest to smallest, and make the individual decisions in a stepdown manner starting with the null hypothesis that corresponds to the largest test statistic. P-values for competing methods are reported in the tables with empirical results. We note that by construction of the stepwise multiple testing procedure, the resulting p-values are relatively conservative, consistent with Remark 3.1 of \cite{RomanoWolf2005}.
\subsection{Empirical Results}
This section explores the performance of the Factor Graphical Lasso for the financial portfolio using daily data.

Let us summarize the results for daily data in \autoref{tab3}: \textbf{(1)} MRC portfolios produce higher return and higher risk, compared to MWC and GMV. However, the out-of-sample Sharpe Ratio for MRC is lower than that of MWC and GMV, which implies that the higher risk of MRC portfolios is not fully compensated by the higher return. \textbf{(2)} FGL outperforms all the competitors, including EW and Index. Specifically, our method has the lowest risk and turnover (compared to FClime, FLW, FNLW and POET), and the highest out-of-sample Sharpe Ratio compared with all alternative methods. \textbf{(3)} The implementation of POET for MRC resulted in the erratic behavior of this method for estimating portfolio weights; many entries in the weight matrix had \enquote{NaN} entries. We elaborate on the reasons behind such performance below.
\textbf{(4)} Using the observable Fama-French factors in the FGL, in general, produces portfolios with higher return and higher out-of-sample Sharpe Ratio compared to the portfolios based on statistical factors. Interestingly, this increase in return is not followed by higher risk. \textbf{(5)} FGL strongly dominates all factor models that do not impose sparsity on the precision of the idiosyncratic component.
The results for monthly data are provided in Supplemental Appendix \ref{appendixC}: all the conclusions are similar to the ones for daily data.

We now examine possible reasons behind the observed puzzling behavior of POET and Projected POET. The erratic behavior of the former is caused by the fact that POET estimator of covariance matrix was not positive-definite which produced poor estimates of GMV and MWC weights and made it infeasible to compute MRC weights (recall, by construction MRC weight in \eqref{ee20} requires taking a square root). To explore deteriorated behavior of Projected POET, let us highlight two findings outlined by the existing closely related literature. First, \cite{bailey2020factorstrength} examined ``pervasiveness" degree, or strength, of 146 factors commonly used in the empirical finance literature, and found that only the market factor was strong, while all other factors were semi-strong. This indicates that the factor pervasiveness assumption \ref{A2} might be unrealistic in practice. Second, as pointed out by \cite{onatski2013POET}, ``the quality of POET dramatically deteriorates as the systematic-idiosyncratic eigenvalue gap becomes small". Therefore, being guided by the two aforementioned findings, we attribute deteriorated performance of POET and Projected POET to the decreased gap between the diverging and bounded eigenvalues documented in the past studies on financial returns. High sensitivity of these two covariance estimators in such settings was further supported by our additional simulation study (Supplemental Appendix \ref{appendixB3}) examining the robustness of portfolios constructed using different covariance and precision estimators.

\autoref{tab5} compares the performance of MRC portfolios for the daily data for different time periods of interesting episodes in terms of the cumulative excess return (CER), risk, and SR. To demonstrate the performance of all methods during the periods of recession and expansion, we chose four periods and recorded CER for the whole year in each period of interest. Two years, 2002 and 2008 correspond to the recession periods, which is why we we refer to them as \enquote{Downturns}. We note that the references to Argentine Great Depression and The Financial Crisis do not intend to limit these economic downturns to only one year. They merely provide the context for the recessions. The other two years, 2017 and 2019, correspond to the years which were relatively favorable to the stock market (\enquote{Booms}). Overall, it is easier to beat the Index in Downturns than in Booms. In most cases FGL shows superior performance in terms of CER and SR for Downturn \#1, Boom \#1 and Boom \#2. For Downturn \#2, even though FGL has the highest CER, its SR is smaller than SR of some other competing methods. One explanation would be the following: as evidenced by high risk of the competing methods during Boom \#2, there were high positive and negative returns during the period, with high returns driving up the average used in computing the SR. However, if one were to use the alternative strategies ignoring CER statistics, then the return on the money deposited at the beginning of 2008 would either be negative (e.g. FClime, Projected POET) or smaller than the CER of FGL-based strategies. This exercise demonstrates that SR statistics alone, especially during recession periods characterized by higher volatility, could be misleading. Another interesting finding from such exercise is that FGL exhibits smaller risk compared to most competing methods even during the periods of recession, which holds for all portfolio formulations. This allows FGL to minimize cumulative losses during economic downturns. Subperiod analyses for MWC and GMV portfolio formulations is presented in Supplemental Appendix \ref{appendixC3}.
%
%
\section{Conclusion}
 In this paper, we propose a new conditional precision matrix estimator for the excess returns under the approximate factor model with unobserved factors that combines the benefits of graphical models and factor structure. We established consistency of FGL in the spectral and $\ell_{1}$ matrix norms. In addition, we proved consistency of the portfolio weights and risk exposure for three formulations of the optimal portfolio allocation without assuming sparsity on the covariance or precision matrix of stock returns. All theoretical results established in this paper hold for a wide range of distributions: sub-Gaussian family (including Gaussian) and elliptical family. Our simulations demonstrate that FGL is robust to very heavy-tailed distributions, which makes our method suitable for the financial applications. Furthermore, we demonstrate that in contrast to POET and Projected POET, the success of the proposed method does not heavily depend on the factor pervasiveness assumption: FGL is robust to the scenarios when the gap between the diverging and bounded eigenvalues decreases.

The empirical exercise uses the constituents of the S\&P500 index and demonstrates superior performance of FGL compared to several alternative models for estimating precision (FClime) and covariance (FLW, FNLW, POET) matrices, Equal-Weighted (EW) portfolio and Index portfolio in terms of the OOS SR and risk. This result is robust to monthly and daily data.
We examine three portfolio formulations and discover that the only portfolios that produce positive CER during recessions are the ones that relax the constraint requiring portfolio weights sum up to one.



\cleardoublepage

\phantomsection

\addcontentsline{toc}{section}{References}
\setlength{\baselineskip}{14pt}
\bibliographystyle{apalike}
\bibliography{RevisedFGL_Arxiv}
\cleardoublepage
\begin{figure}[!htbp]
	\phantomsection
	\addcontentsline{toc}{section}{Figures}
	\centering
	\includegraphics[width=0.98\textwidth]{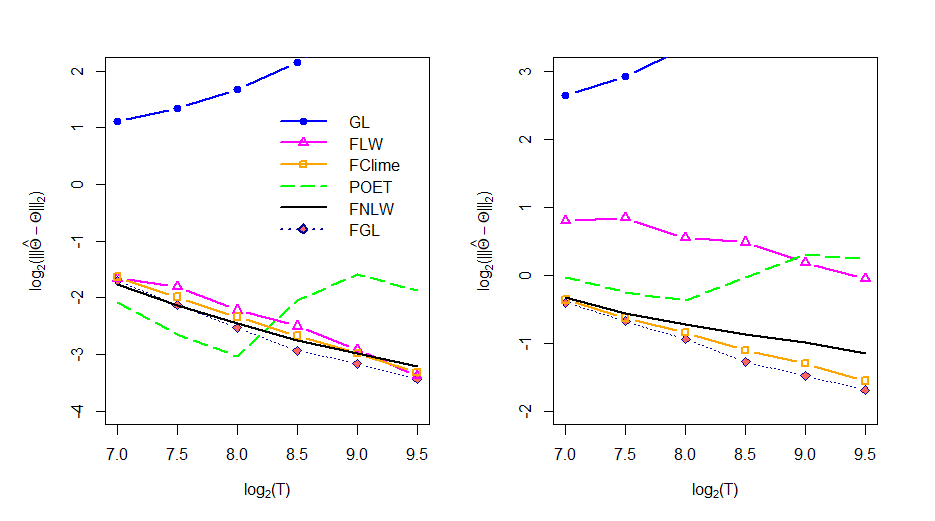}
	\bigskip
	\caption{\textbf{Averaged errors of the estimators of $\bTheta$ for Case 2 on logarithmic scale: $p = 3\cdot T^{0.85}$, $K = 2(\log T)^{0.5}$, $s_T = \mathcal{O}(T^{0.05})$.}}
	\label{f4}
\end{figure}
\begin{figure}[!htbp]
	\centering
	\includegraphics[width=0.98\textwidth]{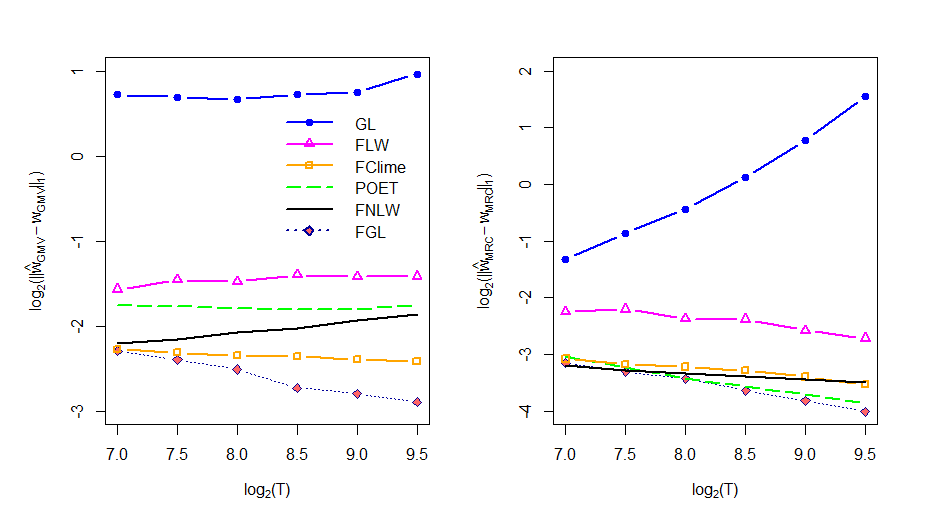}
	\bigskip
	\caption{\textbf{Averaged errors of the estimators of $\bw_{\text{GMV}}$ (left) and $\bw_{\text{MRC}}$ (right) for Case 2 on logarithmic scale: $p = 3\cdot T^{0.85}$, $K = 2(\log T)^{0.5}$, $s_T = \mathcal{O}(T^{0.05})$.}}
	\label{f5}
\end{figure}
\newpage
\begin{figure}[!htbp]
	\centering
	\includegraphics[width=0.98\textwidth]{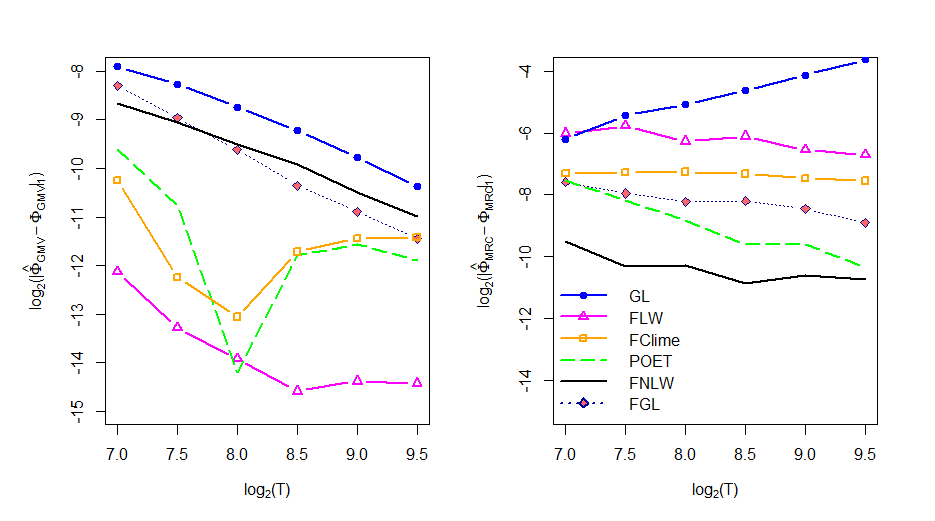}
	\bigskip
	\caption{\textbf{Averaged errors of the estimators of $\Phi_{\text{GMV}}$ (left) and $\Phi_{\text{MRC}}$ (right) for Case 2 on logarithmic scale: $p = 3\cdot T^{0.85}$, $K = 2(\log T)^{0.5}$, $s_T = \mathcal{O}(T^{0.05})$.}}
	\label{f5a}
\end{figure}
	\begin{landscape}
	\begin{table}[]
		\addcontentsline{toc}{section}{Tables}
		\centering
		\caption{\small {Daily portfolio returns, risk, SR and turnover. In the upper part corresponding to the results w/o transactions costs, p-values are in parentheses. In the lower part corresponding to the results with transaction costs, $^{***}$ indicates p-value $<$ 0.01, $^{**}$ indicates p-value  $<$ 0.05, and $^{*}$ indicates p-value $<$ 0.10. In-sample: January 20, 2000 - January 24, 2002  (504 obs), Out-of-sample: January 17, 2002 - January 31, 2020 (4536 obs).}}
		\label{tab3}
		\resizebox{0.9\textwidth}{!}{%
\begin{tabular}{ccccccccccccc} 
	\toprule
	& \multicolumn{4}{c}{Markowitz Risk-Constrained} & \multicolumn{4}{c}{Markowitz Weight-Constrained} & \multicolumn{4}{c}{Global Minimum-Variance} \\ 
	\midrule
	& \textbf{Return} & \textbf{Risk} & \textbf{SR} & \textbf{Turnover} & \textbf{Return} & \textbf{Risk} & \textbf{SR} & \textbf{Turnover} & \textbf{Return} & \textbf{Risk} & \textbf{SR} & \textbf{Turnover} \\ 
	\midrule
	\textbf{Without TC} &  &  &  &  &  &  &  &  &  &  &  &  \\
	EW & 2.33E-04 & 1.90E-02 & 0.0123 & - & 2.33E-04 & 1.90E-02 & 0.0123 & - & 2.33E-04 & 1.90E-02 & 0.0123 & - \\
	Index & 1.86E-04 & 1.17E-02 & 0.0159 & - & 1.86E-04 & 1.17E-02 & 0.0159 & - & 1.86E-04 & 1.17E-02 & 0.0159 & - \\
	FGL & 8.12E-04 & 2.66E-02 & \begin{tabular}[c]{@{}c@{}}0.0305\\(0.0579)\end{tabular} & - & 2.95E-04 & 8.21E-03 & \begin{tabular}[c]{@{}c@{}}0.0360\\(0.024)\end{tabular} & - & 2.94E-04 & 7.51E-03 & \begin{tabular}[c]{@{}c@{}}0.0392\\(0.0279)\end{tabular} & - \\
	FClime & 2.15E-03 & 8.46E-02 & \begin{tabular}[c]{@{}c@{}}0.0254\\(0.0758)\end{tabular} & - & 2.02E-04 & 9.85E-03 & \begin{tabular}[c]{@{}c@{}}0.0205\\(0.0299)\end{tabular} & - & 2.73E-04 & 1.07E-02 & \begin{tabular}[c]{@{}c@{}}0.0255\\(0.0419)\end{tabular} & - \\
	FLW & 4.34E-04 & 2.65E-02 & \begin{tabular}[c]{@{}c@{}}0.0164\\(0.1782)\end{tabular} & - & 3.12E-04 & 9.96E-03 & \begin{tabular}[c]{@{}c@{}}0.0313\\(0.024)\end{tabular} & - & 3.10E-04 & 9.38E-03 & \begin{tabular}[c]{@{}c@{}}0.0330\\(0.0279)\end{tabular} & - \\
	FNLW & 4.91E-04 & 6.66E-02 & \begin{tabular}[c]{@{}c@{}}0.0074\\(0.5515 )\end{tabular} & - & 2.98E-04 & 1.24E-02 & \begin{tabular}[c]{@{}c@{}}0.0241\\(0.0419)\end{tabular} & - & 3.06E-04 & 1.32E-02 & \begin{tabular}[c]{@{}c@{}}0.0231\\(0.0419)\end{tabular} & - \\
	POET & NaN & NaN & NaN & - & -7.06E-04 & 2.74E-01 & \begin{tabular}[c]{@{}c@{}}-0.0026\\(0.9137)\end{tabular} & - & 1.07E-03 & 2.71E-01 & \begin{tabular}[c]{@{}c@{}}0.0039\\(0.7912)\end{tabular} & - \\
	Projected POET & 1.20E-03 & 1.71E-01 & \begin{tabular}[c]{@{}c@{}}0.0070\\(0.5515)\end{tabular} & - & -8.06E-05 & 1.61E-02 & \begin{tabular}[c]{@{}c@{}}-0.0050\\(0.9337)\end{tabular} & - & -7.57E-05 & 1.93E-02 & \begin{tabular}[c]{@{}c@{}}-0.0039\\(0.9482)\end{tabular} & - \\
	FGL (FF1) & 7.96E-04 & 2.80E-02 & \begin{tabular}[c]{@{}c@{}}0.0285\\(0.0758)\end{tabular} & - & 3.73E-04 & 8.73E-03 & \begin{tabular}[c]{@{}c@{}}0.0427\\(0.024)\end{tabular} & - & 3.52E-04 & 8.62E-03 & \begin{tabular}[c]{@{}c@{}}0.0408\\(0.0259)\end{tabular} & - \\
	FGL (FF3) & 6.51E-04 & 2.74E-02 & \begin{tabular}[c]{@{}c@{}}0.0238\\(0.0758)\end{tabular} & - & 3.52E-04 & 8.96E-03 & \begin{tabular}[c]{@{}c@{}}0.0393\\(0.024)\end{tabular} & - & 3.39E-04 & 8.94E-03 & \begin{tabular}[c]{@{}c@{}}0.0379\\(0.022)\end{tabular} & - \\
	FGL (FF5) & 5.87E-04 & 2.70E-02 & \begin{tabular}[c]{@{}c@{}}0.0217\\(0.0758)\end{tabular} & - & 3.47E-04 & 9.38E-03 & \begin{tabular}[c]{@{}c@{}}0.0370\\(0.024)\end{tabular} & - & 3.36E-04 & 9.29E-03 & \begin{tabular}[c]{@{}c@{}}0.0362\\(0.022)\end{tabular} & - \\
	FF1 & 7.38E-04 & 1.11E-01 & \begin{tabular}[c]{@{}c@{}}0.0067\\(0.5821)\end{tabular} & - & 3.30E-05 & 1.62E-02 & \begin{tabular}[c]{@{}c@{}}0.0020\\(0.7139)\end{tabular} & - & 2.49E-05 & 1.61E-02 & \begin{tabular}[c]{@{}c@{}}0.0015\\(0.8430)\end{tabular} & - \\
	FF3 & 7.52E-04 & 1.11E-01 & \begin{tabular}[c]{@{}c@{}}0.0068\\(0.5821)\end{tabular} & - & 2.68E-05 & 1.62E-02 & \begin{tabular}[c]{@{}c@{}}0.0017\\(0.7139)\end{tabular} & - & 2.06E-05 & 1.61E-02 & \begin{tabular}[c]{@{}c@{}}0.0013\\(0.8430)\end{tabular} & - \\
	FF5 & 7.59E-04 & 1.11E-01 & \begin{tabular}[c]{@{}c@{}}0.0069\\(0.5821)\end{tabular} & - & 2.01E-05 & 1.62E-02 & \begin{tabular}[c]{@{}c@{}}0.0012\\(0.7139)\end{tabular} & - & 1.38E-05 & 1.61E-02 & \begin{tabular}[c]{@{}c@{}}0.0009\\(0.8430)\end{tabular} & - \\ 
	\midrule
	\textbf{With TC} &  &  &  &  &  &  &  &  &  &  &  &  \\
	EW & 2.01E-04 & 1.90E-02 & 0.0106 & 0.0292 & 2.01E-04 & 1.90E-02 & 0.0106 & 0.0292 & 2.01E-04 & 1.90E-02 & 0.0106 & 0.0292 \\
	FGL & 4.47E-04 & 2.66E-02 & 0.0168 & 0.3655 & 2.30E-04 & 8.22E-03 & 0.0280* & 0.0666 & 2.32E-04 & 7.52E-03 & 0.0309* & 0.0633 \\
	FClime & 1.18E-03 & 8.48E-02 & 0.0139 & 1.0005 & 1.67E-04 & 9.86E-03 & 0.0170 & 0.0369 & 2.46E-04 & 1.07E-02 & 0.0230* & 0.0290 \\
	FLW & -5.54E-05 & 2.65E-02 & -0.0021 & 0.4874 & 1.92E-04 & 9.98E-03 & 0.0193 & 0.1207 & 1.92E-04 & 9.39E-03 & 0.0204* & 0.1194 \\
	FNLW & -2.39E-03 & 7.03E-02 & -0.0340 & 3.6370 & 5.50E-05 & 1.25E-02 & 0.0044 & 0.2441 & 6.08E-05 & 1.33E-02 & 0.0046 & 0.2457 \\
	POET & NaN & NaN & NaN & NaN & -2.28E-02 & 5.55E-01 & -0.0411 & 113.3848 & -2.81E-02 & 4.21E-01 & -0.0666 & 132.8215 \\
	Projected POET & -1.59E-02 & 3.64E-01 & -0.0437 & 35.9692 & -1.03E-03 & 1.68E-02 & -0.0616 & 0.9544 & -1.37E-03 & 2.06E-02 & -0.0666 & 1.2946 \\
	FGL (FF1) & 3.86E-04 & 2.80E-02 & 0.0138 & 0.4068 & 2.82E-04 & 8.74E-03 & 0.0323** & 0.0903 & 2.63E-04 & 8.63E-03 & 0.0305* & 0.0887 \\
	FGL (FF3) & 2.47E-04 & 2.74E-02 & 0.0090 & 0.4043 & 2.60E-04 & 8.98E-03 & 0.0290** & 0.0928 & 2.49E-04 & 8.96E-03 & 0.0278* & 0.0911 \\
	FGL (FF5) & 1.83E-04 & 2.71E-02 & 0.0068 & 0.4032 & 2.53E-04 & 9.40E-03 & 0.0269* & 0.0952 & 2.43E-04 & 9.30E-03 & 0.0262* & 0.0937 \\
	FF1 & -6.69E-03 & 1.28E-01 & -0.0639 & 8.5721 & -5.27E-04 & 1.65E-02 & -0.0319 & 0.5704 & -5.30E-04 & 1.64E-02 & -0.0323 & 0.5641 \\
	FF3 & -6.65E-03 & 1.28E-01 & -0.0635 & 8.5411 & -5.33E-04 & 1.65E-02 & -0.0323 & 0.5701 & -5.34E-04 & 1.64E-02 & -0.0326 & 0.5638 \\
	FF5 & -6.63E-03 & 1.28E-01 & -0.0634 & 8.5262 & -5.40E-04 & 1.65E-02 & -0.0327 & 0.5703 & -5.41E-04 & 1.64E-02 & -0.0330 & 0.5646 \\
	\bottomrule
\end{tabular}%
		}
	\end{table}
\end{landscape}
\clearpage
 \newpage
\begin{sidewaystable}[ph!]
	\centering
	\caption{Cumulative excess return (CER) and risk of MRC portfolios using daily data. Targeted risk is set at $\sigma=0.013$, daily targeted return is $0.0378\%$. P-values are in parentheses. In-sample: January 20, 2000 - January 24, 2002  (504 obs), Out-of-sample: January 17, 2002 - January 31, 2020 (4536 obs).}
	\label{tab5}
	\resizebox{\textwidth}{!}{%
\begin{tabular}{clllcccccccccc} 
	\toprule
	& EW & Index & FGL & FClime & FLW & FNLW & ProjPOET & FGL(FF1) & FGL(FF3) & FGL(FF5) & FF1 & FF3 & FF5 \\ 
	\midrule
	\multicolumn{7}{l}{\textbf{Downturn \#1: Argentine Great Depression (2002)}} & \multicolumn{1}{l}{} & \multicolumn{1}{l}{} & \multicolumn{1}{l}{} & \multicolumn{1}{l}{} & \multicolumn{1}{l}{} & \multicolumn{1}{l}{} & \multicolumn{1}{l}{} \\
	CER & -0.1633 & -0.2418 & 0.2909 & -0.0079 & 0.0308 & 0.0728 & -0.6178 & 0.3375 & 0.3423 & 0.3401 & -0.0860 & -0.0860 & -0.0860 \\
	Risk & 0.0160 & 0.0168 & 0.0206 & 0.0348 & 0.0231 & 0.0213 & 0.0545 & 0.0211 & 0.0211 & 0.0212 & 0.0495 & 0.0495 & 0.0495 \\
	SR & -0.0393 & -0.0615 & \begin{tabular}[c]{@{}l@{}}0.0629\\(0.0619)\end{tabular} & \begin{tabular}[c]{@{}c@{}}0.0164\\(0.0759)\end{tabular} & \begin{tabular}[c]{@{}c@{}}0.0171\\(0.0759)\end{tabular} & \begin{tabular}[c]{@{}c@{}}0.0246\\(0.0759)\end{tabular} & \begin{tabular}[c]{@{}c@{}}-0.0467\\(0.4852)\end{tabular} & \begin{tabular}[c]{@{}c@{}}0.0689\\(0.0619)\end{tabular} & \begin{tabular}[c]{@{}c@{}}0.0696\\(0.0619)\end{tabular} & \begin{tabular}[c]{@{}c@{}}0.0692\\(0.0619)\end{tabular} & \begin{tabular}[c]{@{}c@{}}0.0169\\(0.0759)\end{tabular} & \begin{tabular}[c]{@{}c@{}}0.0169\\(0.0759)\end{tabular} & \begin{tabular}[c]{@{}c@{}}0.0169\\(0.0759)\end{tabular} \\ 
	\midrule
	\multicolumn{7}{l}{\textbf{Downturn \#2: Financial Crisis (2008)}} & \multicolumn{1}{l}{} & \multicolumn{1}{l}{} & \multicolumn{1}{l}{} & \multicolumn{1}{l}{} & \multicolumn{1}{l}{} & \multicolumn{1}{l}{} & \multicolumn{1}{l}{} \\
	CER & -0.5622 & -0.4746 & 0.2938 & -0.8912 & 0.2885 & 0.2075 & -0.9999 & 0.2665 & 0.2650 & 0.2560 & 0.0404 & 0.0404 & 0.0404 \\
	Risk & 0.0310 & 0.0258 & 0.0282 & 0.1484 & 0.0315 & 0.0392 & 0.1963 & 0.0320 & 0.0319 & 0.0319 & 0.0986 & 0.0986 & 0.0986 \\
	SR & -0.0857 & -0.0857 & \begin{tabular}[c]{@{}l@{}}0.0315\\(0.0889)\end{tabular} & \begin{tabular}[c]{@{}c@{}}0.1045\\(0.1079)\end{tabular} & \begin{tabular}[c]{@{}c@{}}0.0282\\(0.1079)\end{tabular} & \begin{tabular}[c]{@{}c@{}}0.0392\\(0.1079)\end{tabular} & \begin{tabular}[c]{@{}c@{}}0.1963\\(0.1079)\end{tabular} & \begin{tabular}[c]{@{}c@{}}0.0320\\(0.0889)\end{tabular} & \begin{tabular}[c]{@{}c@{}}0.0319\\(0.0889)\end{tabular} & \begin{tabular}[c]{@{}c@{}}0.0319\\(0.0889)\end{tabular} & \begin{tabular}[c]{@{}c@{}}0.0006\\(0.1079)\end{tabular} & \begin{tabular}[c]{@{}c@{}}0.0006\\(0.1079)\end{tabular} & \begin{tabular}[c]{@{}c@{}}0.0006\\(0.1079)\end{tabular} \\ 
	\midrule
	\multicolumn{7}{l}{\textbf{Boom \#1 (2017)}} & \multicolumn{1}{l}{} & \multicolumn{1}{l}{} & \multicolumn{1}{l}{} & \multicolumn{1}{l}{} & \multicolumn{1}{l}{} & \multicolumn{1}{l}{} & \multicolumn{1}{l}{} \\
	CER & 0.0627 & 0.1752 & 0.7267 & 0.5331 & 0.3164 & 0.5796 & -0.7599 & 0.6568 & 0.6607 & 0.6486 & -0.5070 & -0.5294 & -0.4755 \\
	Risk & 0.0218 & 0.0042 & 0.0142 & 0.0383 & 0.0118 & 0.0497 & 0.1197 & 0.0135 & 0.0134 & 0.0132 & 0.0720 & 0.0721 & 0.0710 \\
	SR & 0.0220 & 0.1536 & \begin{tabular}[c]{@{}l@{}}0.1606\\(0.5544)\end{tabular} & \begin{tabular}[c]{@{}c@{}}0.1231\\(0.6264)\end{tabular} & \begin{tabular}[c]{@{}c@{}}0.0987\\(0.6264)\end{tabular} & \begin{tabular}[c]{@{}c@{}}0.1008\\(0.5465)\end{tabular} & \begin{tabular}[c]{@{}c@{}}0.0151\\(0.9815)\end{tabular} & \begin{tabular}[c]{@{}c@{}}0.1563\\(0.5455)\end{tabular} & \begin{tabular}[c]{@{}c@{}}0.1581\\(0.5455)\end{tabular} & \begin{tabular}[c]{@{}c@{}}0.1575\\(0.5455)\end{tabular} & \begin{tabular}[c]{@{}c@{}}-0.0022\\(0.9985)\end{tabular} & \begin{tabular}[c]{@{}c@{}}-0.0046\\(0.9985)\end{tabular} & \begin{tabular}[c]{@{}c@{}}0.0002\\(0.9815)\end{tabular} \\ 
	\midrule
	\multicolumn{7}{l}{\textbf{Boom \#2 (2019)}} & \multicolumn{1}{l}{} & \multicolumn{1}{l}{} & \multicolumn{1}{l}{} & \multicolumn{1}{l}{} & \multicolumn{1}{l}{} & \multicolumn{1}{l}{} & \multicolumn{1}{l}{} \\
	CER & 0.1642 & 0.2934 & 0.6872 & 0.2346 & 0.5520 & 0.9315 & 1.8592 & 0.5166 & 0.5168 & 0.5037 & 0.2690 & 0.2682 & 0.2730 \\
	Risk & 0.0185 & 0.0086 & 0.0263 & 0.0557 & 0.0287 & 0.0355 & 0.1177 & 0.0247 & 0.0248 & 0.0248 & 0.1094 & 0.1094 & 0.1094 \\
	SR & 0.0418 & 0.1228 & \begin{tabular}[c]{@{}l@{}}0.0919\\(0.1738)\end{tabular} & \begin{tabular}[c]{@{}c@{}}0.0436\\(0.2298)\end{tabular} & \begin{tabular}[c]{@{}c@{}}0.0753\\(0.2298)\end{tabular} & \begin{tabular}[c]{@{}c@{}}0.0905\\(0.2298)\end{tabular} & \begin{tabular}[c]{@{}c@{}}0.0898\\(0.2298)\end{tabular} & \begin{tabular}[c]{@{}c@{}}0.0793\\(0.1728)\end{tabular} & \begin{tabular}[c]{@{}c@{}}0.0792\\(0.1728)\end{tabular} & \begin{tabular}[c]{@{}c@{}}0.0779\\(0.1728)\end{tabular} & \begin{tabular}[c]{@{}c@{}}0.0798\\(0.2298)\end{tabular} & \begin{tabular}[c]{@{}c@{}}0.0798\\(0.2298)\end{tabular} & \begin{tabular}[c]{@{}c@{}}0.0799\\(0.2298)\end{tabular} \\
	\bottomrule
\end{tabular}%
	}
\end{sidewaystable}
\clearpage
\renewcommand{\appendixpagename}{Supplemental Appendix}
\renewcommand\appendixtocname{Supplemental Appendix}
\begin{appendices} 
	\renewcommand{\thesection}{\Alph{section}}
	\renewcommand{\thesubsection}{\Alph{section}.\arabic{subsection}}
	\renewcommand{\theequation}{\Alph{section}.\arabic{equation}}
	\captionsetup{%
		figurewithin=section,
		tablewithin=section
	}
	\begin{spacing}{2}
This Online Supplemental Appendix is structured as follows: Appendix \ref{appendixAA} summarizes Graphical Lasso algorithm, Appendix \ref{appendixA} contains proofs of the theorems, accompanying lemmas, and an extension of the theorems to elliptical distributions. Appendix \ref{appendixB} provides additional simulations for Section 5, additional empirical results for Section 6 are located in Appendix \ref{appendixC}.

	\section{Graphical Lasso Algorithm} \label{appendixAA}
To solve \eqref{e7.6} we use the procedure based on the weighted Graphical Lasso which was first proposed in \cite{GLASSO} and further studied in \cite{DPGLASSO} and \cite{Sara2018} among others. Define the following partitions of $\bW_{\varepsilon}$, $\widehat{\bSigma}_{\varepsilon}$ and $\bTheta_{\varepsilon}$:
\begin{equation} \label{eq43}
	\bW_{\varepsilon}=\begin{pmatrix}
		\underbrace{\bW_{\varepsilon,11}}_{(p-1)\times(p-1)}&\underbrace{\bw_{\varepsilon,12}}_{(p-1)\times 1}\\\bw_{\varepsilon,12}'&w_{\varepsilon,22}
	\end{pmatrix}, \widehat{\bSigma}_{\varepsilon}=\begin{pmatrix}
		\underbrace{\widehat{\bSigma}_{\varepsilon,11}}_{(p-1)\times(p-1)}&\underbrace{\widehat{\bsigma}_{\varepsilon,12}}_{(p-1)\times 1}\\\widehat{\bsigma}_{\varepsilon,12}'&\widehat{\sigma}_{\varepsilon,22}
	\end{pmatrix}, \bTheta=\begin{pmatrix}
		\underbrace{\bTheta_{\varepsilon,11}}_{(p-1)\times(p-1)}&\underbrace{\btheta_{\varepsilon,12}}_{(p-1)\times 1}\\\btheta_{\varepsilon,12}'&\theta_{\varepsilon,22}
	\end{pmatrix}.
\end{equation}
Let $\bbeta\defeq -\btheta_{\varepsilon,12}/\theta_{\varepsilon,22}$. The idea of GLASSO is to set $\bW_{\varepsilon}= \widehat{\bSigma}_{\varepsilon}+\lambda\bI$ in \eqref{e7.6} and combine the gradient of \eqref{e7.6} with the formula for partitioned inverses to obtain the following $\ell_1$-regularized quadratic program
\begin{equation}\label{eq50}
	\widehat{\bbeta}=\argmin_{\bbeta\in \mathbb{R}^{p-1}}\Bigl\{ \frac{1}{2}\bbeta'\bW_{\varepsilon,11}\bbeta-\bbeta'\widehat{\bsigma}_{\varepsilon,12}+\lambda\norm{\bbeta}_1\Bigr\}.
\end{equation} 
As shown by \cite{GLASSO}, \eqref{eq50} can be viewed as a LASSO regression, where the LASSO estimates are functions of the inner products of $\bW_{\varepsilon,11}$ and $\widehat{\sigma}_{\varepsilon,12}$. Hence, \eqref{e7.6} is equivalent to $p$ coupled LASSO problems. Once we obtain $\widehat{\bbeta}$, we can estimate the entries $\bTheta_{\varepsilon}$ using the formula for partitioned inverses. The procedure to obtain sparse $\bTheta_{\varepsilon}$ is summarized in Algorithm \ref{alg1a}.
\begin{spacing}{1.25}
	\setcounter{algorithm}{0} 
	\begin{algorithm}[H]
		\caption{Graphical Lasso \cite{GLASSO}}
		\label{alg1a}
		\begin{algorithmic}[1]
			\STATE 	Initialize $\bW_{\varepsilon}= \widehat{\bSigma}_{\varepsilon}+\lambda\bI$. The diagonal of $\bW_{\varepsilon}$ remains the same in what follows.
			\STATE Repeat for $j=1,\ldots,p,1,\ldots,p,\ldots$ until convergence:
			\begin{itemize}
				\item Partition $\bW_{\varepsilon}$ into part 1: all but the $j$-th row and column, and part 2: the $j$-th row and column.
				\item  Solve the score equations using the cyclical coordinate descent: $	\bW_{\varepsilon,11}\bbeta-\widehat{\bsigma}_{\varepsilon,12}+\lambda\cdot\text{Sign}(\bbeta)=\mathbf{0}$.
				This gives a $(p-1) \times 1$ vector solution $\widehat{\bbeta}.$
				\item Update $\widehat{\bw}_{\varepsilon,12}=\bW_{\varepsilon,11}\widehat{\bbeta}$.
			\end{itemize}
			\STATE In the final cycle (for $i=1,\ldots,p$) solve for $\frac{1}{\widehat{\theta}_{22}}=w_{\varepsilon,22}-\widehat{\bbeta}'\widehat{\bw}_{\varepsilon,12}$ and $	\widehat{\btheta}_{12}=-\widehat{\theta}_{22}\widehat{\bbeta}$.
		\end{algorithmic}
	\end{algorithm}
\end{spacing}
As was shown in \cite{GLASSO} and the follow-up paper by \cite{DPGLASSO}, the estimator produced by Graphical Lasso is guaranteed to be positive definite. 
\section{Proofs of the Theorems} \label{appendixA}
	\subsection{Lemmas for Theorem 1}
	\begin{lem}\label{lemma4fan}
		Under the assumptions of Theorem 1,
		\begin{enumerate}[label=(\alph*)]
			\item $\max_{i,j\leq K}\abs{(1/T)\sum_{t=1}^{T}f_{it}f_{jt}-\E{f_{it}f_{jt}}}=\mathcal{O}_P(\sqrt{1/T})$,
			\item $\max_{i,j\leq p}\abs{(1/T)\sum_{t=1}^{T}\varepsilon_{it}\varepsilon_{jt}-\E{\varepsilon_{it}\varepsilon_{jt}}}=\mathcal{O}_P(\sqrt{\log p/T})$,
			\item $\max_{i\leq K,j\leq p}\abs{(1/T)\sum_{t=1}^{T}f_{it}\varepsilon_{jt}}=\mathcal{O}_P(\sqrt{\log p/T})$.
		\end{enumerate}
	\end{lem}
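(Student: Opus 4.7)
The plan is to apply a Bernstein-type concentration inequality for strongly mixing sequences with sub-exponential tails, followed by a union bound over the relevant indices. This closely mirrors the strategy of Lemma A.3 in \cite{fan2013POET}, with the additional bookkeeping needed because here $K$ is allowed to grow.

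First, I would verify that each summand---$f_{it}f_{jt}$, $\varepsilon_{it}\varepsilon_{jt}$, or $f_{it}\varepsilon_{jt}$---inherits two properties from the underlying assumptions. Strong mixing with exponentially decaying $\alpha$-coefficients is preserved under measurable transformations of $(\bf_t,\bvarepsilon_t)$ by Assumption \ref{A4}. Sub-Weibull tails are preserved by products: the inclusion $\{|XY|>s\}\subset\{|X|>\sqrt{s}\}\cup\{|Y|>\sqrt{s}\}$ combined with Assumption \ref{A3}(c) shows that if $X$ has tail index $r_1$ and $Y$ has tail index $r_2$, then $XY$ has tail index $\min(r_1,r_2)/2$. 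Centering does not alter this tail behavior.

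Second, for each fixed index pair I would apply the Merlev\`ede--Peligrad--Rio Bernstein inequality for $\alpha$-mixing sub-exponential sequences to obtain a bound of the form $\mathbb{P}(|(1/T)\sum_{t=1}^T (X_t - \E{X_t})|>s) \le C_1\exp(-c_1 T s^2) + C_2 T\exp(-c_2(Ts)^{\gamma'})$, where $\gamma'\in(0,1)$ is controlled by the tail index of the product and the mixing rate through the exponent condition $3r_1^{-1}+1.5r_2^{-1}+3r_3^{-1}>1$ in Assumption \ref{A4}. For $s$ of order $\sqrt{\log p/T}$ or $\sqrt{1/T}$, the sub-Gaussian term dominates under the scaling $K^3\log p=o(T^{\gamma/6})$ from Theorem \ref{theor1}.

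Third, a union bound over the relevant index set delivers the three claims. Taking $s=C\sqrt{\log p/T}$ and union-bounding over $p^2$ pairs for (b) and $Kp$ pairs for (c) yields the rate $\sqrt{\log p/T}$. For (a) the maximization is over only $K^2$ pairs, and the contribution of $\log K$ is subsumed into the $\mathcal{O}_P(\sqrt{1/T})$ notation given the slow polynomial growth of $K$ permitted by $K_{\text{max}}=o(\min\{p^{1/3},T\})$. The main technical obstacle is confirming that the target $s$ lies in the sub-Gaussian regime of the MPR bound uniformly across the growing index set, i.e.\ that the Weibull-type term $C_2 T\exp(-c_2 (Ts)^{\gamma'})$ remains negligible after multiplying by the number of pairs. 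This is precisely what the rate condition $K^3\log p=o(T^{\gamma/6})$ is designed to guarantee, and careful tracking of constants ensures that both terms in the Bernstein bound are summable after the union bound.
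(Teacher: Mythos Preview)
Your approach is correct and is essentially the same as the one the paper relies on: the paper does not give an independent proof but simply cites Fan et al.\ (2011), Lemma B.1, whose argument is precisely the Merlev\`ede--Peligrad--Rio Bernstein inequality for $\alpha$-mixing sub-exponential sequences followed by a union bound over index pairs, exactly as you outline. Your additional remark on handling growing $K$ in part (a) is a reasonable piece of bookkeeping that the paper glosses over by citing the fixed-$K$ result; strictly the union bound yields $\mathcal{O}_P(\sqrt{\log K/T})$, but the extra $\sqrt{\log K}$ factor is harmless for all downstream uses in the paper.
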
	
	\begin{proof}
		The proof of Lemma \ref{lemma4fan} can be found in Fan et al. (2011) (Lemma B.1).
	\end{proof}
	\begin{lem}\label{lemma6fan}
		Under Assumption \textbf{(A.4)}, $\max_{t\leq T}\sum_{s=1}^{K}\abs{\E{\bvarepsilon'_{s}\bvarepsilon_{t}}}/p=\mathcal{O}(1)$.
	\end{lem}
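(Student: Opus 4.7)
The plan is to expand the inner product componentwise and bound each covariance $\E{\varepsilon_{is}\varepsilon_{it}}$ using a standard covariance inequality for strong mixing sequences (Davydov's inequality), then exploit the exponential decay of the mixing coefficient in \ref{A4} to sum the bounds.

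First I would write $\E{\bvarepsilon'_s \bvarepsilon_t} = \sum_{i=1}^{p} \E{\varepsilon_{is}\varepsilon_{it}}$ and isolate, within $\sum_{s=1}^{K}$, the diagonal term $s=t$ (when $t\leq K$) from the off-diagonal ones. For the diagonal term, assumption \ref{A3}\red{(c)} supplies uniformly bounded second moments, $\max_{i,t}\E{\varepsilon_{it}^{2}}=\mathcal{O}(1)$, so that $\abs{\E{\bvarepsilon'_t\bvarepsilon_t}}/p = \text{tr}(\bSigma_{\varepsilon})/p = \mathcal{O}(1)$.

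For $s\neq t$, I would invoke Davydov's covariance inequality: for any fixed $q>2$,
\[
\abs{\E{\varepsilon_{is}\varepsilon_{it}} - \E{\varepsilon_{is}}\E{\varepsilon_{it}}} \;\leq\; C\,\alpha(\abs{t-s})^{1-2/q}\,\bigl(\E{\abs{\varepsilon_{is}}^{q}}\bigr)^{1/q}\bigl(\E{\abs{\varepsilon_{it}}^{q}}\bigr)^{1/q}.
\]
By \ref{A3}\red{(a)}, $\E{\varepsilon_{it}}=0$, and by \ref{A3}\red{(c)} the exponential tail bound implies uniformly (in $i,t$) bounded moments of every order, so each off-diagonal covariance is bounded by $C\,\alpha(\abs{t-s})^{1-2/q}$. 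Summing over $i\leq p$ yields $\abs{\E{\bvarepsilon'_s\bvarepsilon_t}} \leq Cp\,\alpha(\abs{t-s})^{1-2/q}$.

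Combining both cases and invoking the exponential mixing rate $\alpha(h)\leq \exp(-Ch^{r_3})$ from \ref{A4}, I would conclude
\[
\max_{t\leq T}\sum_{s=1}^{K}\abs{\E{\bvarepsilon'_s\bvarepsilon_t}}/p \;\leq\; \mathcal{O}(1) + C\sum_{h\geq 1}\alpha(h)^{1-2/q} \;=\; \mathcal{O}(1),
\]
since for any fixed $q>2$ the bound $\alpha(h)^{1-2/q}\leq \exp(-C'h^{r_3})$ makes the series summable (with $r_3>0$ from \ref{A4}). The only subtle point, and not a serious obstacle, is verifying that the coordinate process $\{\varepsilon_{it}\}_t$ inherits the strong mixing rate of the joint process $\{(\bf_t,\bvarepsilon_t)\}$ postulated in \ref{A4}; this is immediate because a measurable function of a mixing sequence is mixing with the same (or smaller) coefficient, so Davydov's inequality can be applied coordinatewise at rate $\alpha(\cdot)$.
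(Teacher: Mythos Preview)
Your argument is correct and is essentially the same one given in the reference the paper cites (Fan, Liao, and Mincheva, 2013, Lemma~A.6): expand the inner product coordinatewise, apply Davydov's covariance inequality for $\alpha$-mixing sequences together with the moment bounds implied by the exponential tails in \ref{A3}\red{(c)}, and sum the exponentially decaying mixing coefficients from \ref{A4}. Note that the upper index $K$ in the statement is almost certainly a typo for $T$ (compare the use of the lemma in the proof of Lemma~\ref{lemma9fan}(a)); your bound via $\sum_{h\geq 1}\alpha(h)^{1-2/q}<\infty$ covers the full sum over $s\leq T$ anyway.
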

	\begin{proof}
		The proof of Lemma \ref{lemma6fan} can be found in Fan et al. (2013) (Lemma A.6).
	\end{proof}
	\begin{lem}\label{lemma7fan}
		For $\widehat{K}$ defined in expression (3.6),
		\begin{equation*}
		\Pr\Big(\widehat{K}=K\Big)\rightarrow 1.
		\end{equation*}
	\end{lem}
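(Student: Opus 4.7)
The plan is to verify that this lemma is an application of the consistency result of Li, Li and Shi (2017) for determining the number of factors when $K$ is allowed to grow, and to show that the assumptions of their theorem are implied by Assumptions \ref{A1}--\ref{A5} and \ref{B1} together with the rate conditions imposed in Theorem \ref{theor1} and Remark \ref{remark1}. The two structural ingredients I would check are: (i) a lower bound on the penalized criterion for underfitting $K < K_0$ (where $K_0$ denotes the true number of factors), ensuring $\ln V(K) + K g(p,T) - [\ln V(K_0) + K_0 g(p,T)]$ is bounded below by a positive constant with probability tending to one; and (ii) an upper bound on the same quantity for overfitting $K_0 < K \leq K_{\max}$, ensuring it is strictly positive with probability tending to one.

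For the underfitting direction, the argument would rely on Assumption \ref{A2} (pervasive factors) together with Assumption \ref{A1} (spiked structure of $\bSigma$): if one omits any of the $K_0$ pervasive factors, then the residual variance $V(K,\hat{\bF}_K)$ differs from $V(K_0,\hat{\bF}_{K_0})$ by a quantity bounded below by a positive constant (uniformly over $K < K_0$). Since the penalty difference $(K_0-K)g(p,T)\to 0$ because $K_{\max}g(p,T)\to 0$, the fit gap dominates. For the overfitting direction, one would use the bound $V(K_0,\hat{\bF}_{K_0}) - V(K,\hat{\bF}_K)= \mathcal{O}_P(C_{p,T,K_{\max}})$ established in Li, Li and Shi (2017), combined with the rate condition $C_{p,T,K_{\max}}^{-1}g(p,T)\to\infty$, to conclude that the penalty growth $(K-K_0)g(p,T) \geq g(p,T)$ overwhelms any improvement in fit.

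My main task would be to verify that the rate $C_{p,T,K_{\max}} = \mathcal{O}_P(\max\{K_{\max}^{3}/\sqrt{p}, K_{\max}^{5/2}/\sqrt{T}\})$ required by Li, Li and Shi (2017) is compatible with our Assumption \ref{A5} (especially \ref{A5}(c) which allows the fourth moment bound to scale with $K^{2}$) and with the tail and mixing conditions \ref{A3}--\ref{A4}. These conditions in turn deliver Lemma \ref{lemma4fan} for the idiosyncratic and factor moments, and the uniform control of $\max_{t}\sum_{s}|\E{\bvarepsilon_s'\bvarepsilon_t}|/p$ from Lemma \ref{lemma6fan}, both of which are the building blocks used to bound $V(K,\hat{\bF}_K)$ in the relevant probabilistic expansions. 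Finally, a union bound over $K\in\{1,\ldots,K_{\max}\}$ combined with $K_{\max}=o(\min\{p^{1/3},T\})$ yields the conclusion.

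The main obstacle will be controlling the overfitting case uniformly in $K$ when $K_{\max}$ diverges: the rate $C_{p,T,K_{\max}}$ depends on $K_{\max}$ to positive powers, and one needs the penalty $g(p,T)$ to satisfy the two-sided rate sandwich $K_{\max}g(p,T)\to 0$ and $C_{p,T,K_{\max}}^{-1}g(p,T)\to\infty$ simultaneously, which forces $C_{p,T,K_{\max}}\cdot K_{\max} \to 0$. Checking that this is not in conflict with the rates on $(p,T,K)$ used elsewhere in the paper (in particular $K^{3}\log p=o(T^{\gamma/6})$ and $KT=o(p^{2})$) is the part that requires the most care, and is where I would spend the bulk of the technical effort.
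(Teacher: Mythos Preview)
Your proposal is correct and takes essentially the same approach as the paper: the paper's own proof is simply a citation to Theorem 1 and Corollary 1 of Li, Li and Shi (2017), and you have correctly identified that the lemma follows by verifying their hypotheses under Assumptions \ref{A1}--\ref{A5}, \ref{B1}, and the rate conditions of Remark \ref{remark1}. Your additional discussion of the underfitting/overfitting decomposition and the two-sided penalty sandwich is precisely the content of their argument, so nothing further is needed.
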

	\begin{proof}
		The proof of Lemma \ref{lemma7fan} can be found in Li et al. (2017) (Theorem 1 and Corollary 1).
	\end{proof}
	Using the expressions (A.1) in Bai (2003) and (C.2) in Fan et al. (2013), we have the following identity:
	\begin{equation} \label{eqA1}
	\widehat{\bf}_t-\bH\bf_t = \Big(\frac{\bV}{p}\Big)^{-1}\Bigg[\frac{1}{T}\sum_{s=1}^{T}\widehat{\bf}_{s}\frac{\E{\bvarepsilon'_{s}\bvarepsilon_{t}}}{p}+\frac{1}{T}\sum_{s=1}^{T} \widehat{\bf}_{s}\zeta_{st}+ \frac{1}{T}\sum_{s=1}^{T} \widehat{\bf}_{s}\eta_{st} + \frac{1}{T}\sum_{s=1}^{T} \widehat{\bf}_{s}\xi_{st} \Bigg],
	\end{equation}
	where $\zeta_{st} = \bvarepsilon'_{s}\bvarepsilon_{t}/p-\E{\bvarepsilon'_{s}\bvarepsilon_{t}}/p$, $\eta_{st} = \bf'_{s}\sum_{i=1}^{p}\bb_i\varepsilon_{it}/p$ and $\xi_{st} = \bf'_{t}\sum_{i=1}^{p}\bb_i\varepsilon_{is}/p$.
	\begin{lem}\label{lemma8fan}
		For all $i\leq \widehat{K}$,
		\begin{enumerate}[label=(\alph*)]
			\item $(1/T)\sum_{t=1}^{T}\Big[(1/T)\sum_{t=1}^{T}\hat{f}_{is}\E{\bvarepsilon'_{s}\bvarepsilon_{t}}/p \Big]^2=\mathcal{O}_P(T^{-1})$,
			\item $(1/T)\sum_{t=1}^{T}\Big[(1/T)\sum_{t=1}^{T}\hat{f}_{is}\zeta_{st}/p \Big]^2=\mathcal{O}_P(p^{-1})$,
			\item $(1/T)\sum_{t=1}^{T}\Big[(1/T)\sum_{t=1}^{T}\hat{f}_{is}\eta_{st}/p \Big]^2=\mathcal{O}_P(K^2/p)$,
			\item $(1/T)\sum_{t=1}^{T}\Big[(1/T)\sum_{t=1}^{T}\hat{f}_{is}\xi_{st}/p \Big]^2=\mathcal{O}_P(K^2/p)$.
		\end{enumerate}
	\end{lem}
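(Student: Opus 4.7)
The plan is to handle all four parts using Cauchy--Schwarz together with the identifying normalization $\widehat{\bF}\widehat{\bF}'/T=\bI_{\widehat K}$, which gives $(1/T)\sum_{s=1}^{T}\widehat f_{is}^{2}=1$ for every $i\leq \widehat K$. In each case the plan is to use Cauchy--Schwarz to peel off the factor weights $\widehat f_{is}$, reducing the quantity of interest to a double average in $(s,t)$ of squared ``residual'' terms, and then to bound that double average using the preceding lemmas (Lemmas \ref{lemma4fan}--\ref{lemma6fan}), Assumption \ref{A5}, and Markov's inequality.

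For part (a), Cauchy--Schwarz together with the normalization immediately yields
\[
\Bigl[(1/T)\sum_{s}\widehat f_{is}\,\E{\bvarepsilon_{s}'\bvarepsilon_{t}}/p\Bigr]^{2}
\leq (1/T)\sum_{s}\bigl(\E{\bvarepsilon_{s}'\bvarepsilon_{t}}/p\bigr)^{2}.
\]
Averaging in $t$ gives $(1/T^{2})\sum_{s,t}\bigl(\E{\bvarepsilon_{s}'\bvarepsilon_{t}}/p\bigr)^{2}$, and the trick is to dominate $\sum_{s}(\cdot)^{2}$ by $\max_{s}\abs{\cdot}\cdot\sum_{s}\abs{\cdot}$, both of which are $\mathcal{O}(1)$ uniformly in $t$ by Lemma \ref{lemma6fan}. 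This yields $\mathcal{O}(T^{-1})$. For part (b), the same Cauchy--Schwarz step reduces the quantity to $(1/T^{2})\sum_{s,t}\zeta_{st}^{2}$. Here I would use Assumption \ref{A5}\textcolor{red}{(b)}, which bounds $\E{p^{-1/2}(\bvarepsilon_{s}'\bvarepsilon_{t}-\E{\bvarepsilon_{s}'\bvarepsilon_{t}})}^{4}$, combined with Jensen's inequality to get $\E{\zeta_{st}^{2}}=\mathcal{O}(1/p)$ uniformly in $(s,t)$; Markov's inequality then delivers $\mathcal{O}_{P}(p^{-1})$.

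Parts (c) and (d) exploit the product structure of $\eta_{st}$ and $\xi_{st}$. For (c), write $\eta_{st}=\bf_{s}'\bg_{t}$ with $\bg_{t}\defeq (1/p)\sum_{i=1}^{p}\bb_{i}\varepsilon_{it}$, and observe that
\[
(1/T)\sum_{s}\widehat f_{is}\eta_{st} = \bg_{t}'\ba_{i},\qquad \ba_{i}\defeq(1/T)\sum_{s}\bf_{s}\widehat f_{is}.
\]
By Cauchy--Schwarz $\norm{\ba_{i}}^{2}\leq (1/T)\sum_{s}\norm{\bf_{s}}^{2}=\mathcal{O}_{P}(K)$ (using Lemma \ref{lemma4fan}\textcolor{red}{(a)} on the diagonal elements of $(1/T)\bF\bF'$ and $\E{\norm{\bf_{s}}^{2}}=\mathcal{O}(K)$). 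Then $(\bg_{t}'\ba_{i})^{2}\leq \norm{\bg_{t}}^{2}\norm{\ba_{i}}^{2}$, and Assumption \ref{A5}\textcolor{red}{(c)} together with Markov gives $(1/T)\sum_{t}\norm{\bg_{t}}^{2}=\mathcal{O}_{P}(K/p)$. Multiplying the two rates yields $\mathcal{O}_{P}(K^{2}/p)$. Part (d) is symmetric: write $\xi_{st}=\bf_{t}'\bg_{s}$, so $(1/T)\sum_{s}\widehat f_{is}\xi_{st}=\bf_{t}'\bh_{i}$ with $\bh_{i}\defeq(1/T)\sum_{s}\widehat f_{is}\bg_{s}$, and by Cauchy--Schwarz $\norm{\bh_{i}}^{2}\leq (1/T)\sum_{s}\norm{\bg_{s}}^{2}=\mathcal{O}_{P}(K/p)$, while $(1/T)\sum_{t}\norm{\bf_{t}}^{2}=\mathcal{O}_{P}(K)$, giving again $\mathcal{O}_{P}(K^{2}/p)$.

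The main obstacle is bookkeeping rather than any deep step: one must be careful that the rate $K$ in $\E{\norm{\bf_{s}}^{2}}=\mathcal{O}(K)$ and the rate $K/p$ from $\E{\norm{\bg_{t}}^{2}}\leq p^{-1}\sqrt{\E{\norm{p^{-1/2}\sum_{i}\bb_{i}\varepsilon_{it}}^{4}}}$ (applying Jensen to \ref{A5}\textcolor{red}{(c)}) combine multiplicatively rather than through any cross term, and that all bounds hold uniformly in $i\leq \widehat K$; since the normalization $(1/T)\sum_{s}\widehat f_{is}^{2}=1$ is exact for every $i$, the uniformity reduces to the already-established uniform bounds on $\bf_{s}$, $\bvarepsilon$, and their cross-products.
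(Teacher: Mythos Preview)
Your proposal is correct and follows essentially the same route as the paper: the paper defers parts (a) and (b) to Fan et al.\ (2013), and your Cauchy--Schwarz reductions for those parts are exactly the standard arguments used there; for (c) and (d) the paper performs precisely the same decomposition you describe (your $\ba_i$ and $\bh_i$ are the paper's $(1/T)\sum_s \widehat f_{is}\bf_s$ and $(1/T)\sum_s \widehat f_{is}\bg_s$), combining $(1/T)\sum_s\widehat f_{is}^2=1$, $(1/T)\sum_s\norm{\bf_s}^2=\mathcal{O}_P(K)$, and Assumption \ref{A5}\textcolor{red}{(c)} to obtain $(1/T)\sum_t\norm{\bg_t}^2=\mathcal{O}_P(K/p)$. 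Your bookkeeping remark about uniformity in $i$ is also apt and matches the paper's implicit handling.
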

	\begin{proof}
		We only prove (c) and (d), the proof of (a) and (b) can be found in Fan et al. (2013) (Lemma 8).
		\begin{enumerate}[label=(\alph*),start=3]
			\item Recall, $\eta_{st} = \bf'_{s}\sum_{i=1}^{p}\bb_i\varepsilon_{it}/p$. Using Assumption \textbf{(A.5)}, we get $\E{(1/T)\times \sum_{t=1}^{T}\norm*{\sum_{i=1}^{p}\bb_i\varepsilon_{it}}^2} = \E{\norm{\sum_{i=1}^{p}\bb_{i}\varepsilon_{it}}^2}=\mathcal{O}(pK)$. Therefore, by the Cauchy-Schwarz inequality and the facts that $(1/T)\sum_{t=1}^{T}\norm*{\bf_t}^2=\mathcal{O}(K)$, and, $\forall i$, $\sum_{s=1}^{T}\hat{f}_{is}^{2}=T$,
			\begin{align*}
			\frac{1}{T}\sum_{t=1}^{T}\Big(\frac{1}{T}\sum_{s=1}^{T}\hat{f}_{is}\eta_{st}  \Big)^2&\leq \norm{\frac{1}{T}\sum_{s=1}^{T}\norm*{\hat{f}_{is}\bf'_{s}   }^2\frac{1}{T}\sum_{t=1}^{T}\frac{1}{p}\norm*{\sum_{j=1}^{p}\bb_{i}\varepsilon_{jt}}  }^2\\
			&\leq \frac{1}{Tp^2} \sum_{t=1}^{T}\norm{\sum_{j=1}^{p}\bb_{i}\varepsilon_{jt}}^2\Bigg(\frac{1}{T}\sum_{s=1}^{T}\hat{f}_{is}^2 \frac{1}{T}\sum_{s=1}^{T}\norm{\bf_{s}}^2  \Bigg)\\
			&=\mathcal{O}_P\Big(\frac{K}{p}\cdot K\Big)=\mathcal{O}_P\Big(\frac{K^2}{p}\Big).
			\end{align*}
			\item Using a similar approach as in part (c):
			\begin{align*}
			\frac{1}{T}\sum_{t=1}^{T}\Big(\frac{1}{T}\sum_{s=1}^{T}\hat{f}_{is}\xi_{st}  \Big)^2&= \frac{1}{T}\sum_{t=1}^{T}  \abs{\frac{1}{T}\sum_{s=1}^{T}\bf'_t\sum_{j=1}^{p}\varepsilon_{js} \frac{1}{p}\hat{f}_{is} }^2\leq \Big(\frac{1}{T}\sum_{t=1}^{T}\norm{\bf_t}^2  \Big)\norm{\frac{1}{T}\sum_{s=1}^{T}\sum_{j=1}^{p}\bb_j\varepsilon_{js}\frac{1}{p}\hat{f}_{is}}^2\\
			&\leq \Big(\frac{1}{T}\sum_{t=1}^{T}\norm{\bf_t}^2  \Big) \frac{1}{T}\sum_{s=1}^{T}\norm{\sum_{j=1}^{p}\bb_j\varepsilon_{js}\frac{1}{p}}^2 \Big(\frac{1}{T}\sum_{s=1}^{T}\hat{f}_{is}^2  \Big)\\
			&=\mathcal{O}_P\Big(K\cdot \frac{pK}{p^2}\cdot 1  \Big) = \mathcal{O}_P\Big(\frac{K^2}{p} \Big)
			\end{align*}
		\end{enumerate}
	\end{proof}
	\begin{lem}\label{lemma9fan} 
		~
		\begin{enumerate}[label=(\alph*)]
			\item	$\max_{t\leq T}\norm{(1/(Tp))\sum_{s=1}^{T}\widehat{\bf}^{'}_{s}\E{\bvarepsilon'_{s}\bvarepsilon_{t}}}=\mathcal{O}_P(K/\sqrt{T})$.
			\item	$\max_{t\leq T}\norm{(1/(Tp))\sum_{s=1}^{T}\widehat{\bf}^{'}_{s}\zeta_{st}}=\mathcal{O}_P(\sqrt{K}T^{1/4}/\sqrt{p})$.
			\item	$\max_{t\leq T}\norm{(1/(Tp))\sum_{s=1}^{T}\widehat{\bf}^{'}_{s}\eta_{st}}=\mathcal{O}_P(KT^{1/4}/\sqrt{p})$.
			\item	$\max_{t\leq T}\norm{(1/(Tp))\sum_{s=1}^{T}\widehat{\bf}^{'}_{s}\xi_{st}}=\mathcal{O}_P(KT^{1/4}/\sqrt{p})$.
		\end{enumerate}
	\end{lem}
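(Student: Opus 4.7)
The plan is to convert the averaged-over-$t$ bounds of Lemma \ref{lemma8fan} into uniform (max-over-$t$) bounds. The common first step is a coordinate-wise Cauchy-Schwarz: since the PCA identification constraint $\widehat{\bF}\widehat{\bF}'/T=\bI_{\widehat{K}}$ yields $\sum_{s=1}^{T}\hat{f}_{is}^{2}=T$ for each $i\le\widehat{K}$, for any scalar sequence $\{X_{st}\}$ one has
\begin{equation*}
\norm{\frac{1}{T}\sum_{s=1}^{T}\widehat{\bf}_s X_{st}}^{2}=\sum_{i=1}^{\widehat{K}}\left(\frac{1}{T}\sum_{s=1}^{T}\hat{f}_{is}X_{st}\right)^{2}\le \frac{\widehat{K}}{T}\sum_{s=1}^{T}X_{st}^{2}.
\end{equation*}
By Lemma \ref{lemma7fan}, $\widehat{K}=K$ with probability approaching one, so each part of the lemma reduces to producing a uniform bound on $\max_t\sum_s X_{st}^{2}$ for the appropriate scalar $X_{st}$ (the $1/p$ factor in the statement of Lemma 9 is absorbed inside $X_{st}$).

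For part (a), take $X_{st}=\E{\bvarepsilon'_s\bvarepsilon_t}/p$. Stationarity and Assumption \ref{A3} give the pointwise bound $|\E{\bvarepsilon'_s\bvarepsilon_t}|/p\le \mathrm{tr}(\bSigma_\varepsilon)/p\le \vertiii{\bSigma_\varepsilon}_{1}=\mathcal{O}(1)$ by Cauchy-Schwarz, and Lemma \ref{lemma6fan} gives $\max_t\sum_s|\E{\bvarepsilon'_s\bvarepsilon_t}|/p=\mathcal{O}(1)$; multiplying these yields $\max_t\sum_s(\E{\bvarepsilon'_s\bvarepsilon_t}/p)^{2}=\mathcal{O}(1)$, which combined with the initial step produces the claimed rate. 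For part (b), take $X_{st}=\zeta_{st}$; Assumption \ref{A5}(b) gives $\E{\zeta_{st}^{4}}\le M/p^{2}$, hence $\E{(\sum_s\zeta_{st}^{2})^{2}}\le T^{2}M/p^{2}$ by Cauchy-Schwarz in $s$. A Markov inequality combined with a union bound over the $T$ values of $t$ then gives $\max_t\sum_s\zeta_{st}^{2}=\mathcal{O}_P(T^{3/2}/p)$, which combined with the initial step produces the claimed $\sqrt{K}T^{1/4}/\sqrt{p}$ rate.

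Parts (c) and (d) exploit the product-separable form of the scalars. For (c), write $\eta_{st}=\bf'_s\bg_t$ with $\bg_t=(1/p)\sum_i\bb_i\varepsilon_{it}$, so that
\begin{equation*}
\sum_s\eta_{st}^{2}=\bg_t'\Big(\sum_s\bf_s\bf_s'\Big)\bg_t\le \Lambda_{\max}\Big(\sum_s\bf_s\bf_s'\Big)\cdot\norm{\bg_t}^{2}.
\end{equation*}
The operator-norm factor is $\mathcal{O}_P(T)$ by Lemma \ref{lemma4fan}(a) and stationarity of $\bf_s$, while Assumption \ref{A5}(c) gives $\E{\norm{\bg_t}^{4}}\le K^{2}M/p^{2}$; Markov plus a union bound over $t$ yields $\max_t\norm{\bg_t}^{2}=\mathcal{O}_P(K\sqrt{T}/p)$, and combining produces the rate $KT^{1/4}/\sqrt{p}$. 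Part (d) uses the analogous decomposition $\sum_s\xi_{st}^{2}=\bf'_t(\sum_s\bh_s\bh_s')\bf_t$ with $\bh_s=(1/p)\sum_i\bb_i\varepsilon_{is}$: $\Lambda_{\max}(\sum_s\bh_s\bh_s')$ is controlled via its trace $\sum_s\norm{\bh_s}^{2}=\mathcal{O}_P(TK/p)$ (using Assumption \ref{A5}(c) and a second-moment bound), and $\max_t\norm{\bf_t}^{2}$ is controlled by the exponential tails in Assumption \ref{A3}(c).

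The main obstacle will be tracking the $K$-growth factors and the $T^{1/4}$ factors consistently through the repeated Cauchy-Schwarz and Markov-union steps so that the advertised rates emerge without accumulated slack. In particular, parts (c) and (d) must use the product-separable form of $\eta_{st}$ and $\xi_{st}$ via a quadratic-form and $\Lambda_{\max}$ bound; a crude bound such as $\sum_s\eta_{st}^{2}\le(\sum_s\norm{\bf_s}^{2})\norm{\bg_t}^{2}$ would introduce an unwanted extra $\sqrt{K}$ from $\sum_s\norm{\bf_s}^{2}=\mathcal{O}_P(TK)$ relative to $\Lambda_{\max}(\sum_s\bf_s\bf_s')=\mathcal{O}_P(T)$.
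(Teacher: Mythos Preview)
Your proposal is correct, and for parts (a) and (b) it is essentially the paper's own argument: the paper also applies Cauchy--Schwarz with $(1/T)\sum_s\norm{\widehat{\bf}_s}^2=\mathcal{O}_P(K)$ (your coordinate-wise version exploiting $\sum_s\hat f_{is}^2=T$ is the same bound in a slightly sharper form), then controls $\max_t\sum_s X_{st}^2$ exactly as you describe.

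For parts (c) and (d) your route differs from the paper's. The paper exploits the product structure \emph{before} applying Cauchy--Schwarz, writing directly
\[
\frac{1}{T}\sum_s\widehat{\bf}_s\eta_{st}=\Big(\frac{1}{T}\sum_s\widehat{\bf}_s\bf'_s\Big)\bg_t,\qquad
\frac{1}{T}\sum_s\widehat{\bf}_s\xi_{st}=\Big(\frac{1}{T}\sum_s\widehat{\bf}_s\bh'_s\Big)\bf_t,
\]
and then bounding the matrix norms $\norm{(1/T)\sum_s\widehat{\bf}_s\bf'_s}=\mathcal{O}_P(K)$ and $\norm{(1/T)\sum_s\widehat{\bf}_s\bh'_s}=\mathcal{O}_P(\sqrt{K/p})$, together with $\max_t\norm{\bg_t}$ and $\max_t\norm{\bf_t}$ via the fourth-moment assumptions. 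You instead first apply the coordinate-wise Cauchy--Schwarz to reduce to $\sum_s\eta_{st}^2$ and $\sum_s\xi_{st}^2$, and only then invoke the product structure through the quadratic-form/$\Lambda_{\max}$ bound. Both routes land on the stated rate; the paper's direct factorization is a bit cleaner because it sidesteps your trace-versus-$\Lambda_{\max}$ issue in (d). Note that your (d), with the trace bound on $\sum_s\bh_s\bh'_s$ and exponential tails for $\max_t\norm{\bf_t}^2$, actually yields $K^{3/2}(\log T)^{1/r_2}/\sqrt{p}$, which is $\mathcal{O}_P(KT^{1/4}/\sqrt{p})$ only under the growth restriction $K^3\log p=o(T^{\gamma/6})$ of Theorem~\ref{theor1}; you should state that verification explicitly.
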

	\begin{proof}
		Our proof is similar to the proof in Fan et al. (2013). However, we relax the assumptions of fixed $K$.
		\begin{enumerate}[label=(\alph*)]
			\item Using the Cauchy-Schwarz inequality, Lemma \ref{lemma6fan}, and the fact that $(1/T)\sum_{t=1}^{T}\norm*{\widehat{\bf}_t}^2=\mathcal{O}_P(K)$, we get
			\begin{align*}
			&\max_{t\leq T} \norm{\frac{1}{Tp}\sum_{s=1}^{T}\widehat{\bf}'_{s} \E{\bvarepsilon'_{s}\bvarepsilon_{t}} }\leq 	\max_{t\leq T} \Bigg[ \frac{1}{T}\sum_{s=1}^{T}\norm{\widehat{\bf}_{s}}\frac{1}{T}\sum_{s=1}^{T}\Bigg(\frac{\E{\bvarepsilon'_{s}\bvarepsilon_{t}}}{p}  \Bigg)^2 \Bigg]^{1/2}
			\leq \mathcal{O}_P(K) \max_{t\leq T} \Bigg[ \frac{1}{T}\sum_{s=1}^{T}\Bigg(\frac{\E{\bvarepsilon'_{s}\bvarepsilon_{t}}}{p}  \Bigg)^2 \Bigg]^{1/2}\\
			&\leq \mathcal{O}_P(K) \max_{s,t} \sqrt{\abs{\frac{\E{\bvarepsilon'_{s}\bvarepsilon_{t}}}{p}  }} \max_{t\leq T}\Bigg[ \frac{1}{T}\sum_{s=1}^{T}\abs{\frac{\E{\bvarepsilon'_{s}\bvarepsilon_{t}}}{p}}  \Bigg]^{1/2} = \mathcal{O}_P\Big(K\cdot 1\cdot \frac{1}{\sqrt{T}}\Big) = \mathcal{O}_P\Big(\frac{K}{\sqrt{T}}\Big).
			\end{align*}
			\item Using the Cauchy-Schwarz inequality,
			\begin{align*}
			\max_{t\leq T}\norm{\frac{1}{T}\sum_{s=1}^{T}\widehat{\bf}^{'}_{s}\zeta_{st}}\leq \max_{t\leq T} \frac{1}{T}\Bigg(\sum_{s=1}^{T}\norm{\widehat{\bf}_s}^2\sum_{s=1}^{T}\zeta_{st}^{2}  \Bigg)^{1/2}&\leq \Bigg(\mathcal{O}_P(K)\max_t\frac{1}{T} \sum_{s=1}^{T}\zeta_{st}^{2}  \Bigg)^{1/2}\\ &= \mathcal{O}_P\Big(\sqrt{K}\cdot T^{1/4}/\sqrt{p}\cdot \Big).
			\end{align*}
			To obtain the last inequality we used Assumption \textbf{(A.5)}(b) to get $\E{(1/T)\sum_{s=1}^{T}\zeta_{st}^{2}}^2\leq \max_{s,t\leq T}\E{\zeta_{st}^{4}}=\mathcal{O}(1/p^2)$, and then applied the Chebyshev inequality and Bonferroni's method that yield $\max_{t}(1/T)\sum_{s=1}^{T}\zeta_{st}^{2}=\mathcal{O}_P\Big(\sqrt{T}/p\Big)$.
			\item Using the definition of $\eta_{st}$ we get
			\begin{equation*}
			\max_{t\leq T}\norm{\frac{1}{T}\sum_{s=1}^{T}\widehat{\bf}^{'}_{s}\eta_{st}}\leq \norm{\frac{1}{T}\sum_{s=1}^{T}\widehat{\bf}_s\bf'_{s} }\max_t\norm{\frac{1}{p}\sum_{i=1}^{p}\bb_{i}\varepsilon_{it}  }=\mathcal{O}_P\Big(K \cdot T^{1/4}/\sqrt{p}\Big).
			\end{equation*}
			To obtain the last rate we used Assumption \textbf{(A.5)}(c) together with the Chebyshev inequality and Bonferroni's method to get $\max_{t\leq T}\norm{\sum_{i=1}^{p}\bb_{i}\varepsilon_{it}}=\mathcal{O}_P\Big(T^{1/4}\sqrt{p}\Big)$.
			\item In the proof of Lemma \ref{lemma8fan} we showed that $\norm*{(1/T)\times \sum_{t=1}^{T}\sum_{i=1}^{p}\bb_i\varepsilon_{it}(1/p)\widehat{\bf}_s}^2 = \mathcal{O}\Big(\sqrt{K/p}\Big)$. Furthermore, Assumption \textbf{(A.3)} implies $\E{K^{-2}\bf_{t}}^4<M$, therefore, $\max_{t\leq T}\norm{\bf_t}=\mathcal{O}_P\Big(T^{1/4}\sqrt{K}\Big)$. Using these bounds we get
			\begin{align*}
			\max_{t\leq T}\norm{\frac{1}{T}\sum_{s=1}^{T}\widehat{\bf}^{'}_{s}\xi_{st}}\leq \max_{t\leq T}\norm{\bf_t}\cdot \norm{\sum_{s=1}^{T}\sum_{i=1}^{p}\bb_i\varepsilon_{it}\frac{1}{p} \widehat{\bf}_s }=\mathcal{O}_P \Big(T^{1/4}\sqrt{K}\cdot \sqrt{K/p}\Big)=\mathcal{O}_P\Big(T^{1/4}K/\sqrt{p}\Big).
			\end{align*}
		\end{enumerate}
	\end{proof}	
	\begin{lem}\label{lemma10fan} 
		~	
		\begin{enumerate}[label=(\alph*)]
			\item $\max_{i\leq K}(1/T)\sum_{t=1}^{T}(\widehat{\bf}_t-\bH\bf_{t})_{i}^{2}=\mathcal{O}_P(1/T+K^2/p)$.
			\item $(1/T)\sum_{t=1}^{T}\norm*{\widehat{\bf}_t-\bH\bf_{t}}^{2} = \mathcal{O}_P(K/T+K^3/p)$.
			\item $\max_{t\leq T}(1/T)\norm*{\widehat{\bf}_t-\bH\bf_{t}}=\mathcal{O}_P(K/\sqrt{T}+KT^{1/4}/\sqrt{p})$.
		\end{enumerate}
	\end{lem}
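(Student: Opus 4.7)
The approach is to use identity (A.1), which writes $\widehat{\bf}_t - \bH\bf_t$ as $(\bV/p)^{-1}$ times a sum of four error terms $T_1(t), \ldots, T_4(t)$, and then feed those four terms into the pre-existing Lemmas \ref{lemma8fan} and \ref{lemma9fan}. The one preliminary step that is needed is to check that $\vertiii{(\bV/p)^{-1}}_{2} = \mathcal{O}_P(1)$. This follows from Assumptions \ref{A1}--\ref{A2} (the first $K$ eigenvalues of $\bSigma$ are of order $p$, with the eigengap enforced by pervasiveness), combined with Lemma \ref{lemma7fan} which guarantees $\widehat{K} = K$ with probability tending to one, and a standard perturbation argument for the sample covariance eigenvalues. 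Once this multiplicative constant is controlled, it drops out of every subsequent rate calculation.

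For part (a), I would fix a coordinate $i \leq K$, square the $i$th entry of (A.1), apply the elementary inequality $(a+b+c+d)^2 \leq 4(a^2+b^2+c^2+d^2)$, and average over $t$. Lemma \ref{lemma8fan} (a)--(d) then bounds the four resulting averages by $\mathcal{O}_P(1/T)$, $\mathcal{O}_P(1/p)$, $\mathcal{O}_P(K^2/p)$ and $\mathcal{O}_P(K^2/p)$ respectively, whose sum is $\mathcal{O}_P(1/T + K^2/p)$. The rate is uniform in $i$: inspection of the proof of Lemma \ref{lemma8fan} (e.g.\ parts (c) and (d) which were proved in the excerpt) shows that every bound relies on $\sum_s \hat{f}_{is}^2 = T$ and on $i$-free global quantities like $(1/T)\sum_s\norm{\bf_s}^2 = \mathcal{O}_P(K)$ and $(1/T)\sum_s \norm{\sum_j \bb_j \varepsilon_{js}}^2 = \mathcal{O}_P(pK)$. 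Hence no union bound or logarithmic penalty is incurred in taking the maximum over $i \leq K$.

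Part (b) is immediate from part (a): since $\norm{\widehat{\bf}_t - \bH\bf_t}^2 = \sum_{i=1}^{K}(\hat{f}_{it} - (\bH\bf_t)_i)^2$, summing the part (a) bound over the $K$ coordinates yields $(1/T)\sum_t \norm{\widehat{\bf}_t - \bH\bf_t}^2 = K \cdot \mathcal{O}_P(1/T + K^2/p) = \mathcal{O}_P(K/T + K^3/p)$. For part (c), I would take the Euclidean norm on both sides of (A.1), apply the triangle inequality together with the operator-norm bound on $(\bV/p)^{-1}$, and then invoke Lemma \ref{lemma9fan} (a)--(d) to bound $\max_t \norm{T_k(t)}$ for each $k$ separately. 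The contributions are $\mathcal{O}_P(K/\sqrt{T})$ from (a), $\mathcal{O}_P(\sqrt{K}T^{1/4}/\sqrt{p})$ from (b), and $\mathcal{O}_P(KT^{1/4}/\sqrt{p})$ from (c) and (d); the last dominates the second, producing the stated rate $\mathcal{O}_P(K/\sqrt{T} + KT^{1/4}/\sqrt{p})$.

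The only non-routine step is the careful verification of uniformity in $i$ for part (a). If any of the bounds in Lemma \ref{lemma8fan} turned out to require a union bound rather than being automatically uniform, one would pick up an extra $\log K$ factor, which would in turn inflate the rate in part (b) to $\mathcal{O}_P(K\log K /T + K^3 \log K /p)$; this would slightly weaken Theorem \ref{theor1} but not change its qualitative conclusion. Apart from this bookkeeping, the proof is a direct assembly of the auxiliary lemmas already established, and no new probabilistic machinery is required.
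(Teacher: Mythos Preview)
Your proposal is correct and follows essentially the same route as the paper: identity \eqref{eqA1} is combined with Lemma~\ref{lemma8fan} for part (a), part (b) is obtained from (a) by summing over the $K$ coordinates, and part (c) follows from \eqref{eqA1} together with Lemma~\ref{lemma9fan}. The paper's proof is in fact slightly terser---it silently absorbs the $\vertiii{(\bV/p)^{-1}}_2 = \mathcal{O}_P(1)$ bound into the leading constant $C$ and does not discuss uniformity in $i$---so your additional care on those two points is appropriate but does not constitute a different argument.
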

	\begin{proof}
		Similarly to Fan et al. (2013), we prove this lemma conditioning on the event $\hat{K}=K$. Since $\Pr(\hat{K}\neq K)=o(1)$, the unconditional arguments are implied.
		\begin{enumerate}[label=(\alph*)]
			\item Using \eqref{eqA1}, for some constant $C>0$,
			\begin{align*}
			\max_{i\leq K}(1/T)\sum_{t=1}^{T}(\widehat{\bf}_t-\bH\bf_{t})_{i}^{2}&\leq C \max_{i\leq K}\frac{1}{T} \sum_{t=1}^{T}\Bigg(\frac{1}{T} \sum_{s=1}^{T}\hat{f}_{is} \frac{\E{\bvarepsilon'_{s}\bvarepsilon_{t}}}{p} \Bigg)^2+C \max_{i\leq K}\frac{1}{T} \sum_{t=1}^{T}\Bigg(\frac{1}{T} \sum_{s=1}^{T}\hat{f}_{is} \zeta_{st} \Bigg)^2\\
			&+ C \max_{i\leq K}\frac{1}{T} \sum_{t=1}^{T}\Bigg(\frac{1}{T} \sum_{s=1}^{T}\hat{f}_{is} \zeta_{st} \Bigg)^2 +C \max_{i\leq K}\frac{1}{T} \sum_{t=1}^{T}\Bigg(\frac{1}{T} \sum_{s=1}^{T}\hat{f}_{is} \xi_{st} \Bigg)^2\\
			&=\mathcal{O}_P\Bigg(\frac{1}{T}+\frac{1}{p}+\frac{K^2}{p}+ \frac{K^2}{p} \Bigg) = \mathcal{O}_P(1/T+K^2/p).
			\end{align*}
			\item Part (b) follows from part (a) and
			\begin{equation*}
			\frac{1}{T}\sum_{t=1}^{T}\norm*{\widehat{\bf}_t-\bH\bf_{t}}^{2}\leq K\max_{i\leq K}\frac{1}{T}\sum_{t=1}^{T}(\widehat{\bf}_t-\bH\bf_{t})_{i}^{2}.
			\end{equation*}
			\item Part (c) is a direct consequence of \ref{eqA1} and Lemma \ref{lemma9fan}.
		\end{enumerate}
	\end{proof}		
	\begin{lem}\label{lemma11fan} 
		~	
		\begin{enumerate}[label=(\alph*)]
			\item $\bH\bH' = \bI_{\hat{K}}+\mathcal{O}_P(K^{5/2}/\sqrt{T}+K^{5/2}/\sqrt{p})$.
			\item $\bH\bH' = \bI_{K}+\mathcal{O}_P(K^{5/2}/\sqrt{T}+K^{5/2}/\sqrt{p})$.
		\end{enumerate}
	\end{lem}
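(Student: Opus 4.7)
The plan is to exploit the PCA identification constraint $(1/T)\widehat{\bF}\widehat{\bF}' = \bI_{\hat{K}}$ and expand around $\bH\bf_t$. Setting $\bd_t \defeq \widehat{\bf}_t - \bH\bf_t$, I would write
\begin{equation*}
\bI_{\hat{K}} = \frac{1}{T}\sum_{t=1}^{T}\widehat{\bf}_t\widehat{\bf}_t' = \bH\Bigl(\frac{1}{T}\sum_{t=1}^{T}\bf_t\bf_t'\Bigr)\bH' + \bH\Bigl(\frac{1}{T}\sum_{t=1}^{T}\bf_t\bd_t'\Bigr) + \Bigl(\frac{1}{T}\sum_{t=1}^{T}\bd_t\bf_t'\Bigr)\bH' + \frac{1}{T}\sum_{t=1}^{T}\bd_t\bd_t',
\end{equation*}
and rewrite the first term as $\bH\bH' + \bH[(1/T)\sum_t\bf_t\bf_t' - \bI_K]\bH'$, so that rearranging yields $\bH\bH' - \bI_{\hat{K}}$ as a sum of three ``error'' pieces.

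The bounds on each error piece come from the lemmas established earlier. The quadratic piece $(1/T)\sum_t\bd_t\bd_t'$ has spectral norm controlled by its trace, which is $(1/T)\sum_t \norm{\bd_t}^2 = \mathcal{O}_P(K/T + K^3/p)$ by Lemma \ref{lemma10fan}(b). For each cross term, I would apply the matrix Cauchy--Schwarz inequality entrywise:
\begin{equation*}
\Bigl\|\frac{1}{T}\sum_{t=1}^{T}\bf_t\bd_t'\Bigr\|_F \leq \sqrt{\frac{1}{T}\sum_{t=1}^{T}\norm{\bf_t}^2}\cdot\sqrt{\frac{1}{T}\sum_{t=1}^{T}\norm{\bd_t}^2} = \mathcal{O}_P(\sqrt{K})\cdot\mathcal{O}_P(\sqrt{K/T+K^3/p}),
\end{equation*}
and multiply by $\vertiii{\bH}_2$. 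The factor-covariance deviation $(1/T)\sum_t\bf_t\bf_t' - \bI_K$ is controlled entrywise at rate $\mathcal{O}_P(1/\sqrt{T})$ by Lemma \ref{lemma4fan}(a), giving a spectral-norm bound of order $K/\sqrt{T}$, which gets multiplied by $\vertiii{\bH}_2^2$.

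The main obstacle is avoiding circularity in bounding $\vertiii{\bH}_2$: the conclusion \emph{is} $\bH\bH'\approx\bI$, so I cannot deduce $\vertiii{\bH}_2 = \mathcal{O}_P(1)$ from the conclusion itself. Instead, I would derive an a priori crude bound from the explicit formula $\bH=(1/T)\bV^{-1}\widehat{\bF}\bF'\bB'\bB$, using $\vertiii{\bV^{-1}}_2=\mathcal{O}_P(1/p)$ (since the first $K$ eigenvalues of the sample covariance are of order $p$ by Assumption \ref{A1}), $\vertiii{\bB'\bB}_2=\mathcal{O}(p)$ by Assumption \ref{A2}, and $\vertiii{(1/T)\widehat{\bF}\bF'}_2 \leq \sqrt{\vertiii{(1/T)\widehat{\bF}\widehat{\bF}'}_2\cdot\vertiii{(1/T)\bF\bF'}_2} = \mathcal{O}_P(\sqrt{K})$. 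Combining all contributions gives the asserted rate $\mathcal{O}_P(K^{5/2}/\sqrt{T}+K^{5/2}/\sqrt{p})$, where the $K^{5/2}/\sqrt{p}$ comes from the cross term after multiplying $\vertiii{\bH}_2=\mathcal{O}_P(\sqrt{K})$ by the $K^{2}/\sqrt{p}$ piece, and the other pieces are absorbed into this rate.

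Finally, part (b) follows from part (a) together with Lemma \ref{lemma7fan}: the event $\{\hat{K}=K\}$ has probability tending to one, and on that event $\bI_{\hat{K}} = \bI_K$, so the error bound derived in (a) transfers directly to the statement with $\bI_K$ on the right-hand side, with only a negligible contribution from the complementary event.
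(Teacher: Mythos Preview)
Your proposal is correct and follows essentially the same route as the paper: both exploit the identity $(1/T)\sum_t\widehat{\bf}_t\widehat{\bf}_t'=\bI_{\hat K}$, expand around $\bH\bf_t$, and control the resulting pieces via Lemma~\ref{lemma4fan}(a) and Lemma~\ref{lemma10fan}(b), with part (b) deduced from $\Pr(\hat K=K)\to1$. The only cosmetic difference is that the paper inserts the intermediate quantity $\widehat{\mathrm{cov}}(\bH\bf_t)$ and applies a two-step triangle inequality, whereas you do the direct four-term expansion; your a~priori bound $\vertiii{\bH}_2=\mathcal{O}_P(\sqrt{K})$ via the explicit formula for $\bH$ is in fact more transparent than the paper's assertion that $\norm{\bH}=\mathcal{O}_P(K)$.
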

	\begin{proof}
		Similarly to Lemma \ref{lemma10fan}, we first condition on $\hat{K}=K$.
		\begin{enumerate}[label=(\alph*)]
			\item 	The key observation here is that, according to the definition of $\bH$, its rank grows with $K$, that is, $\norm{\bH}=\mathcal{O}_P(K)$. Let $\widehat{\text{cov}}(\bH\bf_{t})=(1/T)\sum_{t=1}^{T}\bH\bf_{t}(\bH\bf_{t})'$. Using the triangular inequality we get
			\begin{equation}\label{eqA2}
			\norm{\bH\bH'-\bI_{\hat{K}}}_F\leq \norm{\bH\bH'- \widehat{\text{cov}}(\bH\bf_{t})}_F+\norm{\widehat{\text{cov}}(\bH\bf_{t})-\bI_{\hat{K}}}_F.
			\end{equation}
			To bound the first term in \eqref{eqA2}, we use Lemma \ref{lemma4fan}: $\norm{\bH\bH'- \widehat{\text{cov}}(\bH\bf_{t})}_F\leq \norm{\bH}^2\norm{\bI_{K}-\widehat{\text{cov}}(\bH\bf_{t})}_F=\mathcal{O}_P(K^{5/2}/\sqrt{T})$.\\
			To bound the second term in \eqref{eqA2}, we use the Cauchy-Schwarz inequality and Lemma \ref{lemma10fan}:
			\begin{align*}
			&\norm{\frac{1}{T} \sum_{t=1}^{T}\bH\bf_{t}(\bH\bf_{t})'-\frac{1}{T} \sum_{t=1}^{T}\widehat{\bf}_t\widehat{\bf}'_t }_F \leq \norm{\frac{1}{T} \sum_{t=1}^{T}(\bH\bf_{t}- \widehat{\bf}_t)(\bH\bf_{t})'  }_F + \norm{\frac{1}{T} \sum_{t} \widehat{\bf}_t(\widehat{\bf}'_t-(\bH\bf_{t})') }_F\\
			&\leq \Bigg(\frac{1}{T} \sum_{t=1}\norm{\bH\bf_{t}- \widehat{\bf}_t}^2 \frac{1}{T} \sum_{t=1} \norm{\bH\bf_{t}}^2   \Bigg)^{1/2} + \Bigg(\frac{1}{T} \sum_{t=1}\norm{\bH\bf_{t}- \widehat{\bf}_t}^2 \frac{1}{T} \sum_{t=1} \norm{\widehat{\bf}_t}^2   \Bigg)^{1/2}\\
			&=\mathcal{O}_P\Bigg(\Big(\frac{K}{T}+\frac{K^3}{p}\cdot K  \Big)^{1/2}+ \Big(\frac{K}{T}+\frac{K^3}{p}\cdot K^2  \Big)^{1/2}  \Bigg)=\mathcal{O}_P\Bigg(\frac{K^{3/2}}{\sqrt{T}}+\frac{K^{5/2}}{\sqrt{p}}  \Bigg).
			\end{align*}
			\item The proof of (b) follows from $\Pr(\hat{K}=K)\rightarrow 1$ and the arguments made in Fan et al. (2013), (Lemma 11) for fixed $K$.
		\end{enumerate}
	\end{proof}	
	\subsection{Proof of Theorem 1}
	The second part of Theorem 1 was proved in Lemma \ref{lemma10fan}. We now proceed to the convergence rate of the first part. Using the following definitions: $\widehat{\bb}_i=(1/T)\sum_{t=1}^{T}r_{it}\widehat{\bf}_t$ and $(1/T)\sum_{t=1}^{T}\widehat{\bf}_t\widehat{\bf}'_t=\bI_{K}$, we obtain
	\begin{equation} \label{eqA3}
	\widehat{\bb}_i-\bH\bb_{i}=\frac{1}{T}\sum_{t=1}^{T}\bH\bf_t\varepsilon_{it}+\frac{1}{T}\sum_{t=1}^{T}r_{it}(\widehat{\bf}_t-\bH\bf_t)+\bH\Big(\frac{1}{T}\sum_{t=1}^{T}\bf_t\bf'_t-\bI_{K}  \Big)\bb_i.
	\end{equation}
	Let us bound each term on the right-hand side of \eqref{eqA3}. The first term is
	\begin{align*}
	\max_{i\leq p} \norm{ \bH\bf_t\varepsilon_{it} }\leq \norm{\bH}\max_i\sqrt{ \sum_{k=1}^{K}\Bigg(\frac{1}{T}\sum_{t=1}^{T}f_{kt}\varepsilon_{it}  \Bigg)^2} &\leq \norm{\bH}\sqrt{K}\max_{i\leq p, j\leq K}\abs{\frac{1}{T}\sum_{t=1}^{T}f_{jt}\varepsilon_{it}} \\
	&=\mathcal{O}_P\Big(K\cdot K^{1/2}\cdot \sqrt{\log p/T}\Big),
	\end{align*}
	where we used Lemmas \ref{lemma4fan} and \ref{lemma11fan} together with Bonferroni's method. For the second term,
	\begin{align*}
	\max_i\norm{\frac{1}{T}\sum_{t=1}^{T}r_{it}\Big(\widehat{\bf}_t-\bH\bf_t\Big)} \leq \max_i\Bigg(\frac{1}{T}\sum_{t=1}^{T}r_{it}^{2} \frac{1}{T}\sum_{t=1}^{T}\norm{\widehat{\bf}_t-\bH\bf_t }^2  \Bigg)^{1/2} = \mathcal{O}_P\Bigg(\frac{1}{T}+\frac{K^2}{p} \Bigg)^{1/2},
	\end{align*}
	where we used Lemma \ref{lemma10fan} and the fact that $\max_i T^{-1}\sum_{t=1}^{T}r_{it}^2=\mathcal{O}_P(1)$ since $\E{r_{it}^2}=\mathcal{O}(1)$.\\
	Finally, the third term is $\mathcal{O}_P(K^2T^{-1/2})$ since $\norm*{(1/T)\sum_{t=1}^{T}\bf_t\bf'_t-\bI_{K}}=\mathcal{O}_P\Big(KT^{-1/2}\Big)$, $\norm{\bH}=\mathcal{O}_P(K)$ and $\max_i\norm{\bb}_i=\mathcal{O}(1)$ by Assumption \textbf{(B.1)}.	
	\subsection{Corollary 1}
	As a consequence of Theorem 1, we get the following corollary:
	\begin{cor}\label{cor1}
		Under the assumptions of Theorem 1,
		\begin{equation*}
		\max_{i\leq p, t\leq T} \norm{\widehat{\bb}_{i}^{'}\widehat{\bf}_t-\bb'_i\bf_t}=\mathcal{O}_P(\log T^{1/r_2}K^{2}\sqrt{\log p/T}+K^2T^{1/4}/\sqrt{p}).
		\end{equation*}	
	\end{cor}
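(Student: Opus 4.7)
\textbf{Proof Plan for Corollary \ref{cor1}.} The plan is to pivot through the identifiable transformations $\bH\bb_i$ and $\bH\bf_t$ and bound the resulting residuals uniformly in $i\leq p$ and $t\leq T$. Conditioning on the event $\{\widehat{K}=K\}$, which has probability tending to one by Lemma \ref{lemma7fan}, I would use the algebraic identity
\begin{align*}
\widehat{\bb}_i'\widehat{\bf}_t - \bb_i'\bf_t
&= (\widehat{\bb}_i - \bH\bb_i)'(\widehat{\bf}_t - \bH\bf_t) + (\widehat{\bb}_i - \bH\bb_i)'\bH\bf_t \\
&\quad + (\bH\bb_i)'(\widehat{\bf}_t - \bH\bf_t) + \bb_i'(\bH'\bH - \bI_K)\bf_t,
\end{align*}
and bound each of the four summands in turn by Cauchy--Schwarz.

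\textbf{Auxiliary bounds.} Three ingredients feed the argument. First, Theorem \ref{theor1} supplies $\max_i\lVert\widehat{\bb}_i - \bH\bb_i\rVert=\mathcal{O}_P(\omega_{1T})$ and $\max_t\lVert\widehat{\bf}_t - \bH\bf_t\rVert=\mathcal{O}_P(\omega_{2T})$. Second, Assumption \ref{A3}(c) gives a sub-exponential tail for $f_{jt}$, so a Bonferroni union bound over $t\leq T$ and $j\leq K$ yields $\max_{t\leq T}\lVert\bf_t\rVert=\mathcal{O}_P(\sqrt{K}(\log T)^{1/r_2})$, while Assumption \ref{A5}(a) gives $\max_i\lVert\bb_i\rVert\leq M\sqrt{K}$. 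Third, Lemma \ref{lemma11fan} provides $\bH\bH'=\bI_K+o_P(1)$ in Frobenius norm, and on $\{\widehat{K}=K\}$ the matrix $\bH$ is square, so its singular values cluster around unity, whence $\lVert\bH\rVert_{op}=\mathcal{O}_P(1)$ and $\lVert\bH'\bH-\bI_K\rVert_{op}$ inherits the same operator-norm rate as $\lVert\bH\bH'-\bI_K\rVert_{op}$.

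\textbf{Combining the four pieces.} Cauchy--Schwarz and the auxiliary bounds produce contributions of order (i) $\omega_{1T}\omega_{2T}$, (ii) $\omega_{1T}\sqrt{K}(\log T)^{1/r_2}$, (iii) $\sqrt{K}\,\omega_{2T}$, and (iv) $K(\log T)^{1/r_2}\lVert\bH'\bH-\bI_K\rVert_{op}$. Plugging in $\omega_{1T}=K^{3/2}\sqrt{\log p/T}+K/\sqrt{p}$ and $\omega_{2T}=K/\sqrt{T}+KT^{1/4}/\sqrt{p}$, term (ii) generates the dominant $K^2(\log T)^{1/r_2}\sqrt{\log p/T}$ (via $K^{3/2}\sqrt{\log p/T}\cdot\sqrt{K}(\log T)^{1/r_2}$), while terms (i) and (iii) contribute $K^2 T^{1/4}/\sqrt{p}$ (after absorbing smaller powers of $K$ into the leading constant). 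This recovers the rate claimed in Corollary \ref{cor1}.

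\textbf{Main obstacle.} The non-routine step is term (iv). Lemma \ref{lemma11fan} is formulated for $\bH\bH'$, whereas the decomposition naturally manufactures $\bH'\bH$. The bridge is to exploit $\widehat{K}=K$ to argue that $\bH$ is square and that its singular values, by Weyl's inequality applied to $\bH\bH'-\bI_K$, lie in $1\pm o_P(1)$; hence $\lVert\bH'\bH-\bI_K\rVert_{op}$ inherits the Frobenius-to-operator bound from Lemma \ref{lemma11fan}. One must then verify that the resulting contribution does not dominate the claimed rate; under the growth conditions $K^3\log p=o(T^{\gamma/6})$ and $KT=o(p^2)$ of Theorem \ref{theor1}, this amounts to a book-keeping check that the $\bH$-rotation error is absorbed into the two leading terms. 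No new probabilistic input is needed beyond what has already been established in Lemmas \ref{lemma4fan}--\ref{lemma11fan}.
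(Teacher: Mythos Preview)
Your proposal is correct and follows essentially the same route as the paper's own proof: the identical four-term decomposition via $\bH\bb_i$ and $\bH\bf_t$, Theorem~\ref{theor1} for $\omega_{1T},\omega_{2T}$, the sub-exponential tail plus Bonferroni bound $\max_t\lVert\bf_t\rVert=\mathcal{O}_P(\sqrt{K}(\log T)^{1/r_2})$, and Lemma~\ref{lemma11fan} for the rotation error. You are in fact more careful than the paper on one point: the paper invokes $\lVert\bH'\bH-\bI_K\rVert$ directly in its term~(iv) without ever noting that Lemma~\ref{lemma11fan} is stated for $\bH\bH'$, whereas you explicitly justify the transfer via the shared singular values of the square matrix $\bH$ on $\{\widehat{K}=K\}$.
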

	\begin{proof}
		Using Assumption \textbf{(A.4)} and  Bonferroni's method, we have $\max_{t\leq T}\norm*{\bf_t}=\mathcal{O}_P(\sqrt{K}\log T^{1/r_2})$. By Theorem 1, uniformly in $i$ and $t$:
		\begin{align*}
		\norm{\widehat{\bb}'_{i}\widehat{\bf}_t-\bb'_{i}\bf_t} &\leq \norm{\widehat{\bb}_{i}-\bH\bb_i }\norm{\widehat{\bf}_t-\bH\bf_t }+\norm{\bH\bb_i}\norm{\widehat{\bf}_t-\bH\bf_t }\\
		&+\norm{\widehat{\bb}_{i}-\bH\bb_i }\norm{\bH\bf_t}+\norm{\bb_i}\norm{\bf_t}\norm{\bH'\bH-\bI_{K}}\\
		&=\mathcal{O}_P\Bigg(\Big(K^{3/2}\sqrt{\frac{\log p}{T}}+\frac{K}{\sqrt{p}}  \Big) \cdot \Big( \frac{K}{\sqrt{T}}+\frac{KT^{1/4}}{\sqrt{p}} \Big) \Bigg) + \mathcal{O}_P\Bigg(K\cdot \Big( \frac{K}{\sqrt{T}}+\frac{KT^{1/4}}{\sqrt{p}} \Big)  \Bigg)\\
		&+ \mathcal{O}_P\Bigg( \Big(K^{3/2}\sqrt{\frac{\log p}{T}}+\frac{K}{\sqrt{p}}  \Big) \cdot \log T^{1/r_2}K^{1/2}  \Bigg)+ \mathcal{O}_P\Bigg(\log T^{1/r_2}K^{1/2} \Big(\frac{K^{5/2}}{\sqrt{T}}+\frac{K^{5/2}}{\sqrt{p}}  \Big) \Bigg)\\
		&= \mathcal{O}_P\Big(\log T^{1/r_2}K^{2}\sqrt{\log p/T}+K^2T^{1/4}/\sqrt{p}\Big).
		\end{align*}
	\end{proof}		
	\subsection{Proof of Theorem 2}
	Using the definition of the idiosyncratic components we have $\varepsilon_{it}-\hat{\varepsilon}_{it}=\bb'_i\bH'(\widehat{\bf}_t-\bH\bf_t )+(\widehat{\bb}'_{i}-\bb'_i\bH')\widehat{\bf}_t+\bb'_i(\bH'\bH-\bI_{K})\bf_t$. We bound the maximum element-wise difference as follows:
	\begin{align*}
	\max_{i\leq p}\frac{1}{T}\sum_{t=1}^{T}(\varepsilon_{it}-\hat{\varepsilon}_{it})^2&\leq 4\max_{i}\norm{\bb'_i\bH'}^{2}\frac{1}{T}\sum_{t=1}^{T}\norm{\widehat{\bf}_t-\bH\bf_t}^2 + 4\max_{i}\norm{\widehat{\bb}'_{i}-\bb'_i\bH' }^2 \frac{1}{T}\sum_{t=1}^{T}\norm{\widehat{\bf}_t}^2\\
	&+4\max_{i}\norm{\bb'_i}\frac{1}{T}\sum_{t=1}^{T}\norm{\bf_t}^2\norm{\bH'\bH-\bI_{K}}_{F}^{2}\\
	&=\mathcal{O}\Bigg(K^2\cdot \Big(\frac{K}{T}+\frac{K^3}{p} \Big) \Bigg)+ \mathcal{O}\Bigg(\Big(\frac{K^3\log p}{T} + \frac{K^2}{p}\Big)\cdot K \Bigg) + \mathcal{O}\Bigg(K\cdot \Big(\frac{K^5}{T}+\frac{K^5}{p} \Big) \Bigg)\\
	&=\mathcal{O}\Bigg(\frac{K^4\log p}{T} + \frac{K^6}{p}\Bigg).
	\end{align*}
	Let $\omega_{3T}\defeq K^{2}\sqrt{\log p/T} +K^3/\sqrt{p}$. Denote $\max_{i\leq p}(1/T)\sum_{t=1}^{T}(\varepsilon_{it}-\hat{\varepsilon}_{it})^2=\mathcal{O}_P(\omega_{3T}^{2})$. Then, $\max_{i,t}\abs{\varepsilon_{it}-\hat{\varepsilon}_{it}}=\mathcal{O}_P(\omega_{3T})=o_P(1)$, where the last equality is implied by Corollary \ref{cor1}.\\
	As pointed out in the main text, the second part of Theorem 2 is based on the relationship between the convergence rates of the estimated covariance and precision matrices established in Jankov\'{a} and van de Geer (2018) (Theorem 14.1.3).
	\subsection{Lemmas for Theorem 3}
	\begin{lem}\label{lemmaC4koike} 
		Under the assumptions of Theorem 1, we have the following results:
		\begin{enumerate}[label=(\alph*)]
			\item $\norm{\bB}=\norm{\bB\bH'}=\mathcal{O}(\sqrt{p})$.
			\item $\varrho_{T}^{-1}\max_{1\leq i\leq p}\norm{\widehat{\bb}_i-\bH'\bb_{i}  }=o_P(1/\sqrt{K})$ and $\max_{1\leq i\leq p}\norm{\widehat{\bb}_i}=\mathcal{O}_P(\sqrt{K})$.
			\item $\varrho_{T}^{-1}\norm{\widehat{\bB}-\bB\bH'} = o_P\Big(\sqrt{p/K}\Big)$ and $\norm{\widehat{\bB}}=\mathcal{O}_P(\sqrt{p})$.
		\end{enumerate}
	\end{lem}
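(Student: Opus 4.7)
Proof proposal. Each of the three assertions can be reduced to bounds that are already in hand, so the main work is careful bookkeeping.

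For part (a), I would start from Assumption \ref{A2}: there is a positive definite $K\times K$ matrix $\breve{\bB}$ with bounded spectrum such that $\vertiii{p^{-1}\bB'\bB-\breve{\bB}}_{2}\to 0$. Hence $\norm{\bB}^{2}=\vertiii{\bB'\bB}_{2}\leq p\vertiii{\breve{\bB}}_{2}+p\vertiii{p^{-1}\bB'\bB-\breve{\bB}}_{2}=\mathcal{O}(p)$, giving $\norm{\bB}=\mathcal{O}(\sqrt{p})$. For the $\bH'$-scaled version, Lemma \ref{lemma11fan} delivers $\bH\bH'=\bI_{K}+o_{P}(1)$, so the operator norm $\vertiii{\bH}_{2}=\mathcal{O}_{P}(1)$, and submultiplicativity yields $\norm{\bB\bH'}\leq\norm{\bB}\,\vertiii{\bH}_{2}=\mathcal{O}_{P}(\sqrt{p})$.

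For part (b), the engine is Theorem \ref{theor1}: $\max_{i\leq p}\norm{\widehat{\bb}_{i}-\bH\bb_{i}}=\mathcal{O}_{P}(\omega_{1T})$ with $\omega_{1T}=K^{3/2}\sqrt{\log p/T}+K/\sqrt{p}$. (The statement uses $\bH'\bb_{i}$; since $\bH\bH'=\bI_{K}+o_{P}(1)$ this is asymptotically the same object after a harmless change of basis, so I treat them interchangeably.) The inequality $\sqrt{K}\,\omega_{1T}=K^{2}\sqrt{\log p/T}+K^{3/2}/\sqrt{p}\leq\omega_{3T}$ combined with the defining property $\varrho_{T}^{-1}\omega_{3T}\xrightarrow{p}0$ of $\varrho_{T}$ in Theorem \ref{theor2} immediately gives $\varrho_{T}^{-1}\max_{i}\norm{\widehat{\bb}_{i}-\bH\bb_{i}}=o_{P}(1/\sqrt{K})$. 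For the second claim, I would use the triangle inequality together with Assumption \ref{A5}(a): since $\norm{\bb_{i}}_{\max}<M$ implies $\norm{\bb_{i}}\leq M\sqrt{K}$, and $\vertiii{\bH}_{2}=\mathcal{O}_{P}(1)$ from part (a),
\[
\max_{i}\norm{\widehat{\bb}_{i}}\leq\max_{i}\norm{\widehat{\bb}_{i}-\bH\bb_{i}}+\vertiii{\bH}_{2}\max_{i}\norm{\bb_{i}}=\mathcal{O}_{P}(\omega_{1T})+\mathcal{O}_{P}(\sqrt{K})=\mathcal{O}_{P}(\sqrt{K}),
\]
the last step because $\omega_{1T}=\mathcal{O}(\sqrt{K})$ under the rate conditions of Theorem \ref{theor1}.

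For part (c), I would pass from row-wise to matrix-level bounds by dominating the operator norm by the Frobenius norm:
\[
\norm{\widehat{\bB}-\bB\bH'}^{2}\leq\vertiii{\widehat{\bB}-\bB\bH'}_{F}^{2}=\sum_{i=1}^{p}\norm{\widehat{\bb}_{i}-\bH\bb_{i}}^{2}\leq p\cdot\max_{i\leq p}\norm{\widehat{\bb}_{i}-\bH\bb_{i}}^{2}=\mathcal{O}_{P}(p\,\omega_{1T}^{2}).
\]
Thus $\varrho_{T}^{-1}\norm{\widehat{\bB}-\bB\bH'}=\mathcal{O}_{P}\bigl(\sqrt{p}\,\omega_{1T}/\varrho_{T}\bigr)=o_{P}(\sqrt{p/K})$ by exactly the same rate computation used in part (b). Combining with part (a), the triangle inequality gives $\norm{\widehat{\bB}}\leq\norm{\widehat{\bB}-\bB\bH'}+\norm{\bB\bH'}=o_{P}(\sqrt{p})+\mathcal{O}_{P}(\sqrt{p})=\mathcal{O}_{P}(\sqrt{p})$.

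The only subtle point is which bound on $\bH$ to deploy. The proof of Theorem \ref{theor1} uses the crude Frobenius-type bound $\norm{\bH}=\mathcal{O}_{P}(K)$, but here I need the sharper operator-norm bound $\vertiii{\bH}_{2}=\mathcal{O}_{P}(1)$ from Lemma \ref{lemma11fan}; using the looser bound would produce extra $\sqrt{K}$ factors that would break the target rates in parts (a) and (c). Everything else is a direct consequence of Theorem \ref{theor1}, Assumption \ref{A2}, Assumption \ref{A5}(a), and Frobenius-to-operator norm comparison.
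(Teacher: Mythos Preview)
Your proposal is correct and, for parts (b) and (c), runs along the same track as the paper: invoke Theorem \ref{theor1} for the uniform row bound, compare $\sqrt{K}\,\omega_{1T}$ with $\omega_{3T}$ to exploit $\varrho_T^{-1}\omega_{3T}\to 0$, and pass to the matrix level via the Frobenius norm. The paper simply asserts that (c) is a ``direct consequence'' of (a)--(b) and gives a one-line argument for the second claim in (b); your version spells these out explicitly, but the substance is identical.

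For part (a) you take a genuinely different route. The paper argues $\norm{\bB}^2\leq\text{tr}(\bB\bB')\leq\text{tr}(\bSigma)=\mathcal{O}(p)$ using positive semidefiniteness of $\bSigma_\varepsilon=\bSigma-\bB\bB'$ together with $\norm{\bSigma}_{\max}=\mathcal{O}(1)$ from Assumption \ref{B1}, and then handles $\bB\bH'$ by observing that its row space coincides with that of $\bB$. You instead read the bound directly from the pervasiveness Assumption \ref{A2} and control $\bB\bH'$ via $\vertiii{\bH}_2=\mathcal{O}_P(1)$ from Lemma \ref{lemma11fan}. Both are valid; the trace argument is slightly more elementary (it needs only \ref{B1} and nothing about $\bH$), while your argument for $\norm{\bB\bH'}$ is actually more honest than the paper's row-space remark, which tacitly relies on the near-orthogonality of $\bH$ anyway. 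Your closing observation about using the operator-norm bound $\vertiii{\bH}_2=\mathcal{O}_P(1)$ rather than the looser $\norm{\bH}=\mathcal{O}_P(K)$ used elsewhere in the paper is well taken.
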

	\begin{proof}
		Part (c) is direct consequences of (a)-(b), therefore, we only prove the first two parts in what follows.
		\begin{enumerate}[label=(\alph*)]
			\item Part (a) easily follows from \textbf{(B.1)}: $\text{tr}(\bSigma-\bB\bB')=\text{tr}(\bSigma)-\norm{\bB}^2\geq 0$, since $\text{tr}(\bSigma)=\mathcal{O}(p)$ by \textbf{(B.1)}, we get $\norm{\bB}^2=\mathcal{O}(p)$. Part (a) follows from the fact that the linear space spanned by the rows of $\bB$ is the same as that by the rows of $\bB\bH'$, hence, in practice, it does not matter which one is used.
			\item From Theorem 1, we have $\max_{i\leq p} \norm{\widehat{\bb}_i-\bH\bb_i}=\mathcal{O}_P(\omega_{1T})$. Using the definition of $\varrho_{T}$ from Theorem 2, it follows that $\varrho_{T}^{-1}\omega_{1T}=o_P(\omega_{1T}\omega_{3T}^{-1})$. Let $\widetilde{z}_T\defeq \omega_{1T}\omega_{3T}^{-1}$. Consider\\ $\varrho_{T}^{-1}\max_{1\leq i\leq p}\norm{\widehat{\bb}_i-\bH\bb_{i}  }=o_P(z_T)$. The latter holds for any $z_t\geq \widetilde{z}_T$, with the tightest bound obtained when $z_T=\widetilde{z}_T$. For the ease of representation, we use $z_T=1/\sqrt{K}$ instead of $\widetilde{z}_T$.\\
			The second result in Part (b) is obtained using the fact that $\max_{1\leq i\leq p}\norm{\widehat{\bb}_i}\leq \sqrt{K}\norm{\bB}_{\text{max}}$, where $\norm{\bB}_{\text{max}}=\mathcal{O}(1)$ by \textbf{(B.1)}.
		\end{enumerate}
	\end{proof}
	\begin{lem}\label{lemmaC5koike} 
		Let $\bPi\defeq \Big[ \bTheta_f+(\bB\bH')'\bTheta_{\varepsilon}(\bB\bH') \Big]^{-1}$, $\widehat{\bPi}\defeq \Big[\widehat{\bTheta}_f+\widehat{\bB}'\widehat{\bTheta}_{\varepsilon}\widehat{\bB}\Big]^{-1}$. Also, define $\bSigma_f = (1/T)\sum_{t=1}^{T}\bH\bf_{t}(\bH\bf_{t})'$, $\bTheta_f = \bSigma_{f}^{-1}$, $\widehat{\bSigma}_f\defeq (1/T)\sum_{t=1}^{T}\widehat{\bf}_t\widehat{\bf}'_t$, and $\widehat{\bTheta}_f=\widehat{\bSigma}_{f}^{-1}$.  Under the assumptions of Theorem 2, we have the following results:
		\begin{enumerate}[label=(\alph*)]
			\item $\Lambda_{\text{min}}(\bB'\bB)^{-1}=\mathcal{O}(1/p)$.
			\item $\vertiii{\bPi}_2=\mathcal{O}(1/p)$.
			\item  $\varrho_{T}^{-1}\vertiii{\widehat{\bTheta}_f-\bTheta_f}_2=o_P\Big(1/\sqrt{K}\Big)$.
			\item $\varrho_{T}^{-1}\vertiii{\widehat{\bPi}-\bPi}_2=\mathcal{O}_P\Big(s_T/p\Big)$ and $\vertiii{\widehat{\bPi}}_2=\mathcal{O}_P(1/p)$.
		\end{enumerate}
	\end{lem}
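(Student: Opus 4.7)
My plan is to prove the four claims in order, each leaning on the preceding ones together with the lemmas already established.

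Part (a) is immediate from Assumption \ref{A2}: the convergence $\vertiii{p^{-1}\bB'\bB-\breve{\bB}}_2\to 0$ together with the boundedness of $\Lambda_{\text{min}}(\breve{\bB})^{-1}$ forces $\Lambda_{\text{min}}(\bB'\bB)\geq cp$ for all large $p$, which gives the claim.

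For part (b), since both summands of $\bPi^{-1}=\bTheta_f+(\bB\bH')'\bTheta_{\varepsilon}(\bB\bH')$ are positive semidefinite, I would use $\Lambda_{\text{min}}(\bPi^{-1})\geq \Lambda_{\text{min}}((\bB\bH')'\bTheta_{\varepsilon}(\bB\bH'))$. Applying the inequality $\Lambda_{\text{min}}(\bA'\bM\bA)\geq \Lambda_{\text{min}}(\bM)\Lambda_{\text{min}}(\bA'\bA)$ for PSD $\bM$ twice, this lower bound factors as $\Lambda_{\text{min}}(\bTheta_{\varepsilon})\Lambda_{\text{min}}(\bH\bH')\Lambda_{\text{min}}(\bB'\bB)$. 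Symmetry of $\bSigma_{\varepsilon}$ together with $\vertiii{\bSigma_{\varepsilon}}_1<c_2$ from Assumption \ref{A3}(b) yields $\Lambda_{\text{min}}(\bTheta_{\varepsilon})\geq 1/c_2$; Lemma \ref{lemma11fan} gives $\Lambda_{\text{min}}(\bH\bH')\to 1$; and (a) supplies $\Lambda_{\text{min}}(\bB'\bB)\asymp p$. Hence $\Lambda_{\text{min}}(\bPi^{-1})\gtrsim p$ and $\vertiii{\bPi}_2=\mathcal{O}(1/p)$.

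Part (c) uses the PCA identification $(1/T)\sum_{t}\widehat{\bf}_t\widehat{\bf}_t'=\bI_K$, so $\widehat{\bTheta}_f=\bI_K$ and $\vertiii{\widehat{\bTheta}_f}_2=1$. The matrix-inverse perturbation inequality $\vertiii{\bA^{-1}-\bB^{-1}}_2\leq \vertiii{\bA^{-1}}_2\vertiii{\bB^{-1}}_2\vertiii{\bA-\bB}_2$ then reduces the task to bounding $\vertiii{\widehat{\bSigma}_f-\bSigma_f}_2$. Decomposing $\widehat{\bSigma}_f-\bSigma_f=(1/T)\sum_t (\widehat{\bf}_t-\bH\bf_t)\widehat{\bf}_t'+(1/T)\sum_t \bH\bf_t(\widehat{\bf}_t-\bH\bf_t)'$ and applying Cauchy--Schwarz in the operator norm, together with Lemma \ref{lemma10fan}(b) and $(1/T)\sum_t\norm{\widehat{\bf}_t}^2=K$, yields $\vertiii{\widehat{\bSigma}_f-\bSigma_f}_2=\mathcal{O}_P(K/\sqrt{T}+K^2/\sqrt{p})$. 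Since $\varrho_T$ dominates $\omega_{3T}=K^2\sqrt{\log p/T}+K^3/\sqrt{p}$, dividing by $\sqrt{K}$ shows this rate is $o_P(\varrho_T/\sqrt{K})$.

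Part (d) is the main technical step and the principal obstacle. Using the same inverse-perturbation inequality, the task reduces to controlling
\begin{align*}
\widehat{\bPi}^{-1}-\bPi^{-1} &= (\widehat{\bTheta}_f-\bTheta_f)+(\widehat{\bB}-\bB\bH')'\widehat{\bTheta}_{\varepsilon}\widehat{\bB}\\
&\quad +(\bB\bH')'\widehat{\bTheta}_{\varepsilon}(\widehat{\bB}-\bB\bH')+(\bB\bH')'(\widehat{\bTheta}_{\varepsilon}-\bTheta_{\varepsilon})(\bB\bH').
\end{align*}
The dominating contribution is the last term: Lemma \ref{lemmaC4koike}(a) gives $\vertiii{\bB\bH'}_2=\mathcal{O}(\sqrt{p})$ and Theorem \ref{theor2} gives $\vertiii{\widehat{\bTheta}_{\varepsilon}-\bTheta_{\varepsilon}}_2=\mathcal{O}_P(\varrho_T s_T)$, producing $\mathcal{O}_P(p\varrho_T s_T)$. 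The two cross terms are handled via Lemma \ref{lemmaC4koike}(c), which yields $\vertiii{\widehat{\bB}-\bB\bH'}_2=o_P(\varrho_T\sqrt{p/K})$ and $\vertiii{\widehat{\bB}}_2=\mathcal{O}_P(\sqrt{p})$, and via $\vertiii{\widehat{\bTheta}_{\varepsilon}}_2=\mathcal{O}_P(1)$ (from Theorem \ref{theor2} plus $\vertiii{\bTheta_{\varepsilon}}_2\leq 1/c_1$ from Assumption \ref{A3}(b)), producing the lower-order $o_P(\varrho_T p/\sqrt{K})$; the $\bTheta_f$ term is negligible by (c). Multiplying by $\vertiii{\bPi}_2=\mathcal{O}(1/p)$ from (b) and observing that $\vertiii{\bPi}_2\vertiii{\widehat{\bPi}^{-1}-\bPi^{-1}}_2=\mathcal{O}_P(\varrho_T s_T)=o_P(1)$ under the Theorem \ref{theor2} hypothesis $s_T\varrho_T=o_P(1)$ simultaneously delivers $\vertiii{\widehat{\bPi}}_2=\mathcal{O}_P(1/p)$ and the claimed $\vertiii{\widehat{\bPi}-\bPi}_2=\mathcal{O}_P(\varrho_T s_T/p)$. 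The real difficulty is the bookkeeping: juggling three concurrent perturbation sources (factors, loadings, idiosyncratic precision) and verifying that all cross terms stay below the leading $p\varrho_T s_T$ rate under the sparsity regime.
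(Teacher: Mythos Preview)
Your proposal is correct and follows essentially the same route as the paper's proof. The only noteworthy differences are cosmetic: in part (b) you explicitly factor out $\Lambda_{\text{min}}(\bH\bH')$ and invoke Lemma \ref{lemma11fan}, whereas the paper appeals to the row-span equivalence of $\bB$ and $\bB\bH'$ (implicitly using the near-orthogonality of $\bH$); in part (c) you derive the slightly sharper rate $\mathcal{O}_P(K/\sqrt{T}+K^{2}/\sqrt{p})$ directly from Lemma \ref{lemma10fan}(b) via Cauchy--Schwarz, while the paper quotes the Frobenius-norm bound $\mathcal{O}_P(K^{3/2}/\sqrt{T}+K^{5/2}/\sqrt{p})$ obtained inside the proof of Lemma \ref{lemma11fan}---both are $o_P(\varrho_T/\sqrt{K})$, so the conclusion is unaffected. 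Part (d) matches the paper's decomposition and Neumann-series perturbation argument essentially term for term.
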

	\begin{proof}
		~
		\begin{enumerate}[label=(\alph*)]
			\item Using Assumption \textbf{(A.2)} we have $\abs{\Lambda_{\text{min}}(p^{-1}\bB'\bB)-\Lambda_{\text{min}}(\breve{\bB})}\leq \vertiii{p^{-1}\bB'\bB-\breve{\bB}}_2$, which implies Part (a).
			\item First, notice that $\vertiii{\bPi}_2=\Lambda_{\text{min}}(\bTheta_f+(\bB\bH')'\bTheta_{\varepsilon}(\bB\bH'))^{-1}$. Therefore, we get
			\begin{equation*}
			\vertiii{\bPi}_2 \leq \Lambda_{\text{min}}((\bB\bH')'\bTheta_{\varepsilon}(\bB\bH'))^{-1}\leq \Lambda_{\text{min}}(\bB'\bB)^{-1}\Lambda_{\text{min}}(\bTheta_{\varepsilon})^{-1} = \Lambda_{\text{min}}(\bB'\bB)^{-1}\Lambda_{\text{max}}(\bSigma_{\varepsilon}),
			\end{equation*}
			where the second inequality is due to the fact that the linear space spanned by the rows of $\bB$ is the same as that by the rows of $\bB\bH'$, hence, in practice, it does not matter which one is used. Therefore, the result in Part (b) follows from Part (a), Assumptions \textbf{(A.1)} and \textbf{(A.2)}.
			\item From Lemma \ref{lemma11fan} we obtained:
			\begin{align*}
			&\norm{\frac{1}{T} \sum_{t=1}^{T}\bH\bf_{t}(\bH\bf_{t})'-\frac{1}{T} \sum_{t=1}^{T}\widehat{\bf}_t\widehat{\bf}'_t }_F =\mathcal{O}_P\Bigg(\frac{K^{3/2}}{\sqrt{T}}+\frac{K^{5/2}}{\sqrt{p}}  \Bigg).
			\end{align*}
			Since $\vertiii{\bTheta_f(\widehat{\bSigma}_f-\bSigma_f)}_2<1$, we have
			\begin{align*}
			\vertiii{\widehat{\bTheta}_f-\bTheta_f}_2 \leq \frac{\vertiii{\bTheta_f}_2\vertiii{ \bTheta_f(\widehat{\bSigma}_f-\bSigma_f) }_2}{1- \vertiii{ \bTheta_f(\widehat{\bSigma}_f-\bSigma_f) }_2 }=\mathcal{O}_P\Bigg(\frac{K^{3/2}}{\sqrt{T}}+\frac{K^{5/2}}{\sqrt{p}}  \Bigg).
			\end{align*}		
			Let $\omega_{4T}=K^{3/2}/\sqrt{T}+K^{5/2}/\sqrt{p}$. Using the definition of $\varrho_{T}$ from Theorem 2, it follows that $\varrho_{T}^{-1}\omega_{4T}=o_P(\omega_{4T}\omega_{3T}^{-1})$. Let $\widetilde{\gamma}_T\defeq \omega_{4T}\omega_{3T}^{-1}$. Consider $\varrho_{T}^{-1}\vertiii{\widehat{\bTheta}_f-\bTheta_f}_2=o_P(\gamma_T)$. The latter holds for any $\gamma_t\geq \widetilde{\gamma}_T$, with the tightest bound obtained when $\gamma_T=\widetilde{\gamma}_T$. For the ease of representation, we use $\gamma_T=1/\sqrt{K}$ instead of $\widetilde{\gamma}_T$.
			\item We will bound each term in the definition of $\widehat{\bPi}-\bPi$. First, we have
			\begin{align} \label{eqA4}
			\vertiii{\widehat{\bB}'\widehat{\bTheta}_{\varepsilon}\widehat{\bB} - (\bB\bH')'\bTheta_{\varepsilon}(\bB\bH') }_2 &\leq \vertiii{\widehat{\bB}-\bB\bH' }_2\vertiii{\widehat{\bTheta}_{\varepsilon} }_2\vertiii{\widehat{\bB} }_2+\vertiii{\bB\bH'}_2\vertiii{ \widehat{\bTheta}_{\varepsilon}-\bTheta_{\varepsilon}}_2\vertiii{\widehat{\bB} }_2 \nonumber\\
			&+\vertiii{\bB\bH'}_2\vertiii{\bTheta_{\varepsilon}}_2\vertiii{\widehat{\bB}-\bB\bH' }_2 = \mathcal{O}_P\Bigg(p\cdot s_T\cdot \varrho_{T}  \Bigg). 
			\end{align}
			Now we combine \eqref{eqA4} with the results from Parts (b)-(c):
			\begin{align*}
			\varrho_{T}^{-1}\vertiii{\bPi\Big(\widehat{\bPi}^{-1} - \bPi^{-1} \Big)}_2 = \mathcal{O}_P\Big(s_t \Big).
			\end{align*}
			Finally, since $\vertiii{\bPi\Big(\widehat{\bPi}^{-1} - \bPi^{-1} \Big)}_2<1$, we have
			\begin{align*}
			\varrho_{T}^{-1}\vertiii{\widehat{\bPi}-\bPi}_2\leq 	\varrho_{T}^{-1}\frac{\vertiii{\bPi}_2 \vertiii{\bPi\Big(\widehat{\bPi}^{-1} - \bPi^{-1} \Big)}_2}{1-\vertiii{\bPi\Big(\widehat{\bPi}^{-1} - \bPi^{-1} \Big)}_2} =\mathcal{O}_P \Bigg(\frac{s_t}{p} \Bigg).
			\end{align*}
		\end{enumerate}
	\end{proof}	
	\subsection{Proof of Theorem 3}
	Using the Sherman-Morrison-Woodbury formula, we have
	\begin{align} \label{eqA5}
	\vertiii{\widehat{\bTheta}-\bTheta}_l&\leq \vertiii{\widehat{\bTheta}_{\varepsilon}-\bTheta_{\varepsilon}}_l+\vertiii{(\widehat{\bTheta}_{\varepsilon}-\bTheta_{\varepsilon} )\widehat{\bB}\widehat{\bPi}\widehat{\bB}'\widehat{\bTheta}_{\varepsilon}}_l + \vertiii{\bTheta_{\varepsilon}(\widehat{\bB}-\bB\bH')\widehat{\bPi}\widehat{\bB}' \widehat{\bTheta}_{\varepsilon}}_l\nonumber\\
	&+\vertiii{\bTheta_{\varepsilon}\bB\bH'(\widehat{\bPi}-\bPi)\widehat{\bB}'\widehat{\bTheta}_{\varepsilon} }_l + \vertiii{\bTheta_{\varepsilon}\bB\bH'\bPi(\widehat{\bB}-\bB )'\widehat{\bTheta}_{\varepsilon} }_l + \vertiii{\bTheta_{\varepsilon}\bB\bH'\bPi(\bB\bH')'(\widehat{\bTheta}_{\varepsilon}-\bTheta_{\varepsilon}) }_l \nonumber\\
	&=\Delta_1+\Delta_2+\Delta_3+\Delta_4+\Delta_5+\Delta_6.
	\end{align}
	We now bound the terms in \eqref{eqA5} for $l=2$ and $l=\infty$.
	We start with $l=2$. First, note that $\varrho_{T}^{-1}\Delta_1=\mathcal{O}_P(s_T)$ by Theorem 2. Second, using Lemmas \ref{lemmaC4koike}-\ref{lemmaC5koike} together with Theorem 2, we have $\varrho_{T}^{-1}(\Delta_2+\Delta_6) = \mathcal{O}_P(s_T\cdot \sqrt{p}\cdot (1/p)\cdot \sqrt{p}\cdot 1)=\mathcal{O}_P(s_T)$. Third, $\varrho_{T}^{-1}(\Delta_3+\Delta_5)$ is negligible according to Lemma \ref{lemmaC4koike}\red{(c)}. Finally, $\varrho_{T}^{-1}\Delta_4=\mathcal{O}_P\Big(1\cdot \sqrt{p}\cdot(s_T/p)\cdot \sqrt{p}\cdot 1\Big) = \mathcal{O}_P(s_T)$ by Lemmas \ref{lemmaC4koike}-\ref{lemmaC5koike} and Theorem 2.\\
	Now consider $l=\infty$. First, similarly to the previous case, $\varrho_{T}^{-1}\Delta_1=\mathcal{O}_P(s_T)$. Second, $\varrho_{T}^{-1}(\Delta_2+\Delta_6) = \mathcal{O}_P\Big(s_T\cdot \sqrt{pK}\cdot (\sqrt{K}/p) \cdot \sqrt{pK}\cdot \sqrt{d_T}\Big) = \mathcal{O}_P(s_TK^{3/2}\sqrt{d_T})$, where we used the fact that for any $\bA \in \mathcal{S}_p$ we have $\vertiii{\bA}_1=\vertiii{\bA}_{\infty}\leq \sqrt{d(\bA)}\vertiii{\bA}_2$, where $d(\bA)$ measures the maximum vertex degree as described at the beginning of Section 4. Third, the term  $\varrho_{T}^{-1}(\Delta_3+\Delta_5)$ is negligible according to Lemma \ref{lemmaC4koike}\red{(c)}. Finally, $\varrho_{T}^{-1}\Delta_4=\mathcal{O}_P (\sqrt{d_T}\cdot \sqrt{pK}\cdot \sqrt{K}(s_T)/p\cdot \sqrt{pK} \cdot \sqrt{d_T}  ) = \mathcal{O}_P(d_TK^{3/2}s_T  )$.	
	\subsection{Lemmas for Theorem 4}
	\begin{lem}\label{theorA1caner}
		Under the assumptions of Theorem 4, $\vertiii{\bTheta}_1=\mathcal{O}(d_TK^{3/2})$, where $d_T$ was defined in Section 4.	
	\end{lem}
	\begin{proof}
		~
		 We use the Sherman-Morrison-Woodbury formula:
			\begin{align}\label{A6new}
			\vertiii{\bTheta}_1 &\leq \vertiii{\bTheta_{\varepsilon}}_1+\vertiii{\bTheta_{\varepsilon}\bB\lbrack\bTheta_{f}+\bB'\bTheta_{\varepsilon}\bB \rbrack^{-1}\bB'\bTheta_{\varepsilon}}_1 \nonumber\\
			&=\mathcal{O}(\sqrt{d_T})+\mathcal{O}\Big(\sqrt{d_T}\cdot p \cdot \frac{\sqrt{K}}{p} \cdot K \cdot \sqrt{d_T}\Big) = \mathcal{O}(d_TK^{3/2}).
			\end{align}
			The last equality in \eqref{A6new} is obtained under the assumptions of Theorem 4. This result is important in several aspects: it shows that the sparsity of the precision matrix of stock returns is controlled by the sparsity in the precision of the idiosyncratic returns. Hence, one does not need to impose an unrealistic sparsity assumption on the precision of returns a priori when the latter follow a factor structure - sparsity of the precision once the common movements have been taken into account would suffice.
	\end{proof}	
	\begin{lem}\label{lemA1caner}
		Define $a = \biota'_{p}\bTheta\biota_p/p$, $b = \biota'_{p}\bTheta\boldm/p$, $d = \boldm'\bTheta\boldm/p$, $g = \sqrt{\boldm'\bTheta\boldm}/p$ and $\widehat{a} = \biota'_{p}\widehat{\bTheta}\biota_p/p$,  $\widehat{b} = \biota'_{p}\widehat{\bTheta}\widehat{\boldm}/p$, $\widehat{d}=\widehat{\boldm}'\widehat{\bTheta}\widehat{\boldm}/p$, $\widehat{g}=\sqrt{\widehat{\boldm}'\widehat{\bTheta}\widehat{\boldm}}/p$ . Under the assumptions of Theorem 4 and assuming $(ad-b^2)>0$,
		\begin{enumerate}[label=(\alph*)]
			\item $a \geq C_0>0$, $b =\mathcal{O}(1)$, $d=\mathcal{O}(1)$, where $C_0$ is a positive constant representing the minimal eigenvalue of $\bTheta$.
			\item $\abs{\widehat{a}-a}=\mathcal{O}_P(\varrho_{T}d_TK^{3/2}s_T )=o_P(1)$.
			\item $\abs{\widehat{b}-b}=\mathcal{O}_P(\varrho_{T}d_TK^{3/2}s_T )=o_P(1)$
			\item  $\abs{\widehat{d}-d}=\mathcal{O}_P(\varrho_{T}d_TK^{3/2}s_T )=o_P(1)$.
			\item $\abs{\widehat{g}-g}=\mathcal{O}_P\Big(\lbrack\varrho_{T}d_TK^{3/2}s_T\rbrack^{1/2}\Big)=o_P(1)$.
			\item $\abs{(\widehat{a}\widehat{d}-\widehat{b}^2)-(ad-b^2)  }=\mathcal{O}_P\Big(\varrho_{T}d_TK^{3/2}s_T \Big)=o_P(1)$.
			\item $\abs{ad-b^2}=\mathcal{O}(1)$.
		\end{enumerate}
	\end{lem}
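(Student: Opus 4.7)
The plan is to treat the seven parts as a single consistency package organized around two inputs: the spectral/$\ell_1$-bounds on $\bTheta$ from Theorem 4's hypothesis and Lemma A.1(b), and the rate of Theorem 3 for $\vertiii{\widehat{\bTheta}-\bTheta}_1$, augmented by Lemma A.1(a) for $\|\widehat{\boldm}-\boldm\|_\infty$. Part (a) is a direct eigenvalue bookkeeping: $a=\biota'\bTheta\biota/p$ lies between $\Lambda_{\min}(\bTheta)$ and $\Lambda_{\max}(\bTheta)$, so $a\geq C_0$ (using the extra assumption $\vertiii{\bTheta}_2=\mathcal{O}(1)$ and $\Lambda_p(\bSigma)>0$ in \ref{A1}) and $a=\mathcal{O}(1)$; using $\|\boldm\|_\infty=\mathcal{O}(1)$ from \ref{B1} and the identity $\boldm'\bTheta\boldm\leq \|\boldm\|_2^2\,\vertiii{\bTheta}_2\leq p\|\boldm\|_\infty^2\,\vertiii{\bTheta}_2$ gives $d=\mathcal{O}(1)$; a Cauchy--Schwarz step then bounds $|b|\leq\sqrt{a d}=\mathcal{O}(1)$. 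Part (g) follows immediately from (a) by the triangle inequality.

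The heart of the argument is the uniform bound $\varrho_{T}d_{T}K^{3/2}s_T$ for parts (b)--(d). The workhorse inequality is
\[
|\biota' \bA\, \bu|\leq \|\biota\|_1\,\|\bA\biota\|_\infty\vee \|\bu\|_\infty\sum_{i,j}|a_{ij}|\leq p\,\|\bu\|_\infty\vertiii{\bA}_1,
\]
so that $|\widehat{a}-a|\leq \vertiii{\widehat{\bTheta}-\bTheta}_1$, which yields (b) by Theorem 3 directly. For (c), I will split
$\widehat{b}-b=\bigl[\biota'(\widehat{\bTheta}-\bTheta)\widehat{\boldm}+\biota'\bTheta(\widehat{\boldm}-\boldm)\bigr]/p$. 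The first piece is controlled by $\|\widehat{\boldm}\|_\infty\,\vertiii{\widehat{\bTheta}-\bTheta}_1$ and the second by $\|\widehat{\boldm}-\boldm\|_\infty\,\vertiii{\bTheta}_1$; combining Theorem 3, Lemma A.1, and the fact $\sqrt{\log p/T}=\mathcal{O}(\varrho_{T}\omega_{3T})=o(\varrho_T s_T)$, both pieces collapse to the stated rate. Part (d) decomposes analogously into three pieces, with the $\widehat{\boldm}'(\widehat{\bTheta}-\bTheta)\widehat{\boldm}$ term dominating via the same $p\,\|\widehat{\boldm}\|_\infty^2\vertiii{\widehat{\bTheta}-\bTheta}_1$ estimate.

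For (e), I exploit $|\sqrt{x}-\sqrt{y}|\leq\sqrt{|x-y|}$ for $x,y\geq 0$, since the hypothesis only gives $ad-b^2>0$ and no positive lower bound on $d$ is assumed; applying this to $\widehat{d}$ and $d$ and invoking (d) yields the slower rate $\sqrt{\varrho_T d_T K^{3/2}s_T}$. Finally, (f) is purely algebraic: write
\[
(\widehat{a}\widehat{d}-\widehat{b}^{\,2})-(ad-b^2)=\widehat{a}(\widehat{d}-d)+d(\widehat{a}-a)-(\widehat{b}-b)(\widehat{b}+b),
\]
use $\widehat a=\mathcal{O}_P(1)$ and $\widehat b=\mathcal{O}_P(1)$ (from (a)--(c)), and plug in (b)--(d).

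The main obstacle I expect is keeping track of the $\ell_1$--$\ell_\infty$ duality in the decomposition of $\widehat b-b$ and $\widehat d-d$: one has to avoid the temptation to use $\vertiii{\widehat{\bTheta}-\bTheta}_2$ (which would produce an extra $\sqrt{p}$ when contracted against $\boldm$ with $\|\boldm\|_2=\mathcal{O}(\sqrt p)$) and instead systematically exploit that $\biota$ and $\boldm$ are bounded in $\|\cdot\|_\infty$, so the $\vertiii{\cdot}_1$ bound from Theorem 3 is the right currency. A minor subtlety in (c)--(d) is verifying $\vertiii{\bTheta}_1=\mathcal{O}(d_T K^{3/2})$ from Lemma A.1(b) absorbs the $\|\widehat{\boldm}-\boldm\|_\infty$ factor into the leading rate; this amounts to checking that $d_T K^{3/2}\sqrt{\log p/T}$ is dominated by $\varrho_T d_T K^{3/2}s_T$ under the conditions of Theorem 3, which is immediate from $\varrho_T^{-1}\omega_{3T}\xrightarrow{p}0$ and $s_T\geq 1$.
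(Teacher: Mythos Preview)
Your proposal is correct and follows essentially the same route as the paper: Part (a) via spectral bounds and $\|\boldm\|_\infty=\mathcal{O}(1)$, Parts (b)--(d) via H\"older-type inequalities reducing to $\vertiii{\widehat{\bTheta}-\bTheta}_1$ from Theorem~3 together with $\vertiii{\bTheta}_1=\mathcal{O}(d_TK^{3/2})$ and $\|\widehat{\boldm}-\boldm\|_\infty=\mathcal{O}_P(\sqrt{\log p/T})$, Part (e) via $|\sqrt{x}-\sqrt{y}|\leq\sqrt{|x-y|}$, and Parts (f)--(g) by algebra. The only cosmetic difference is that the paper splits $\widehat{\boldm}=(\widehat{\boldm}-\boldm)+\boldm$ first, yielding three (resp.\ five) terms in (c) (resp.\ (d)) rather than your two (resp.\ three), but since you correctly observe $\|\widehat{\boldm}\|_\infty=\mathcal{O}_P(1)$ this is immaterial.
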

	\begin{proof}
		~
		\begin{enumerate}[label=(\alph*)]
			\item Part (a) is trivial and follows directly from $\vertiii{\bTheta}_2=\mathcal{O}(1)$ and $\norm{\boldm}_{\infty}=\mathcal{O}(1)$ from Assumption
			\ref{B1}. We show the proof for $d$: recall, $d = \boldm'\bTheta\boldm/p \leq \vertiii{\bTheta}_{2}^{2}\norm{\boldm}_{2}^{2}/p=\mathcal{O}(1)$.
			\item Using the H\"{o}lders inequality, we have
			\begin{align*} 
			\abs{\widehat{a}-a}=\abs{\frac{\biota'_{p}(\widehat{\bTheta}-\bTheta)\biota_p}{p}}\leq \frac{\norm{(\widehat{\bTheta}-\bTheta)\biota_p }_1\norm{\biota_p}_{\text{max}}}{p} &\leq \vertiii{ \widehat{\bTheta}-\bTheta}_1 \\
			&=\mathcal{O}_P\Big(\varrho_{T}d_TK^{3/2}(s_T+(1/p)) \Big)=o_P(1),
			\end{align*}
			where the last rate is obtained using the assumptions of Theorem 3.
			\item First, rewrite the expression of interest:
			\begin{align} \label{A6_0}
			\widehat{b}-b = \lbrack \biota'_{p}(\widehat{\bTheta}-\bTheta) (\widehat{\boldm}-\boldm)  \rbrack/p + \lbrack \biota'_{p}(\widehat{\bTheta}-\bTheta) \boldm  \rbrack/p +\lbrack \biota'_{p}\bTheta (\widehat{\boldm}-\boldm)  \rbrack/p.
			\end{align}
			We now bound each of the terms in \eqref{A6_0} using the expressions derived in Callot et al. (2019) (see their Proof of Lemma A.2) and the fact that $\log p/T=o(1)$.
			\begin{align}
			\abs{ \biota'_{p}(\widehat{\bTheta}-\bTheta) (\widehat{\boldm}-\boldm) }/p \leq \vertiii{\widehat{\bTheta}-\bTheta }_1 \norm{\widehat{\boldm}-\boldm }_{\text{max}} = \mathcal{O}_P\Big( \varrho_{T}d_TK^{3/2}s_T \cdot \sqrt{\frac{\log p}{T}} \Big).
			\end{align}
			\begin{align}
			\abs{ \biota'_{p}(\widehat{\bTheta}-\bTheta) \boldm  }/p \leq \vertiii{\widehat{\bTheta}-\bTheta }_1 = \mathcal{O}_P\Big( \varrho_{T}d_TK^{3/2}s_T \Big).
			\end{align}
			\begin{align}
			\abs{  \biota'_{p}\bTheta (\widehat{\boldm}-\boldm)  }/p \leq \vertiii{\bTheta}_1 \norm{\widehat{\boldm}-\boldm }_{\text{max}} = \mathcal{O}_P\Big( d_TK^{3/2} \cdot \sqrt{\frac{\log p}{T}} \Big).
			\end{align}
			\item First, rewrite the expression of interest:
			\begin{align} \label{A6}
			\widehat{d}-d &= \lbrack (\widehat{\boldm}-\boldm)'(\widehat{\bTheta}-\bTheta) (\widehat{\boldm}-\boldm)  \rbrack/p + \lbrack (\widehat{\boldm}-\boldm)'\bTheta (\widehat{\boldm}-\boldm)  \rbrack/p \nonumber\\
			&+\lbrack 2(\widehat{\boldm}-\boldm)'\bTheta\boldm  \rbrack/p + \lbrack 2 \boldm'(\widehat{\bTheta}-\bTheta) (\widehat{\boldm}-\boldm)  \rbrack/p \nonumber\\
			&+ \lbrack\boldm'(\widehat{\bTheta}-\bTheta) \boldm  \rbrack/p.
			\end{align}
			We now bound each of the terms in \eqref{A6} using the expressions derived in Callot et al. (2019) (see their Proof of Lemma A.3) and the facts that $\log p/T=o(1)$ and $\norm{\widehat{\boldm}-\boldm}_{\text{max}}=\mathcal{O}_P(\sqrt{\log p/T})$.
			\begin{align}\label{A7}
			\abs{ (\widehat{\boldm}-\boldm)'(\widehat{\bTheta}-\bTheta) (\widehat{\boldm}-\boldm)  }/p &\leq \norm{\widehat{\boldm}-\boldm }_{\text{max}}^2\vertiii{\widehat{\bTheta}-\bTheta }_1 \nonumber\\ &= \mathcal{O}_P\Big(\frac{\log p}{T} \cdot \varrho_{T}d_TK^{3/2}s_T \Big)
			\end{align}
			\begin{align}
			\abs{ (\widehat{\boldm}-\boldm)'\bTheta (\widehat{\boldm}-\boldm)  }/p \leq \norm{\widehat{\boldm}-\boldm }_{\text{max}}^2\vertiii{\bTheta}_1 = \mathcal{O}_P\Big(\frac{\log p}{T} \cdot d_TK^{3/2} \Big).
			\end{align}
			\begin{align}
			\abs{ (\widehat{\boldm}-\boldm)'\bTheta\boldm }/p \leq \norm{\widehat{\boldm}-\boldm }_{\text{max}}\vertiii{\bTheta}_1 = \mathcal{O}_P\Big(\sqrt{\frac{\log p}{T}} \cdot d_TK^{3/2} \Big).
			\end{align}
			\begin{align}
			\abs{\boldm'(\widehat{\bTheta}-\bTheta) (\widehat{\boldm}-\boldm)}/p &\leq \norm{\widehat{\boldm}-\boldm }_{\text{max}}\vertiii{\widehat{\bTheta}-\bTheta }_1 \nonumber\\ &= \mathcal{O}_P\Big(\sqrt{\frac{\log p}{T}} \cdot \varrho_{T}d_TK^{3/2}s_T \Big).
			\end{align}
			\begin{align}
			\abs{\boldm'(\widehat{\bTheta}-\bTheta) \boldm  }/p \leq \vertiii{\widehat{\bTheta}-\bTheta }_1 = \mathcal{O}_P\Big(\varrho_{T}d_TK^{3/2}s_T \Big).
			\end{align}
			\item This is a direct consequence of Part (d) and the fact that $\sqrt{\widehat{d}-d}\geq \sqrt{\widehat{d}}-\sqrt{d}$.
			\item First, rewrite the expression of interest:
			\begin{align*}
			(\widehat{a}\widehat{d}-\widehat{b}^2)-(ad-b^2)  = \lbrack (\widehat{a}-a)+a \rbrack\lbrack  (\widehat{d}-d)+d \rbrack - \lbrack  (\widehat{b}-b)+b \rbrack^2, 
			\end{align*}
			therefore, using Lemma \ref{lemA1caner}, we have
			\begin{align*}
			\abs{(\widehat{a}\widehat{d}-\widehat{b}^2)-(ad-b^2)  } &\leq \Big[\abs{\widehat{a}-a} \abs{\widehat{d}-d} + \abs{\widehat{a}-a}d+a\abs{\widehat{d}-d}+ (\widehat{b}-b)^2 +2\abs{b} \abs{\widehat{b}-b}\Big]\\
			&=\mathcal{O}_P\Big(\varrho_{T}d_TK^{3/2}s_T \Big)=o_P(1).
			\end{align*}
			\item This is a direct consequence of Part (a): $ad-b^2\leq ad = \mathcal{O}(1)$.
		\end{enumerate}		
	\end{proof}	
	\subsection{Proof of Theorem 4}
	Let us derive convergence rates for each portfolio weight formulas one by one. We start with GMV formulation. 
	\begin{align*}
	\norm{\widehat{\bw}_{\text{GMV}}-\bw_{\text{GMV}}}_1\leq \frac{a\frac{\norm{(\widehat{\bTheta}-\bTheta)\biota_p}_1 }{p}+\abs{a-\widehat{a} }\frac{\norm{\bTheta\biota_p}_1}{p}}{\abs{\widehat{a}}a}=\mathcal{O}_P\Big(\varrho_{T}d_{T}^2K^{3}s_T \Big)=o_P(1),
	\end{align*}
	where the first inequality was shown in Callot et al. (2019) (see their expression A.50), and the rate follows from Lemmas \ref{lemA1caner} and \ref{theorA1caner}.\\
	We now proceed with the MWC weight formulation. First, let us simplify the weight expression as follows: $\bw_{\text{MWC}} = \kappa_1(\bTheta\biota_p/p)+\kappa_2(\bTheta\boldm/p)$, where
	\begin{align*}
	&\kappa_1 = \frac{d-\mu b}{ad-b^2}\\
	&\kappa_2 = \frac{\mu a-b}{ad-b^2}.
	\end{align*}
	Let $\widehat{\bw}_{\text{MWC}} = \widehat{\kappa}_1(\widehat{\bTheta}\biota_p/p) + \widehat{\kappa}_2(\widehat{\bTheta}\widehat{\boldm}/p)$, where $\widehat{\kappa}_1$  and $\widehat{\kappa}_2$ are the estimators of $\kappa_1$ and $\kappa_2$ respectively. As shown in Callot et al. (2019) (see their equation A.57), we can bound the quantity of interest as follows:
	\begin{align}\label{A16}
	\norm{\widehat{\bw}_{\text{MWC}}-\bw_{\text{MWC}}}_1 &\leq \abs{(\widehat{\kappa}_1-\kappa_1) } \norm{(\widehat{\bTheta}-\bTheta)\biota_p}_1/p + \abs{(\widehat{\kappa}_1-\kappa_1) }\norm{\bTheta\biota_p}_1/p + \abs{\kappa_1 } \norm{(\widehat{\bTheta}-\bTheta)\biota_p}_1/p \nonumber\\
	&+ \abs{(\widehat{\kappa}_2-\kappa_2) } \norm{(\widehat{\bTheta}-\bTheta)(\widehat{\boldm}-\boldm)}_1/p + \abs{(\widehat{\kappa}_2-\kappa_2) } \norm{\bTheta(\widehat{\boldm}-\boldm)}_1/p \nonumber\\
	&+ \abs{(\widehat{\kappa}_2-\kappa_2) } \norm{(\widehat{\bTheta}-\bTheta)\boldm}_1/p + \abs{(\widehat{\kappa}_2-\kappa_2) } \norm{\bTheta\boldm}_1/p \nonumber\\
	&+\abs{\kappa_2 } \norm{(\widehat{\bTheta}-\bTheta)(\widehat{\boldm}-\boldm)}_1/p + \abs{\kappa_2 } \norm{(\widehat{\bTheta}-\bTheta)\boldm}_1/p.
	\end{align}
	For the ease of representation, denote $y=ad-b^2$. Then, using similar technique as in Callot et al. (2019) we get
	\begin{align*}
	\abs{(\widehat{\kappa}_1-\kappa_1) } \leq \frac{y\abs{\widehat{d}-d}+y\mu \abs{\widehat{b}-b}+\abs{\widehat{y}-y}\abs{d-\mu b}}{\widehat{y}y}=\mathcal{O}_P\Big(\varrho_{T}d_TK^{3/2}s_T \Big)=o_P(1),
	\end{align*}
	where the rate trivially follows from Lemma \ref{lemA1caner}.\\
	Similarly, we get 
	\begin{align*}
	\abs{(\widehat{\kappa}_2-\kappa_2) }= \mathcal{O}_P\Big(\varrho_{T}d_TK^{3/2}s_T\Big)=o_P(1).
	\end{align*}
	Callot et al. (2019) showed that $\abs{\kappa_1}=\mathcal{O}(1)$ and $\abs{\kappa_2}=\mathcal{O}(1)$. Therefore, we can get the rate of \eqref{A16}:
	\begin{align*}
	\norm{\widehat{\bw}_{\text{MWC}}-\bw_{\text{MWC}}}_1 = \mathcal{O}_P\Big( \varrho_{T}d_{T}^2K^{3}s_T \Big) = o_P(1).
	\end{align*}
	We now proceed with the MRC weight formulation:
	\begin{align*}
	&\norm{\widehat{\bw}_{\text{MRC}}-\bw_{\text{MRC}}}_1 \leq \frac{\frac{g}{p}\Big[\norm{ (\widehat{\bTheta}-\bTheta) (\widehat{\boldm}-\boldm) }_1+  \norm{ (\widehat{\bTheta}-\bTheta) \boldm }_1 + \norm{ \bTheta (\widehat{\boldm}-\boldm) }_1  \Big] + \abs{\widehat{g}-g}\norm{\bTheta\boldm}_1}{\abs{ \widehat{g}}g}\\
	&\leq \frac{\frac{g}{p}\Big[p\vertiii{ \widehat{\bTheta}-\bTheta }_1\norm{ (\widehat{\boldm}-\boldm) }_{\text{max}}+ p\vertiii{ \widehat{\bTheta}-\bTheta }_1 \norm{ \boldm }_{\text{max}} + p\vertiii{  \bTheta }_1\norm{ (\widehat{\boldm}-\boldm) }_{\text{max}}  \Big] + p\abs{\widehat{g}-g}\vertiii{ \bTheta }_1\norm{\boldm}_{\text{max}}}{\abs{ \widehat{g}}g}\\
	&=\mathcal{O}_P \Big( \varrho_{T}d_TK^{3/2}s_T \cdot \sqrt{\frac{\log p}{T}}  \Big) +  \mathcal{O}_P \Big(\varrho_{T}d_TK^{3/2}s_T \Big) \\&+ \mathcal{O}_P\Big(d_TK^{3/2}\cdot \sqrt{\frac{\log p}{T}} \Big) +  \mathcal{O}_P \Big(\lbrack\varrho_{T}d_TK^{3/2}s_T\rbrack^{1/2}\cdot d_TK^{3/2} \Big) = o_P(1),
	\end{align*}
	where we used Lemmas \ref{theorA1caner}-\ref{lemA1caner} and the fact that $\norm{\widehat{\boldm}-\boldm}_{\text{max}}=\mathcal{O}_P(\sqrt{\log p/T})$.
\subsection{Proof of Theorem 5}
We start with the GMV formulation. Using Lemma \ref{lemA1caner} (a)-(b), we get
\begin{align*}
\abs{\frac{\hat{a}^{-1}}{a^{-1}}-1}=\frac{\abs{a-\hat{a}}}{\abs{\hat{a}}} = \mathcal{O}_P(\varrho_{T}d_TK^{3/2}s_T )=o_P(1).
\end{align*}
Proceeding to the MWC risk exposure, we follow Callot et al. (2019) and introduce the following notation: $x=a\mu^2-2b\mu+d$ and $\hat{x}=\hat{a}\mu-2\hat{b}\mu+\hat{d}$ to rewrite $\widehat{\Phi}_{\text{MWC}}=p^{-1}(\hat{x}/\hat{y})$. As shown in Callot et al. (2019), $y/x=\mathcal{O}(1)$ (see their equation A.42). Furthermore, by Lemma \ref{lemA1caner} (b)-(d)
\begin{align*}
\abs{\hat{x}-x} \leq \abs{\hat{a}-a}\mu^2 + 2\abs{\hat{b}-b}\mu + \abs{\hat{d}-d}=\mathcal{O}_P(\varrho_{T}d_TK^{3/2}s_T)=o_P(1),
\end{align*}
and by Lemma \ref{lemA1caner} (f):
\begin{align*}
\abs{\hat{y}-y} = \abs{\hat{a}\hat{d}-\hat{b}^2-(ad-b^2)} = \mathcal{O}_P(\varrho_{T}d_TK^{3/2}s_T )=o_P(1).
\end{align*}
Using the above and the facts that $y=\mathcal{O}(1)$ and $x=\mathcal{O}(1)$ (which were derived by Callot et al. (2019) in A.45 and A.46), we have
\begin{align*}
\abs{\frac{\widehat{\Phi}_{\text{MWC}} - \Phi_{\text{MWC}} }{\Phi_{\text{MWC}}}} = \abs{ \frac{(\hat{x}-x)y +x(y-\hat{y}) }{\hat{y}y} }\mathcal{O}(1) \mathcal{O}_P(\varrho_{T}d_TK^{3/2}s_T )=o_P(1).
\end{align*}
Finally, to bound MRC risk exposure, we use Lemma \ref{lemA1caner} (e) and rewrite
\begin{align*}
\frac{\abs{g-\hat{g}}}{\abs{\hat{g}}} = \mathcal{O}_P\Big(\lbrack\varrho_{T}d_TK^{3/2}s_T\rbrack^{1/2})=o_P(1).
\end{align*}
 \subsection{Generalization: Sub-Gaussian and Elliptical Distributions} \label{appendixA10}
So far the consistency of the Factor Graphical Lasso in Theorem \ref{theor4} relied on the assumption of the exponential-type tails in \ref{A3}\red{(c)}. Since this tail-behavior may be too restrictive for financial portfolio, we comment on the possibility to relax it. First, recall where \ref{A3}\red{(c)} was used before: we required this assumption in order to establish convergence of unknown factors and loadings in Theorem \ref{theor1}, which was further used to obtain the convergence properties of $\widehat{\bSigma}_{\varepsilon}$ in Theorem \ref{theor2}. Hence, when Assumption \ref{A3}\red{(c)} is relaxed, one needs to find another way to consistently estimate $\bSigma_{\varepsilon}$. We achieve it using the tools developed in \cite{fan2018elliptical}. Specifically, let $\bSigma=\bGamma\bLambda\bGamma^{'}$, where $\bSigma$ is the covariance matrix of returns that follow a factor structure described in equation \eqref{e5.1}. Define $\widehat{\bSigma}, \widehat{\bLambda}_K,\widehat{\bGamma}_K$ to be the estimators of $\bSigma,\bLambda,\bGamma$. We further let $\widehat{\bLambda}_K=\text{diag}(\hat{\lambda}_1,\ldots,\hat{\lambda}_K)$ and $\widehat{\bGamma}_K=(\hat{v}_1,\ldots,\hat{v}_K)$ to be constructed by the first $K$ leading empirical eigenvalues and the corresponding eigenvectors of $\widehat{\bSigma}$ and $\widehat{\bB}\widehat{\bB}'=\widehat{\bGamma}_K\widehat{\bLambda}_K\widehat{\bGamma}_{K}^{'}$. Similarly to \cite{fan2018elliptical}, we require the following bounds on the componentwise maximums of the estimators:
\begin{enumerate}[\textbf{({C}.1)}]
	\item \label{C1} $\norm{\widehat{\bSigma}-\bSigma}_{\text{max}}=\mathcal{O}_P(\sqrt{\log p/T})$,
\end{enumerate}
\begin{enumerate}[\textbf{({C}.2)}]
	\item \label{C2} $\norm{(\widehat{\bLambda}_K-\bLambda)\bLambda^{-1}}_{\text{max}}=\mathcal{O}_P(K \sqrt{\log p/T})$,
\end{enumerate}
\begin{enumerate}[\textbf{({C}.3)}]
	\item \label{C3} $\norm{\widehat{\bGamma}_K-\bGamma}_{\text{max}}=\mathcal{O}_P(K^{1/2} \sqrt{\log p/(Tp)})$.
\end{enumerate}
Let $\widehat{\bSigma}^{SG}$ be the sample covariance matrix, with $\widehat{\bLambda}_{K}^{SG}$ and $\widehat{\bGamma}_{K}^{SG}$ constructed with the first $K$ leading empirical eigenvalues and eigenvectors of $\widehat{\bSigma}^{SG}$ respectively. Also, let $\widehat{\bSigma}^{EL1} = \widehat{\bD}\widehat{\bR}_1\widehat{\bD}$, where $\widehat{\bR}_1$ is obtained using the Kendall's tau correlation coefficients and $\widehat{\bD}$ is a robust estimator of variances constructed using the Huber loss. Furthermore, let $\widehat{\bSigma}^{EL2} = \widehat{\bD}\widehat{\bR}_2\widehat{\bD}$, where $\widehat{\bR}_2$ is obtained using the spatial Kendall's tau estimator. Define $\widehat{\bLambda}_{K}^{EL}$ to be the matrix of the first $K$ leading empirical eigenvalues of $\widehat{\bSigma}^{EL1}$, and $\widehat{\bGamma}_{K}^{EL}$ is the matrix of the first $K$ leading empirical eigenvectors of $\widehat{\bSigma}^{EL2}$. For more details regarding constructing $\widehat{\bSigma}^{SG}$, $\widehat{\bSigma}^{EL1}$ and $\widehat{\bSigma}^{EL2}$ see \cite{fan2018elliptical}, Sections 3 and 4.
\begin{prop}\label{prop2}
	For sub-Gaussian distributions, $\widehat{\bSigma}^{SG}$, $\widehat{\bLambda}_{K}^{SG}$ and $\widehat{\bGamma}_{K}^{SG}$ satisfy \ref{C1}-\ref{C3}.\\
	For elliptical distributions, $\widehat{\bSigma}^{EL1}$, $\widehat{\bLambda}_{K}^{EL}$ and $\widehat{\bGamma}_{K}^{EL}$ satisfy \ref{C1}-\ref{C3}.\\
	When \ref{C1}-\ref{C3} are satisfied, the bounds obtained in Theorems \ref{theor2}-\ref{theor5} continue to hold.
\end{prop}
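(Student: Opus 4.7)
The proposition splits into three distinct claims, which I would handle in turn. The first two — that the sub-Gaussian estimators $(\widehat{\bSigma}^{SG},\widehat{\bLambda}_K^{SG},\widehat{\bGamma}_K^{SG})$ and the elliptical estimators $(\widehat{\bSigma}^{EL1},\widehat{\bLambda}_K^{EL},\widehat{\bGamma}_K^{EL})$ satisfy \ref{C1}--\ref{C3} — are essentially restatements of results already proved in \cite{fan2018elliptical}, which I would quote directly. The third claim, that the bounds in Theorems \ref{theor2}--\ref{theor5} continue to hold whenever \ref{C1}--\ref{C3} are satisfied, is where the real work lies: it requires revisiting the proof of Theorem \ref{theor1} and checking that Assumption \ref{A3}(c) enters only through ingredients that are supplied by \ref{C1}--\ref{C3}.

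For the sub-Gaussian case, \ref{C1} follows from a sub-exponential Bernstein inequality applied entry-wise to $\widehat{\sigma}_{ij}^{SG}-\sigma_{ij}$ together with a union bound over the $p^2$ entries; \ref{C2} then follows from Weyl's inequality applied to a spectrum in which the first $K$ eigenvalues are of order $p$ (Assumptions \ref{A1}--\ref{A2}); and \ref{C3} follows from the Davis--Kahan $\sin\Theta$ theorem together with the spectral gap of order $p$. For the elliptical case, I would cite \cite{fan2018elliptical}: \ref{C1} for $\widehat{\bSigma}^{EL1}$ combines concentration of the Kendall's tau correlation with the Huberized variance estimator, while \ref{C2}--\ref{C3} use the spatial Kendall's tau matrix $\widehat{\bSigma}^{EL2}$, which is unbiased for a matrix sharing the eigenvectors of $\bSigma$, yielding the sharper eigenvector rate in \ref{C3} that turns out to be essential downstream.

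For the third claim, the plan is to trace through the proof of Theorem \ref{theor1} and observe that \ref{A3}(c) is invoked only to derive (i) a max-norm bound on $\widehat{\bSigma}-\bSigma$ and (ii) first-order perturbation bounds on the leading $K$ eigenvalues and eigenvectors of $\widehat{\bSigma}$; under \ref{C1}--\ref{C3} these items are supplied outright. Using the PCA representation $\widehat{\bB}=\widehat{\bGamma}_K\widehat{\bLambda}_K^{1/2}$, I would decompose $\widehat{\bb}_i-\bH\bb_i$ into an eigenvector-error contribution (bounded by the $\sqrt{p}$-scaled \ref{C3}) and an eigenvalue-error contribution (bounded by the $\sqrt{p}$-scaled \ref{C2}), and verify uniformly in $i$ that each contributes at most $\omega_{1T}$. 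The companion bound $\max_t\norm{\widehat{\bf}_t-\bH\bf_t}=\mathcal{O}_P(\omega_{2T})$ then follows from the representation $\widehat{\bf}_t=T^{-1}\widehat{\bB}'\br_t$, \ref{C1}, and the just-derived loading bound. Once Theorem \ref{theor1} is re-established at the same rates, Theorem \ref{theor2} follows from its original decomposition of $\widehat{\bSigma}_\varepsilon-\bSigma_\varepsilon$, and Theorems \ref{theor3}--\ref{theor5} are algebraic consequences of Theorem \ref{theor2} via the Sherman--Morrison--Woodbury identity and the plug-in weight formulas, with no further use of \ref{A3}(c).

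The main obstacle is the bookkeeping of the $K$-dependence through the decomposition. Because $\norm{\bLambda_K^{1/2}}_2\asymp\sqrt{p}$, the eigenvector error of \ref{C3} enters multiplied by $\sqrt{p}$, and it is precisely the $1/\sqrt{Tp}$ scaling of \ref{C3} — rather than $1/\sqrt{T}$ — that keeps this term at a rate consistent with $\omega_{1T}$; analogously, the relative eigenvalue normalization $(\widehat{\bLambda}_K-\bLambda)\bLambda^{-1}$ in \ref{C2} is designed so that the $\sqrt{p}$ scaling cancels. Verifying that the $K$-powers collected from \ref{C2} and \ref{C3} compose exactly into $\omega_{1T}$, $\omega_{2T}$ and $\omega_{3T}$ — and not into bounds of merely similar form — is the single delicate piece; everything downstream is routine substitution into the arguments already used for Theorems \ref{theor2}--\ref{theor5}.
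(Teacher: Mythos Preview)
Your handling of the first two claims matches the paper exactly: the paper does not prove them at all but simply remarks that Proposition~\ref{prop2} ``is essentially a rephrasing of the results obtained in \cite{fan2018elliptical}, Sections 3 and 4,'' with the only change being the extra $K$-powers in \ref{C2}--\ref{C3} to accommodate diverging $K$.

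For the third claim your route departs from the paper's (and from \cite{fan2018elliptical}). You propose to re-establish Theorem~\ref{theor1} in full --- both the loading bound and the factor bound $\max_{t}\norm{\widehat{\bf}_t-\bH\bf_t}$ --- and then feed into the residual-based proof of Theorem~\ref{theor2}. The paper instead bypasses Theorem~\ref{theor1} and the residual construction entirely: under \ref{C1}--\ref{C3} one sets $\widehat{\bSigma}_{\varepsilon}=\widehat{\bSigma}-\widehat{\bGamma}_K\widehat{\bLambda}_K\widehat{\bGamma}_K'$ and bounds $\norm{\widehat{\bSigma}_{\varepsilon}-\bSigma_{\varepsilon}}_{\max}$ directly from the eigen-perturbation conditions, never forming $\widehat{\bvarepsilon}_t$ or $\widehat{\bf}_t$. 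The loading bound needed for Lemma~\ref{lemmaC4koike} and hence Theorems~\ref{theor3}--\ref{theor5} then comes straight from $\widehat{\bB}=\widehat{\bGamma}_K\widehat{\bLambda}_K^{1/2}$ and \ref{C2}--\ref{C3}, which you also observe.

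The gap in your route is the factor bound. In the elliptical case the initial estimator $\widehat{\bSigma}^{EL1}$ is \emph{not} the sample second-moment matrix $\bR\bR'/T$, so its eigenstructure is not linked to any object of the form $\widehat{\bf}_t=T^{-1}\widehat{\bB}'\br_t$; more to the point, such an $\widehat{\bf}_t$ depends on the raw heavy-tailed observations $\br_t$, and \ref{C1} says nothing about those --- it only controls the robust pilot estimator. Re-deriving $\max_t\norm{\widehat{\bf}_t-\bH\bf_t}=\mathcal{O}_P(\omega_{2T})$ under elliptical data would therefore need a separate argument you have not supplied, and in fact the Robust FGL never estimates $\widehat{\bf}_t$ at all. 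Fortunately this bound is not used downstream of Theorem~\ref{theor2}: once $\norm{\widehat{\bSigma}_{\varepsilon}-\bSigma_{\varepsilon}}_{\max}=\mathcal{O}_P(\omega_{3T})$ and the loading bound are in hand, Theorems~\ref{theor3}--\ref{theor5} go through by Lemmas~\ref{lemmaC4koike}--\ref{lemmaC5koike} without any reference to $\widehat{\bf}_t$. So your detour through Theorem~\ref{theor1} is both unnecessary and, for the elliptical part, unjustified as written; dropping the factor-bound step and arguing the $\widehat{\bSigma}_{\varepsilon}$ bound directly from the eigendecomposition fixes it and brings you in line with the paper.
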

Proposition \ref{prop2} is essentially a rephrasing of the results obtained in \cite{fan2018elliptical}, Sections 3 and 4. The difference arises due to the fact that we allow $K$ to increase, which is reflected in the modified rates in \ref{C2}-\ref{C3}. As evidenced from the above Proposition, $\widehat{\bSigma}^{EL2}$ is only used for estimating the eigenvectors. This is necessary due to the fact that, in contrast with $\widehat{\bSigma}^{EL2}$, the theoretical properties of the eigenvectors of $\widehat{\bSigma}^{EL}$ are mathematically involved because of the sin function. The FGL for the elliptical distributions will be called the \textit{Robust FGL}. 
	\end{spacing}
\newpage
\begin{spacing}{2}
\section{Additional Simulations} \label{appendixB}
\subsection{Verifying Theoretical Rates} \label{appendixB1}
To compare the empirical rate with the theoretical expressions derived in Theorems \ref{theor3}-\ref{theor5}, we use the facts from Theorem \ref{theor2} that $\omega_{3T}\defeq K^{2}\sqrt{\log p/T} +K^3/\sqrt{p}$ and  $\varrho_{T}^{-1}\omega_{3T}\xrightarrow{p}0$ to introduce the following functions that correspond to the theoretical rates for the choice of parameters in the empirical setting:
\begin{align} &\begin{rcases*}
f_{\vertiii{\cdot}_2} = C_1+ C_2\cdot\log_2(s_T\varrho_{T} ) \label{e5.4} \\
g_{\vertiii{\cdot}_1} = C_3+ C_2\cdot\log_2(d_TK^{3/2}s_T\varrho_{T} )\end{rcases*} \text{for} \ \widehat{\bTheta} \\
&h_{1} = C_4+ C_2\cdot\log_2(\varrho_{T}d_{T}^{2}K^3s_T) \hspace{30.5pt} \text{for} \ \widehat{\bw}_{\text{GMV}},\widehat{\bw}_{\text{MWC}} \label{e5.6}\\
&h_{2}= C_5+ C_6\cdot\log_2([\varrho_{T}s_T]^{1/2}d_{T}^{3/2}K^3) \quad \text{for} \ \widehat{\bw}_{\text{MRC}} \label{e5.7}\\
&h_{3} = C_7+ C_2\cdot\log_2(d_TK^{3/2}s_T\varrho_{T}) \hspace{22pt} \text{for} \ \widehat{\Phi}_{\text{GMV}},\widehat{\Phi}_{\text{MWC}} \label{e5.8}\\
&h_{4}= C_8+ C_{9}\cdot\log_2(d_TK^{3/2}s_T\varrho_{T}) \hspace{22pt} \text{for} \ \widehat{\Phi}_{\text{MRC}} \label{e5.9}
\end{align}
where $C_1, \ldots, C_9$ are constants with $C_6>C_2$ (by Theorem \ref{theor4}), $C_9>C_2$ (by Theorem \ref{theor5}).

\autoref{f1} shows the averaged (over Monte Carlo simulations) errors of the estimators of $\bTheta$, $\bw$ and $\Phi$ versus the sample size $T$ in the logarithmic scale (base 2). In order to confirm the theoretical findings from Theorems \ref{theor3}-\ref{theor5}, we also plot the theoretical rates of convergence given by the functions in \eqref{e5.4}-\eqref{e5.9}. We verify that the empirical and theoretical rates are matched. Since the convergence rates for GMV and MWC portfolio weights $\bw$ and risk exposures $\Phi$ are very similar, we only report the former. Note that as predicted by Theorem \ref{theor3}, the rate of convergence of the precision matrix in $\vertiii{\cdot}_2$-norm is faster than the rate in $\vertiii{\cdot}_1$-norm. Furthermore, the convergence rate of the GMV, MWC and MRC portfolio weights and risk exposures are close to the rate of the precision matrix $\bTheta$ in $\vertiii{\cdot}_1$-norm, which is confirmed by Theorem \ref{theor4}. As evidenced by \autoref{f1}, the convergence rate of the MRC risk exposure is slower than the rate of GMV and MWC exposures. This finding is in accordance with Theorem \ref{theor5} and it is also consistent with the empirical findings that indicate higher overall risk associated with MRC portfolios.	
\begin{figure}[hbt!]
	\centering
	\includegraphics[width=0.7\textwidth]{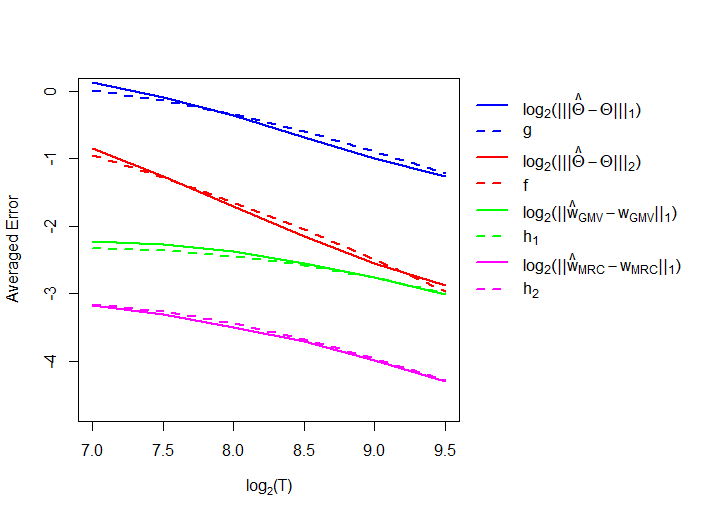}\\
	\vspace{-25pt}
	\includegraphics[width=0.7\textwidth]{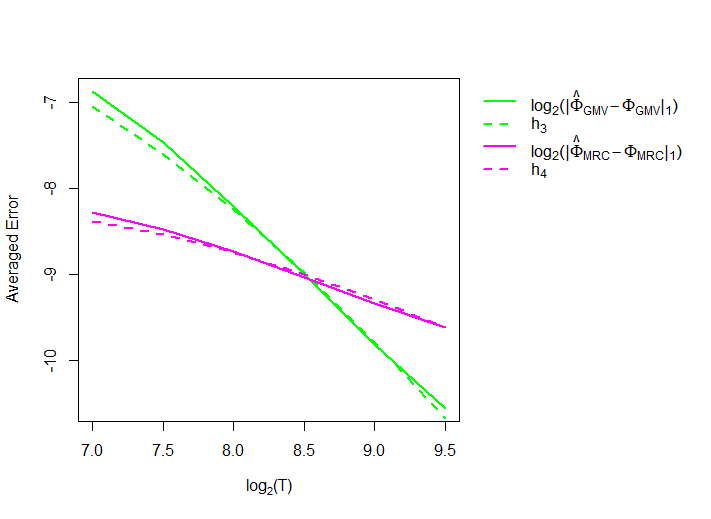}
	\bigskip
	\caption{\textbf{Averaged empirical errors (solid lines) and theoretical rates of convergence (dashed lines) on logarithmic scale: $p = T^{0.85}$, $K = 2(\log T)^{0.5}$, $s_T = \mathcal{O}(T^{0.05})$.}}
	\label{f1}
\end{figure}
\cleardoublepage
\subsection{Results for Case 1} \label{appendixB1A}
We compare the performance of FGL with the alternative models listed at the beginning of Section 5 for Case 1. The only instance when FGL is strictly but slightly dominated occurs in \autoref{f2}: POET outperforms FGL in terms of convergence of precision matrix in the spectral norm. This is different from Case 2 in \autoref{f4} where FGL outperforms all the competing models.
\begin{figure}[hbt!]
	\centering
	\includegraphics[width=0.98\textwidth]{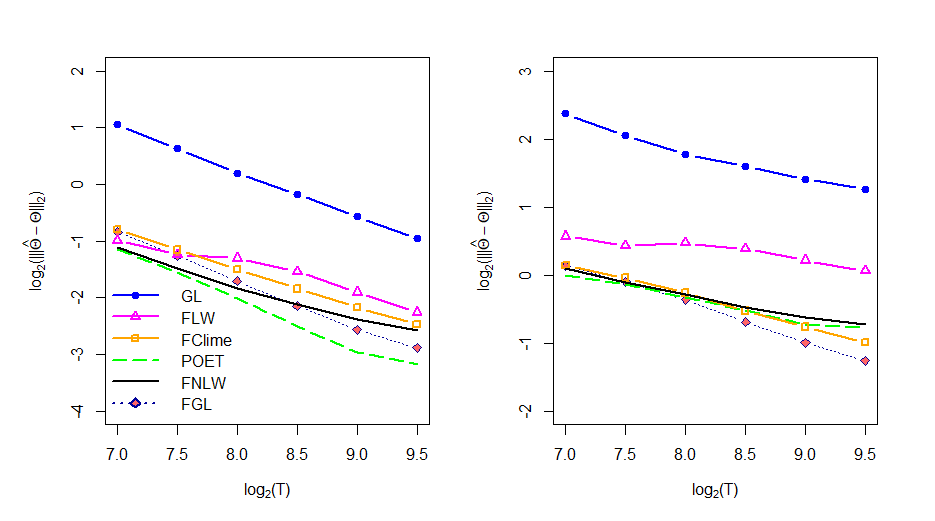}
	\bigskip
	\caption{\textbf{Averaged errors of the estimators of $\bTheta$ for Case 1 on logarithmic scale: $p = T^{0.85}$, $K = 2(\log T)^{0.5}$, $s_T = \mathcal{O}(T^{0.05})$.}}
	\label{f2}
\end{figure}
\begin{figure}[hbt!]
	\centering
	\includegraphics[width=0.98\textwidth]{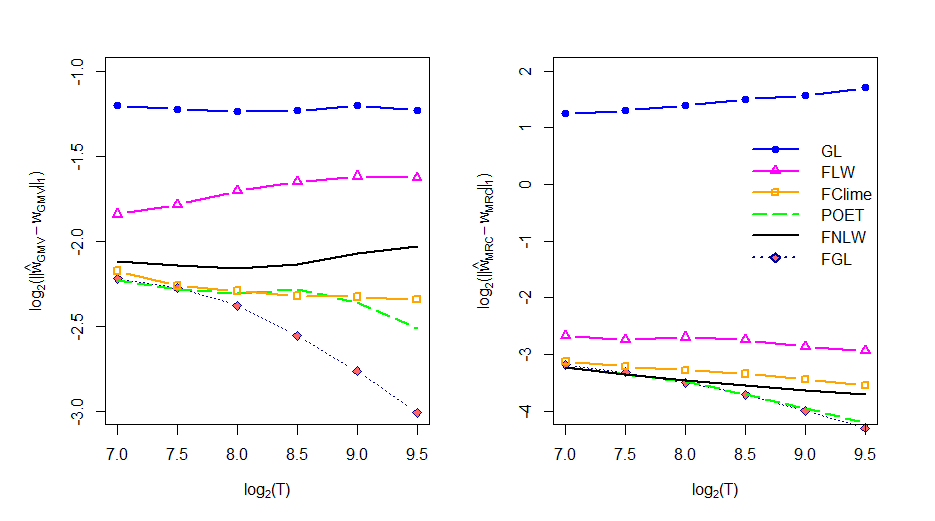}
	\bigskip
	\caption{\textbf{Averaged errors of the estimators of $\bw_{\text{GMV}}$ (left) and $\bw_{\text{MRC}}$ (right) for Case 1 on logarithmic scale: $p = T^{0.85}$, $K = 2(\log T)^{0.5}$, $s_T = \mathcal{O}(T^{0.05})$.}}
	\label{f3}
\end{figure}
\begin{figure}[hbt!]
	\centering
	\includegraphics[width=0.98\textwidth]{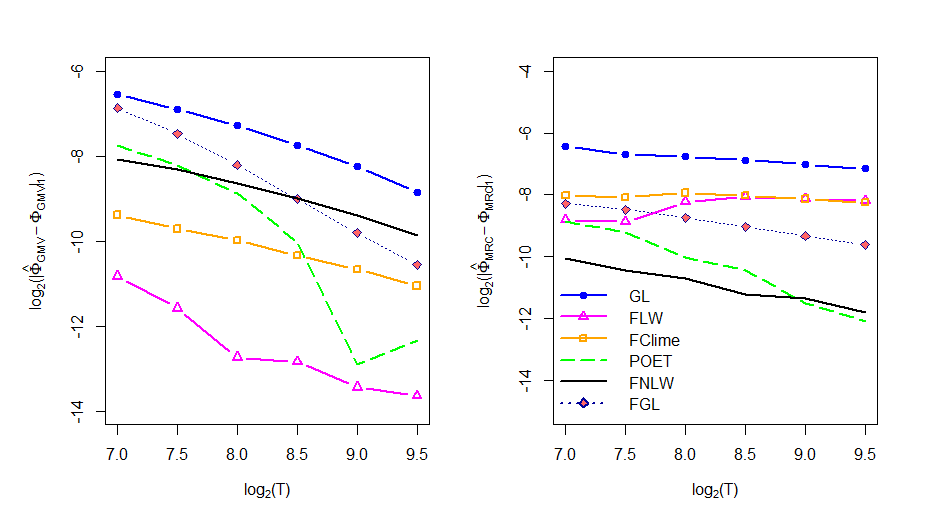}
	\bigskip
	\caption{\textbf{Averaged errors of the estimators of $\Phi_{\text{GMV}}$ (left) and $\Phi_{\text{MRC}}$ (right) for Case 1 on logarithmic scale: $p = T^{0.85}$, $K = 2(\log T)^{0.5}$, $s_T = \mathcal{O}(T^{0.05})$.}}
	\label{f3a}
\end{figure}
\newpage
\subsection{Robust FGL} \label{appendixB2}
The DGP for elliptical distributions is similar to \cite{fan2018elliptical}: let $(\bf_t, \bvarepsilon_{t})$ from \eqref{e5.1} jointly follow the multivariate t-distribution with the degrees of freedom $\nu$. When $\nu = \infty$, this corresponds to the multivariate normal distribution, smaller values of $\nu$ are associated with thicker tails. We draw $T$ independent samples of $(\bf_t, \bvarepsilon_{t})$ from the multivariate t-distribution with zero mean and covariance matrix $\bSigma = \text{diag}(\bSigma_{f}, \bSigma_{\varepsilon})$, where $\bSigma_{f} = \bI_{K}$. To construct $\bSigma_{\varepsilon}$ we use a Toeplitz structure parameterized by $\rho = 0.5$, which leads to the sparse $\bTheta_{\varepsilon} = \bSigma^{-1}_{\varepsilon}$. The rows of $\bB$ are drawn from $\mathcal{N}(\bm{0}, \bI_{K})$. We let $p = T^{0.85}$, $K = 2(\log T)^{0.5}$ and $T = \lbrack 2^h \rbrack, \ \text{for} \ h\in \{7,7.5,8,\ldots,9.5\}$. \autoref{f6}-\ref{f8} report the averaged (over Monte Carlo simulations) estimation errors (in the logarithmic scale, base 2) for $\bTheta$ and two portfolio weights (GMV and MRC) using FGL and Robust FGL for $\nu=4.2$. Noticeably, the performance of FGL for estimating the precision matrix is comparable with that of Robust FGL: this suggests that our FGL algorithm is insensitive to heavy-tailed distributions even without additional modifications. Furthermore, FGL outperforms its Robust counterpart in terms of estimating portfolio weights, as evidenced by \autoref{f8}. We further compare the performance of FGL and Robust FGL for different degrees of freedom: \autoref{f9} reports the log-ratios (base 2) of the averaged (over Monte Carlo simulations) estimation errors for $\nu=4.2$, $\nu = 7$ and $\nu = \infty$. The results for the estimation of $\bTheta$ presented in \autoref{f9} are consistent with the findings in \cite{fan2018elliptical}: Robust FGL outperforms the non-robust counterpart for thicker tails.
\begin{figure}[hbt!]
	\centering
	\includegraphics[width=0.98\textwidth]{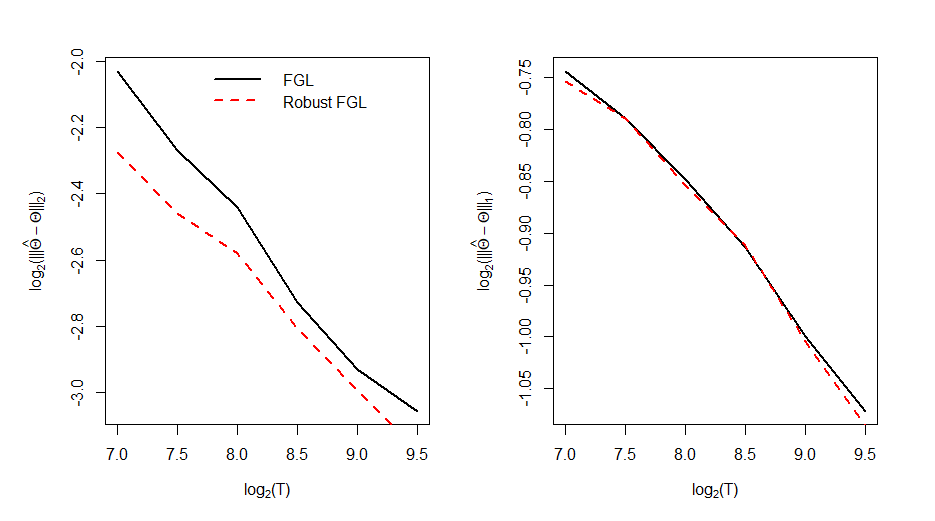}
	\bigskip
	\caption{\textbf{Averaged errors of the estimators of $\bTheta$ on logarithmic scale: $p = T^{0.85}$, $K = 2(\log T)^{0.5}$, $\nu=4.2$.}}
	\label{f6}
\end{figure}
\begin{figure}[hbt!]
	\centering
	\includegraphics[width=0.98\textwidth]{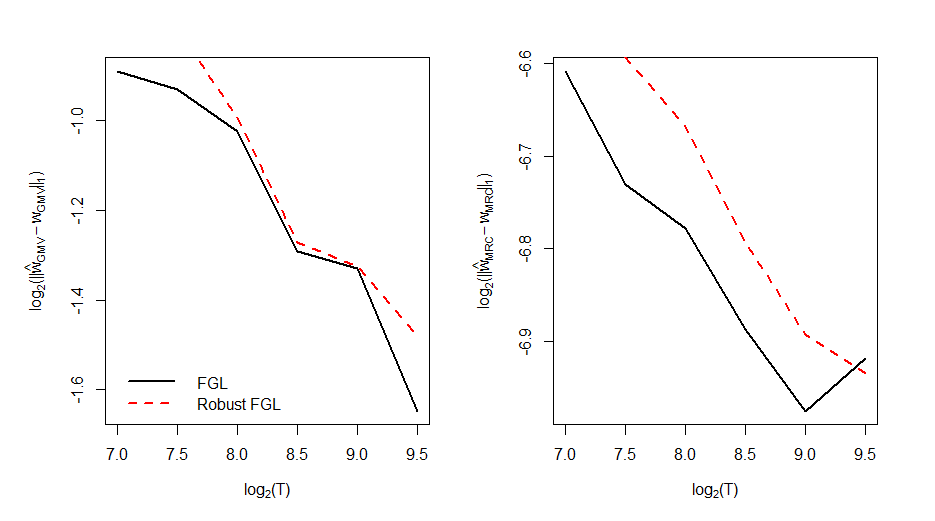}
	\bigskip
	\caption{\textbf{Averaged errors of the estimators of $\bw_{\text{GMV}}$ (left) and $\bw_{\text{MRC}}$ (right) on logarithmic scale: $p = T^{0.85}$, $K = 2(\log T)^{0.5}$, $\nu=4.2$.}}
	\label{f8}
\end{figure}
\newpage
\begin{figure}[!htbp]
	\centering
	\includegraphics[width=0.98\textwidth]{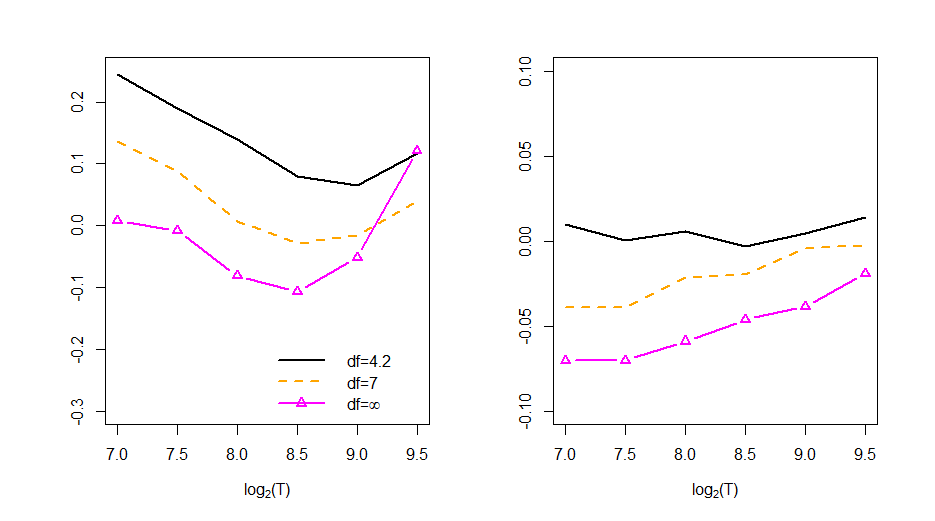}
	\bigskip
	\caption{\textbf{Log ratios (base 2) of the averaged errors of the FGL and the Robust FGL estimators of $\bTheta$: $\log_2\Big(\frac{\vertiii{\widehat{\bTheta}-\bTheta}_2}{\vertiii{\widehat{\bTheta}_{\text{R}}-\bTheta}_2}\Big)$ (left), $\log_2\Big(\frac{\vertiii{\widehat{\bTheta}-\bTheta}_1}{\vertiii{\widehat{\bTheta}_{\text{R}}-\bTheta}_1}\Big)$ (right): $p = T^{0.85}$, $K = 2(\log T)^{0.5}$.}}
	\label{f9}
\end{figure}
\cleardoublepage
\subsection{Relaxing Pervasiveness Assumption}\label{appendixB3}
As pointed out by \cite{onatski2013POET}, the data on 100 industrial portfolios shows that there are no large gaps between eigenvalues $i$ and $i+1$ of the sample covariance data except for $i=1$. However, as is commonly believed, such data contains at least three factors. Therefore, the factor pervasiveness assumption suggests the existence of a large gap for $i\geq3$. In order to examine sensitivity of portfolios to the pervasiveness assumption and quantify the degree of pervasiveness, we use the same DGP as in \eqref{e51}-\eqref{e52}, but with $\sigma_{\varepsilon,ij}=\rho^{\abs{i-j}}$ and $K=3$. We consider $\rho\in \{0.4,0.5,0.6,0.7,0.8,0.9\}$ which corresponds to $\lambda_3/\lambda_4\in\{3.1,2.7,2.6,2.2,1.5,1.1\}$. In other words, as $\rho$ increases, the systematic-idiosyncratic gap measured by $\hat{\lambda}_3/\hat{\lambda}_4$ decreases. \autoref{tab_sim1}-\ref{tab_sim2} report the mean quality of the estimators for portfolio weights and risk over 100 replications for $T=300$ and $p\in \{300,400\}$. The sample size and the number of regressors are chosen to closely match the values from the empirical application. POET and Projected POET are the most sensitive to a reduction in the gap between the leading and bounded eigenvalues which is evident from a dramatic deterioration in the quality of these estimators. The remaining methods, including FGL, exhibit robust performance. Since the behavior of the estimators for portfolio weights is similar to that of the estimators of precision matrix, we only report the former for the ease of presentation. For $(T,p)=(300,300)$, FClime shows the best performance followed by FGL and FLW, whereas for $(T,p)=(300,400)$ FGL takes the lead. Interestingly, despite inferior performance of POET and Projected POET in terms of estimating portfolio weights, risk exposure of the portfolios based on these estimators is competitive with the other approaches.
\begin{table}[]	
	\caption{Sensitivity of portfolio weights and risk exposure when the gap between the diverging and bounded eigenvalues decreases: $(T,p)=(300,300)$.}
	\label{tab_sim1}
	\centering
	\resizebox{0.95\textwidth}{!}{%
		\begin{tabular}{@{}ccccccc@{}}
			\toprule
			& $\rho=0.4$ & $\rho=0.5$ & $\rho=0.6$ & $\rho=0.7$ & $\rho=0.8$ & $\rho=0.9$ \\
			& ($\lambda_3/\lambda_4=3.1$) & ($\lambda_3/\lambda_4=2.7$) & ($\lambda_3/\lambda_4=2.6$) & ($\lambda_3/\lambda_4=2.2$) & ($\lambda_3/\lambda_4=1.5$) & ($\lambda_3/\lambda_4=1.1$) \\ \midrule
			& \multicolumn{6}{c}{$\norm{\widehat{\bw}_{\text{GMV}}-\bw_{\text{GMV}}}_1$} \\ \midrule
			FGL & 2.3198 & 2.3465 & 2.5177 & 2.4504 & 2.5010 & 2.7319 \\
			FClime & 1.9554 & 1.9359 & 1.9795 & 1.9103 & 1.9813 & 1.9948 \\
			FLW & 2.3445 & 2.3948 & 2.5328 & 2.4715 & 2.5918 & 3.0515 \\			
			FNLW & 2.2381 & 2.3009 & 2.3293 & 2.5497 & 2.9039 & 3.1980 \\			
			POET & 47.6746 & 82.1873 & 43.9722 & 54.1131 & 157.6963 & 235.8119 \\
			Projected POET & 9.6335 & 7.8669 & 10.1546 & 10.6205 & 12.1795 & 15.2581 \\ \midrule
			& \multicolumn{6}{c}{$\abs{\widehat{\Phi}_{\text{GMV}}-\Phi_{\text{GMV}}}$} \\ \midrule
			FGL & 0.0033 & 0.0032 & 0.0034 & 0.0027 & 0.0021 & 0.0023 \\
			FClime & 0.0012 & 0.0012 & 0.0012 & 0.0011 & 0.0010 & 0.0010 \\
			FLW & 0.0049 & 0.0052 & 0.0061 & 0.0056 & 0.0049 & 0.0059 \\			
			FNLW & 0.0055 & 0.0060 & 0.0054 & 0.0052 & 0.0066 & 0.0057 \\			
			POET & 0.0070 & 0.0122 & 0.0058 & 0.0063 & 0.0103 & 0.0160 \\
			Projected POET & 0.0021 & 0.0022 & 0.0019 & 0.0019 & 0.0018 & 0.0026 \\ \midrule
			& \multicolumn{6}{c}{$\norm{\widehat{\bw}_{\text{MWC}}-\bw_{\text{MWC}}}_1$} \\ \midrule
			FGL & 2.3766 & 2.4108 & 2.7411 & 2.6094 & 2.5669 & 3.4633 \\
			FClime & 2.0502 & 2.0279 & 2.2901 & 2.1400 & 2.1028 & 3.0737 \\
			FLW & 2.4694 & 2.5132 & 2.8902 & 2.7315 & 2.7210 & 4.0248 \\			
			FNLW & 2.7268 & 2.3060 & 2.8984 & 3.5902 & 2.9232 & 3.2076 \\			
			POET & 49.8603 & 34.2024 & 469.3605 & 108.1529 & 74.8016 & 99.4561 \\
			Projected POET & 9.0261 & 7.4028 & 8.1899 & 9.4806 & 11.9642 & 13.3890 \\ \midrule
			& \multicolumn{6}{c}{$\abs{\widehat{\Phi}_{\text{MWC}}-\Phi_{\text{MWC}}}$} \\ \midrule
			FGL & 0.0033 & 0.0032 & 0.0034 & 0.0027 & 0.0021 & 0.0024 \\
			FClime & 0.0012 & 0.0012 & 0.0013 & 0.0011 & 0.0010 & 0.0009 \\
			FLW & 0.0050 & 0.0053 & 0.0062 & 0.0057 & 0.0050 & 0.0059 \\			
			FNLW & 0.0055 & 0.0060 & 0.0055 & 0.0053 & 0.0066 & 0.0057 \\			
			POET & 0.0068 & 0.0047 & 0.0363 & 0.0092 & 0.0060 & 0.0056 \\
			Projected POET & 0.0022 & 0.0022 & 0.0020 & 0.0020 & 0.0018 & 0.0027 \\ \midrule
			& \multicolumn{6}{c}{$\norm{\widehat{\bw}_{\text{MRC}}-\bw_{\text{MRC}}}_1$} \\ \midrule
			FGL & 0.4872 & 0.1793 & 1.0044 & 0.6332 & 1.4568 & 2.3353 \\
			FClime & 0.5160 & 0.2148 & 1.0188 & 0.6694 & 1.4855 & 2.3519 \\
			FLW & 0.5333 & 0.2279 & 1.0345 & 0.6734 & 1.4904 & 2.3691 \\			
			FNLW & 0.8365 & 1.1285 & 1.1181 & 1.4419 & 1.7694 & 2.4612 \\			
			POET & NaN & NaN & NaN & NaN & NaN & NaN \\
			Projected POET & 0.7414 & 0.6383 & 1.6686 & 1.8013 & 2.3297 & 3.2791 \\ \midrule
			& \multicolumn{6}{c}{$\abs{\widehat{\Phi}_{\text{MRC}}-\Phi_{\text{MRC}}}$} \\ \midrule
			FGL & 0.0004 & 0.0003 & 0.0025 & 0.0007 & 0.0021 & 0.0071 \\
			FClime & 0.0005 & 0.0003 & 0.0024 & 0.0004 & 0.0016 & 0.0062 \\
			FLW & 0.0002 & 0.0002 & 0.0021 & 0.0003 & 0.0018 & 0.0066 \\			
			FNLW & 0.0062 & 0.0062 & 0.0069 & 0.0119 & 0.0059 & 0.0143 \\					
			POET & NaN & NaN & NaN & NaN & NaN & NaN \\
			Projected POET & 0.0003 & 0.0003 & 0.0027 & 0.0031 & 0.0069 & 0.0062 \\ \bottomrule
		\end{tabular}%
	}
\end{table}
\begin{table}[]
	\caption{Sensitivity of portfolio weights and risk exposure when the gap between the diverging and bounded eigenvalues decreases: $(T,p)=(300,400)$.}
	\centering
	\resizebox{0.95\textwidth}{!}{%
		\begin{tabular}{@{}ccccccc@{}}
			\toprule
			& $\rho=0.4$ & $\rho=0.5$ & $\rho=0.6$ & $\rho=0.7$ & $\rho=0.8$ & $\rho=0.9$ \\
			& ($\lambda_3/\lambda_4=3.1$) & ($\lambda_3/\lambda_4=2.7$) & ($\lambda_3/\lambda_4=2.6$) & ($\lambda_3/\lambda_4=2.2$) & ($\lambda_3/\lambda_4=1.5$) & ($\lambda_3/\lambda_4=1.1$) \\ \midrule
			& \multicolumn{6}{c}{$\norm{\widehat{\bw}_{\text{GMV}}-\bw_{\text{GMV}}}_1$} \\ \midrule
			FGL & 1.6900 & 1.8134 & 1.8577 & 1.8839 & 1.9843 & 2.0692 \\
			FClime & 1.9073 & 1.9524 & 1.9997 & 1.9490 & 1.9898 & 2.0330 \\
			FLW & 2.0239 & 2.0945 & 2.1195 & 2.1235 & 2.2473 & 2.4745 \\			
			FNLW & 2.0316 & 2.0790 & 2.1927 & 2.2503 & 2.4143 & 2.4710 \\			
			POET & 18.7934 & 28.0493 & 155.8479 & 32.4197 & 41.8098 & 71.5811 \\
			Projected POET & 7.8696 & 8.4915 & 8.8641 & 10.7522 & 11.2092 & 19.0424 \\ \midrule
			& \multicolumn{6}{c}{$\abs{\widehat{\Phi}_{\text{GMV}}-\Phi_{\text{GMV}}}$} \\ \midrule
			FGL & 8.62E-04 & 9.22E-04 & 7.23E-04 & 7.31E-04 & 6.83E-04 & 5.73E-04 \\
			FClime & 8.40E-04 & 8.27E-04 & 8.02E-04 & 7.87E-04 & 7.36E-04 & 6.71E-04 \\
			FLW & 1.59E-03 & 1.73E-03 & 1.57E-03 & 1.68E-03 & 1.69E-03 & 1.54E-03 \\			
			FNLW & 2.24E-03 & 2.10E-03 & 1.83E-03 & 1.88E-03 & 2.07E-03 & 1.29E-03\\			
			POET & 1.11E-03 & 1.46E-03 & 3.59E-03 & 1.27E-03 & 1.88E-03 & 2.51E-03 \\
			Projected POET & 8.97E-04 & 8.80E-04 & 6.83E-04 & 6.79E-04 & 7.98E-04 & 6.55E-04 \\ \midrule
			& \multicolumn{6}{c}{$\norm{\widehat{\bw}_{\text{MWC}}-\bw_{\text{MWC}}}_1$} \\ \midrule
			FGL & 1.9034 & 2.2843 & 1.9118 & 3.2569 & 2.7055 & 2.8812 \\
			FClime & 2.1193 & 2.4024 & 2.0540 & 3.3487 & 2.7277 & 2.8593 \\
			FLW & 2.2573 & 2.5809 & 2.1790 & 3.5728 & 3.0072 & 3.3164 \\			
			FNLW & 2.3207 & 3.3335 & 3.5518 & 3.4282 & 2.6446 & 4.8827 \\			
			POET & 15.8824 & 100.1419 & 56.9827 & 33.6483 & 38.8961 & 103.0434 \\
			Projected POET & 6.5386 & 7.2169 & 7.8583 & 9.7342 & 12.1420 & 17.7368 \\ \midrule
			& \multicolumn{6}{c}{$\abs{\widehat{\Phi}_{\text{MWC}}-\Phi_{\text{MWC}}}$} \\ \midrule
			FGL & 8.72E-04 & 9.41E-04 & 7.26E-04 & 7.99E-04 & 7.12E-04 & 6.08E-04 \\
			FClime & 8.52E-04 & 8.49E-04 & 8.06E-04 & 8.32E-04 & 7.50E-04 & 6.86E-04 \\
			FLW & 1.59E-03 & 1.74E-03 & 1.57E-03 & 1.71E-03 & 1.70E-03 & 1.56E-03 \\			
			FNLW & 2.25E-03 & 2.22E-03 & 1.89E-03 & 1.91E-03 & 2.08E-03 & 1.56E-03\\			
			POET & 1.14E-03 & 4.91E-03 & 1.78E-03 & 1.45E-03 & 1.57E-03 & 2.93E-03 \\
			Projected POET & 9.19E-04 & 9.20E-04 & 7.11E-04 & 7.04E-04 & 8.26E-04 & 6.78E-04 \\ \midrule
			& \multicolumn{6}{c}{$\norm{\widehat{\bw}_{\text{MRC}}-\bw_{\text{MRC}}}_1$} \\ \midrule
			FGL & 0.6683 & 0.7390 & 1.3103 & 1.5195 & 1.7124 & 3.0935 \\
			FClime & 0.6903 & 0.7635 & 1.3238 & 1.5403 & 1.7415 & 3.1180 \\
			FLW & 0.7132 & 0.7828 & 1.3430 & 1.5549 & 1.7517 & 3.1364 \\			
			FNLW & 0.4909 & 1.2121 & 1.4974 & 1.1996 & 1.8020 & 3.2989 \\			
			POET & NaN & NaN & NaN & NaN & NaN & NaN \\
			Projected POET & 1.6851 & 1.4434 & 1.9628 & 2.6182 & 2.7716 & 4.1753 \\ \midrule
			& \multicolumn{6}{c}{$\abs{\widehat{\Phi}_{\text{MRC}}-\Phi_{\text{MRC}}}$} \\ \midrule
			FGL & 1.02E-03 & 9.73E-04 & 4.63E-03 & 4.49E-03 & 3.23E-03 & 8.73E-03 \\
			FClime & 1.14E-03 & 1.01E-03 & 4.55E-03 & 4.22E-03 & 2.70E-03 & 7.72E-03 \\
			FLW & 6.62E-04 & 5.54E-04 & 4.19E-03 & 4.01E-03 & 2.71E-03 & 8.11E-03 \\			
			FNLW & 2.73E-04 & 6.93E-03 & 5.11E-03 & 1.93E-03 & 6.42E-03 & 2.98E-02\\			
			POET & NaN & NaN & NaN & NaN & NaN & NaN \\
			Projected POET & 3.59E-03 & 1.20E-03 & 1.49E-03 & 2.58E-03 & 7.86E-03 & 1.39E-02 \\ \bottomrule
		\end{tabular}%
	}
	\label{tab_sim2}
\end{table}
\newpage
\section{Additional Empirical Results} \label{appendixC}
This Appendix contains the description of the procedure used to estimate unknown factors and loadings using PCA (Appendix \ref{appendixC0}), and additional empirical results with portfolio performance for monthly data (Appendix \ref{appendixC4}) and verifying robustness of FGL towards different training periods (Appendix \ref{appendixC1}), different target risk and return (Appendix \ref{appendixC2}), and subperiod analyses for MWC and GMV portfolios (Appendix \ref{appendixC3}).
\subsection{Estimating Unknown Factors and Loadings} \label{appendixC0}
\begin{remark} \label{remark1}
	In practice, the number of common factors, $K$, is unknown and needs to be estimated. One of the standard and commonly used approaches is to determine $K$ in a data-driven way (\cite{Bai2002,kapetanios2010testing}). As an example, in their paper \cite{fan2013POET} adopt the approach from \cite{Bai2002}. However, all of the aforementioned papers deal with a fixed number of factors. Therefore, we need to adopt a different criteria since $K$ is allowed to grow in our setup. For this reason, we use the methodology by \cite{Li2017_Increasing_Factors}: let $\bb_{i,K}$ and $\bf_{t,K}$ denote $K\times 1$ vectors of loadings and factors when $K$ needs to be estimated, and $\bB_K$ is a $p\times K$ matrix of stacked $\bb_{i,K}$. Define 
	\begin{equation}\label{e3.38}
		V(K)=\min_{\bB_{K},\bF_K}\frac{1}{pT}\sum_{i=1}^{p} \sum_{t=1}^{T} \Big(r_{it}-\frac{1}{\sqrt{K}}\bb'_{i,K}\bf_{t,K}  \Big)^2,
	\end{equation}
	where the minimum is taken over $1\leq K\leq K_{\textup{max}}$, subject to normalization $\bB'_{K}\bB_{K}/p=\bI_{K}$. Hence, $\bar{\bF}'_K=\sqrt{K}\bR'\bB_K/p$. Define $\widehat{\bF}'_K = \bar{\bF}'_K(\bar{\bF}_K\bar{\bF}'_K/T)^{1/2}$, which is a rescaled estimator of the factors that is used to determine the number of factors when $K$ grows with the sample size. We then apply the following procedure described in \cite{Li2017_Increasing_Factors} to estimate $K$:
	\begin{equation}\label{e3.39}
		\widehat{K} = \argmin_{1\leq K\leq K_{\textup{max}}} \ln (V(K,\hat{\bF}_K)) + Kg(p,T),
	\end{equation}
	where $1\leq K\leq K_{\textup{max}}=o(\min\{ p^{1/17},T^{1/16} \})$ and $g(p,T)$ is a penalty function of $(p,T)$ such that (i) $K_{\textup{max}}\cdot g(p,T)\rightarrow 0$ and (ii) $C_{p,T,K_{\textup{max}}}^{-1} \cdot g(p,T) \rightarrow \infty$ with $C_{p,T,K_{\textup{max}}} = \mathcal{O}_P\Big(\max \Big[\frac{K^{3}_{\textup{max}}}{\sqrt{p}}, \frac{K^{5/2}_{\textup{max}}}{\sqrt{T}}  \Big]  \Big)$. The choice of the penalty function is similar to \cite{Bai2002}.
	Throughout the paper we let $\widehat{K}$ be the solution to \eqref{e3.39}.
\end{remark}
\newpage
\subsection{Monthly Data} \label{appendixC4}
Similarly to daily data, we use monthly returns of the components of the S\&P500. The data is fetched from CRSP and Compustat using SAS interface.
The full sample for the monthly data has 480 observations on 355 stocks from January 1, 1980 - December 1, 2019. We use January 1, 1980 - December 1, 1994 (180 obs) as a training (estimation) period and January 1, 1995 - December 1, 2019 (300 obs) as the out-of-sample test period. At the end of each month, prior to portfolio construction, we remove stocks with less than 15 years of historical stock return data.  We set the return target $\mu=0.7974\%$ which is equivalent to $10\%$ yearly return when compounded. The target level of risk for the weight-constrained and risk-constrained Markowitz portfolio (MWC and MRC) is set at $\sigma=0.05$ which is the standard deviation of the monthly excess returns of the S\&P500 index in the first training set. Transaction costs are taken to be the same as for the daily returns in Section 6.

\autoref{tab1} reports the results for monthly data. Some comments are in order: \textbf{(1)} interestingly, MRC produces portfolio return and Sharpe Ratio that are mostly higher than those for the weight-constrained allocations MWC and GMV. This means that relaxing the constraint that portfolio weights sum up to one leads to a large increase in the out-of-sample Sharpe Ratio and portfolio return which has not been previously well-studied in the empirical finance literature. \textbf{(2)} Similarly to the results from \autoref{tab3}, FGL outperforms the competitors including EW and Index in terms of the out-of-sample Sharpe Ratio and turnover. \textbf{(3)} Similarly to the results in \autoref{tab3}, the observable Fama-French factors produce the FGL portfolios with higher return and higher out-of-sample Sharpe Ratio compared to the FGL portfolios based on statistical factors. Again, this increase in return is not followed by higher risk. \textbf{(4)} To further verify that the shrinkage is functioning as desired and the estimated $\bTheta_{\varepsilon}$ is indeed sparse we include several visualizations. Figure \ref{fig_rev2} reports optimally tuned values of $\lambda$ (please refer to Section 3 of the main manuscript for a discussion on choosing the optimal shrinkage intensity) over the estimation period. Figure \ref{fig_rev1} plots the proportion of zero elements in the precision matrix of the idiosyncratic part, $\widehat{\bTheta}_{\varepsilon}$, corresponding to the optimal values of $\lambda = \hat{\lambda}$, and several fixed values of $\lambda$ for monthly data over the testing period. Extracting the common factors significantly reduces partial correlations of the error terms, rendering $\widehat{\bTheta}_{\varepsilon}$ sparse over the testing period: the number of zeroes for the optimally tuned $\lambda$ varies from 74.5\%-98.8\%. Figure \ref{fig_rev2fixed} plots the Sharpe Ratio of GMV portfolios for a set of fixed values of $\lambda \in \{0.005, 0.01, 0.05, 0.08, 0.1, 0.12, 0.15, 0.17, 0.2, 0.25, 0.3, 0.4, 0.5\}$. In other words, instead of using optimally tuned $\lambda$, we fix its value throughout the whole testing period and report the corresponding SR of such portfolios. For comparison, the SR that corresponds to the optimally tuned $\lambda$ is equal to $0.2023$, which is significantly higher than the SR achieved for any fixed $\lambda$ confirming the importance of selecting shrinkage intensity optimally.

We would like to emphasize that the selection of the tuning parameter is critically important in the literature on graphical models, which is why we build our tuning methodology on the Bayesian Information Criteria (BIC) as used and described in \cite{koike2019biased, Bishop2006,pourahmadi2013high,Sara2018} among others (the detailed treatment relevant to our paper can be found on p.13 of the main manuscript). The advantage of SR obtained using optimally tuned $\lambda$ highlights the importance of tuning and demonstrates that $\lambda$ changes over time. Hence, using a fixed value is expected to produce suboptimal performance.

We now elaborate on the discrepancy between the SR with the optimal vs fixed $\lambda$. Please note that Figure \ref{fig_rev2} should not be compared with Figure \ref{fig_rev2fixed}. In contrast to Figure \ref{fig_rev2fixed}, the range of $\lambda$ in Figure \ref{fig_rev2} is selected optimally by minimizing BIC. In other words, SR is not the objective function that we use for selecting the tuning parameter. To demonstrate the relevant range of $\lambda$ selected by the BIC we have included Figure \ref{fig_rev3fixed} that shows optimally selected $\lambda$ for six different rolling windows.

 
 \begin{figure}[!htbp]
 	\centering
 	\includegraphics[width=0.8\textwidth]{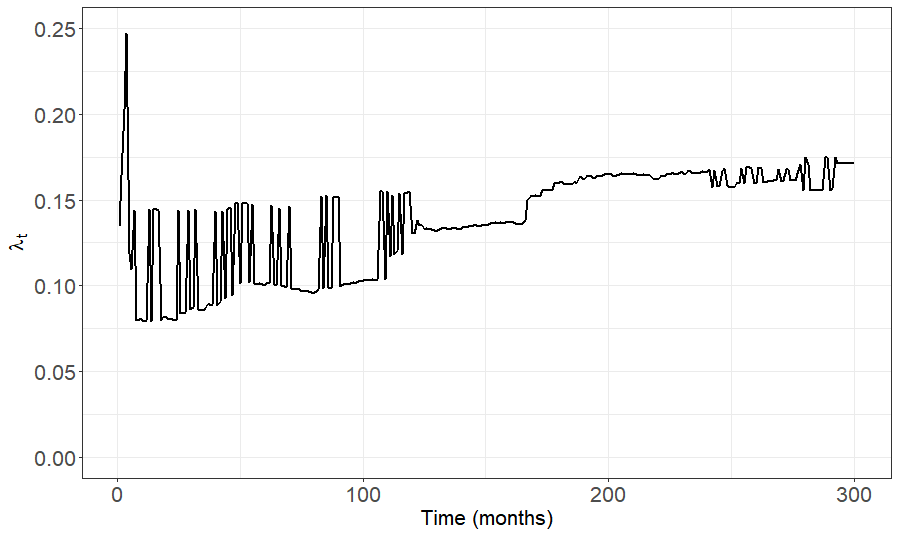}
 	\bigskip
 	\caption{Optimally tuned values of $\lambda$ over the testing period.}
 	\label{fig_rev2}
 \end{figure}
 \begin{figure}[!htbp]
	\centering
	\includegraphics[width=0.8\textwidth]{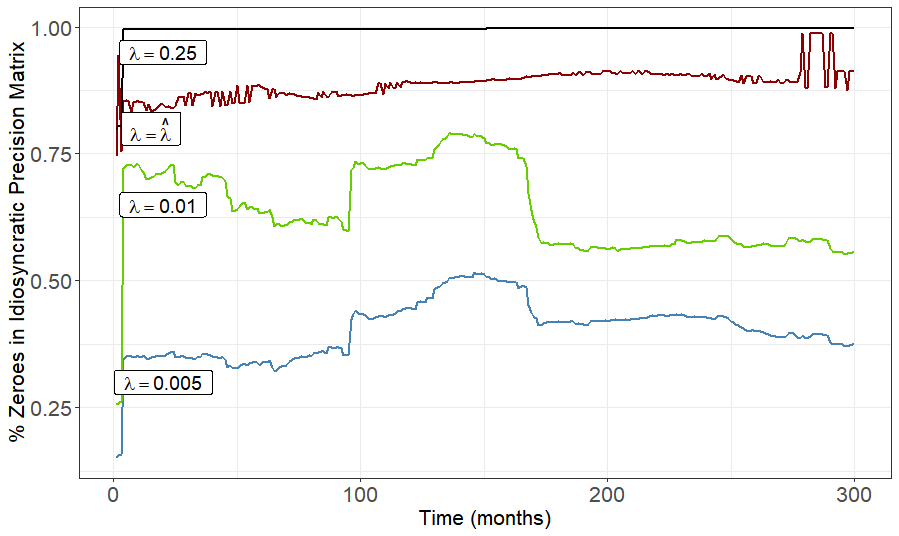}
	\bigskip
	\caption{Proportion of zero elements in $\widehat{\bTheta}_{\varepsilon}$ with respect to the total number of elements in a lower-triangular part of $\widehat{\bTheta}_{\varepsilon}$ (diagonals are excluded) corresponding to the optimal values of $\lambda = \hat{\lambda}$, and several fixed values of $\lambda$.}
	\label{fig_rev1}
\end{figure}
 \begin{figure}[!htbp]
	\centering
	\includegraphics[width=0.8\textwidth]{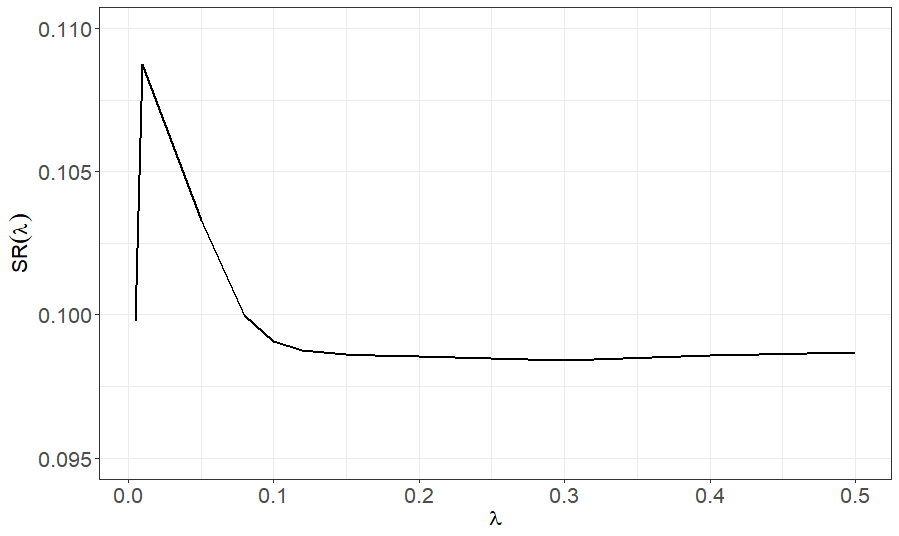}
	\bigskip
	\caption{Sharpe Ratios for GMV portfolios associated with fixed $\lambda$.}
	\label{fig_rev2fixed}
\end{figure}
 \begin{figure}[!htbp]
	\centering
	\includegraphics[width=\textwidth]{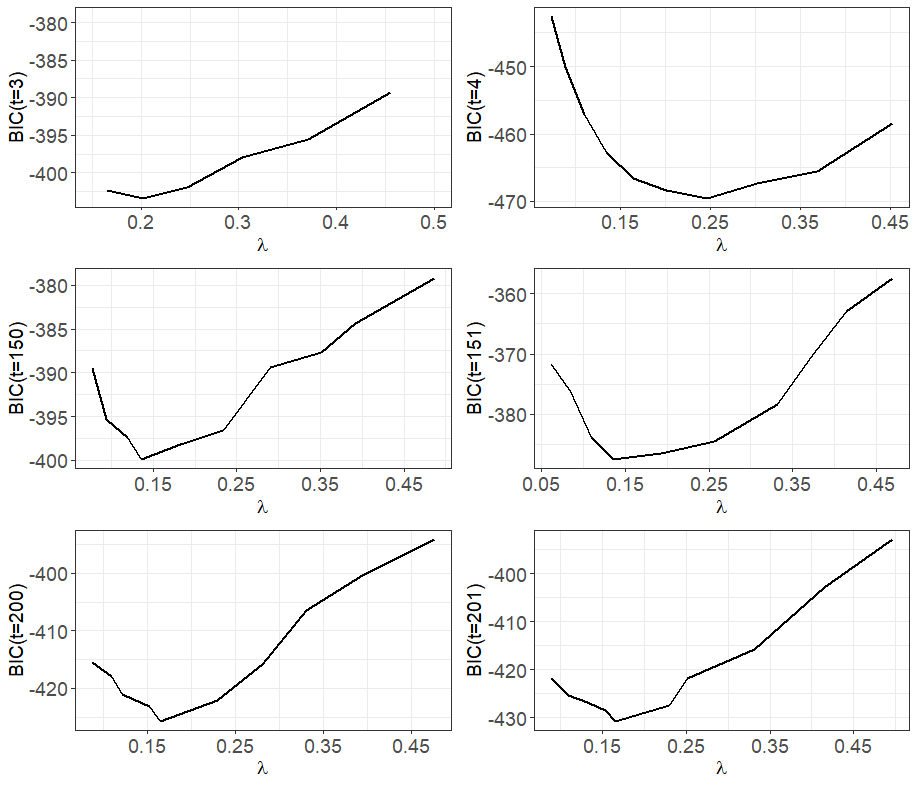}
	\bigskip
	\caption{BICs for several rolling windows indexed by $t=1,\ldots,300$.}
	\label{fig_rev3fixed}
\end{figure}
\begin{landscape}
	\begin{table}[]
		\caption{ \small{Monthly portfolio returns, risk, SR and turnover. In the upper part corresponding to the results w/o transactions costs, p-values are in parentheses. In the lower part corresponding to the results with transaction costs, $^{***}$ indicates p-value $<$ 0.01, $^{**}$ indicates p-value  $<$ 0.05, and $^{*}$ indicates p-value $<$ 0.10.  In-sample: January 1, 1980 - December 31, 1995 (180 obs), Out-of-sample: January 1, 1995 - December 31, 2019 (300 obs).} }
		\centering
		\label{tab1}
		\resizebox{0.8\textwidth}{!}{%
\begin{tabular}{ccccccccccccc} 
	\toprule
	& \multicolumn{4}{c}{Markowitz Risk-Constrained} & \multicolumn{4}{c}{Markowitz Weight-Constrained} & \multicolumn{4}{c}{Global Minimum-Variance} \\ 
	\midrule
	& \textbf{Return} & \textbf{Risk} & \textbf{SR} & \textbf{Turnover} & \textbf{Return} & \textbf{Risk} & \textbf{SR} & \textbf{Turnover} & \textbf{Return} & \textbf{Risk} & \textbf{SR} & \textbf{Turnover} \\ 
	\midrule
	\textbf{Without TC} &  &  &  &  &  &  &  &  &  &  &  &  \\
	EW & 0.0081 & 0.0519 & 0.1553 & - & 0.0081 & 0.0519 & 0.1553 & - & 0.0081 & 0.0519 & 0.1553 & - \\
	Index & 0.0063 & 0.0453 & 0.1389 & - & 0.0063 & 0.0453 & 0.1389 & - & 0.0063 & 0.0453 & 0.1389 & - \\
	FGL & 0.0256 & 0.0828 & \begin{tabular}[c]{@{}c@{}}0.3099\\(0.0799)~ ~\end{tabular} & - & 0.0059 & 0.0329 & \begin{tabular}[c]{@{}c@{}}0.1804\\(0.0430)\end{tabular} & - & 0.0065 & 0.0321 & \begin{tabular}[c]{@{}c@{}}0.2023\\(0.046)\end{tabular} & - \\
	FClime & 0.0372 & 0.2337 & \begin{tabular}[c]{@{}c@{}}0.1593\\(0.2715)\end{tabular} & - & 0.0067 & 0.0471 & \begin{tabular}[c]{@{}c@{}}0.1434\\(0.0791)\end{tabular} & - & 0.0076 & 0.0466 & \begin{tabular}[c]{@{}c@{}}0.1643\\(0.047)\end{tabular} & - \\
	FLW & 0.0296 & 0.1049 & \begin{tabular}[c]{@{}c@{}}0.2817\\(0.0879)\end{tabular} & - & 0.0059 & 0.0353 & \begin{tabular}[c]{@{}c@{}}0.1662\\(0.0791)\end{tabular} & - & 0.0063 & 0.0353 & \begin{tabular}[c]{@{}c@{}}0.1774\\(0.047)\end{tabular} & - \\
	FNLW & 0.0264 & 0.0925 & \begin{tabular}[c]{@{}c@{}}0.2853\\(0.0879)\end{tabular} & - & 0.0060 & 0.0333 & \begin{tabular}[c]{@{}c@{}}0.1793\\(0.0430)\end{tabular} & - & 0.0064 & 0.0332 & \begin{tabular}[c]{@{}c@{}}0.1930\\(0.046)\end{tabular} & - \\
	POET & NaN & NaN & NaN & - & -0.1041 & 2.0105 & \begin{tabular}[c]{@{}c@{}}-0.0518\\(0.9925)\end{tabular} & - & 0.5984 & 11.0064 & \begin{tabular}[c]{@{}c@{}}0.0544\\(0.6344)\end{tabular} & - \\
	Projected POET & 0.0583 & 0.3300 & \begin{tabular}[c]{@{}c@{}}0.1766\\(0.2715)\end{tabular} & - & 0.0058 & 0.0546 & \begin{tabular}[c]{@{}c@{}}0.1056\\(0.0791)\end{tabular} & - & 0.0069 & 0.0612 & \begin{tabular}[c]{@{}c@{}}0.1128\\(0.2693)\end{tabular} & - \\
	FGL (FF1) & 0.0275 & 0.0800 & \begin{tabular}[c]{@{}c@{}}0.3433\\(0.0659)\end{tabular} & - & 0.0061 & 0.0316 & \begin{tabular}[c]{@{}c@{}}0.1941\\(0.0415)\end{tabular} & - & 0.0073 & 0.0302 & \begin{tabular}[c]{@{}c@{}}0.2427\\(0.035)\end{tabular} & - \\
	FGL (FF3) & 0.0274 & 0.0797 & \begin{tabular}[c]{@{}c@{}}0.3437\\(0.0659)\end{tabular} & - & 0.0061 & 0.0314 & \begin{tabular}[c]{@{}c@{}}0.1955\\(0.0415)\end{tabular} & - & 0.0073 & 0.0300 & \begin{tabular}[c]{@{}c@{}}0.2440\\(0.035)\end{tabular} & - \\
	FGL (FF5) & 0.0273 & 0.0793 & \begin{tabular}[c]{@{}c@{}}0.3443\\(0.0659)\end{tabular} & - & 0.0061 & 0.0314 & \begin{tabular}[c]{@{}c@{}}0.1943\\(0.0415)\end{tabular} & - & 0.0073 & 0.0300 & \begin{tabular}[c]{@{}c@{}}0.2426\\(0.035)\end{tabular} & - \\
	FF1 & 0.0403 & 0.2250 & \begin{tabular}[c]{@{}c@{}}0.1789\\(0.2715)\end{tabular} & - & 0.0025 & 0.0548 & \begin{tabular}[c]{@{}c@{}}0.0452\\(0.9318)\end{tabular} & - & 0.0043 & 0.0546 & \begin{tabular}[c]{@{}c@{}}0.0781\\(0.6344)\end{tabular} & - \\
	FF3 & 0.0389 & 0.2022 & \begin{tabular}[c]{@{}c@{}}0.1926\\(0.2715)\end{tabular} & - & 0.0032 & 0.0528 & \begin{tabular}[c]{@{}c@{}}0.0610\\(0.9318)\end{tabular} & - & 0.0047 & 0.0517 & \begin{tabular}[c]{@{}c@{}}0.0915\\(0.6344)\end{tabular} & - \\
	FF5 & 0.0354 & 0.1803 & \begin{tabular}[c]{@{}c@{}}0.1962\\(0.2715)\end{tabular} & - & 0.0036 & 0.0531 & \begin{tabular}[c]{@{}c@{}}0.0670\\(0.9318)\end{tabular} & - & 0.0048 & 0.0513 & \begin{tabular}[c]{@{}c@{}}0.0945\\(0.6344)\end{tabular} & - \\ 
	\midrule
	\textbf{With TC} &  &  &  &  &  &  &  &  &  &  &  &  \\
	EW & 0.0080 & 0.0520 & 0.1538 & 0.0630 & 0.0080 & 0.0520 & 0.1538 & 0.0630 & 0.0080 & 0.0520 & 0.1538 & 0.0630 \\
	FGL & 0.0222 & 0.0828 & 0.2682* & 3.1202 & 0.0050 & 0.0329 & 0.1525* & 0.8786 & 0.0056 & 0.0321 & 0.1740** & 0.8570 \\
	FClime & 0.0334 & 0.2334 & 0.1429 & 4.9174 & 0.0062 & 0.0471 & 0.1307 & 0.5945 & 0.0071 & 0.0466 & 0.1522* & 0.5528 \\
	FLW & 0.0237 & 0.1052 & 0.2257 & 5.5889 & 0.0043 & 0.0353 & 0.1231 & 1.5166 & 0.0048 & 0.0354 & 0.1343 & 1.5123 \\
	FNLW & 0.0224 & 0.0927 & 0.2415* & 3.7499 & 0.0049 & 0.0334 & 0.1463* & 1.0812 & 0.0053 & 0.0333 & 0.1596* & 1.0793 \\
	POET & NaN & NaN & NaN & NaN & -0.1876 & 1.7274 & -0.1086 & 152.3298 & 1.0287 & 14.2676 & 0.0721 & 354.6043 \\
	Projected POET & 0.0166 & 0.2859 & 0.0579 & 69.7600 & -0.0002 & 0.0540 & -0.0044 & 5.9131 & -0.0002 & 0.0613 & -0.0027 & 7.0030 \\
	FGL (FF1) & 0.0243 & 0.0800 & 0.3036* & 2.8514 & 0.0054 & 0.0317 & 0.1692* & 0.7513 & 0.0066 & 0.0302 & 0.2176** & 0.7095 \\
	FGL (FF3) & 0.0242 & 0.0797 & 0.3037* & 2.8708 & 0.0054 & 0.0314 & 0.1703* & 0.7545 & 0.0066 & 0.0300 & 0.2186** & 0.7127 \\
	FGL (FF5) & 0.0241 & 0.0793 & 0.3037* & 2.8857 & 0.0053 & 0.0315 & 0.1686* & 0.7630 & 0.0065 & 0.0300 & 0.2167** & 0.7224 \\
	FF1 & 0.0169 & 0.2331 & 0.0767 & 23.3910 & -0.0023 & 0.0545 & -0.0415 & 4.6257 & -0.0004 & 0.0543 & -0.0079 & 4.5751 \\
	FF3 & 0.0185 & 0.2268 & 0.0924 & 20.6137 & -0.0013 & 0.0524 & -0.0243 & 4.3667 & 0.0003 & 0.0514 & 0.0059 & 4.2956 \\
	FF5 & 0.0164 & 0.2254 & 0.0918 & 18.5514 & -0.0008 & 0.0527 & -0.0145 & 4.2134 & 0.0005 & 0.0508 & 0.0108 & 4.1681 \\
	\bottomrule
\end{tabular}%
		}
		
	\end{table}
\end{landscape}
\subsection{Portfolio Performance for Longer Training Periods} \label{appendixC1}
This section examines the performance of the methods when training periods were increased. Tables \ref{tab2a} and \ref{tab4} report the results: the conclusions that we highlighted when analyzing Tables \ref{tab1} and \ref{tab3} continue to hold. We observed an interesting finding: for MRC portfolios (both monthly and daily), a larger training period changed the values of portfolio return and risk for all methods, however, their relative value illustrated by the SR remained unchanged. This is due to the fact that MRC portfolios maximize SR subject to either target risk or target return constraints:
	\begin{equation} \nonumber
		\max_{\bw} \frac{\boldm'\bw}{\sqrt{\bw'\bSigma\bw}} \ \text{s.t.}\ \text{(i)} \ \boldm'\bw\geq\mu \ \text{or} \text{(ii)} \ \bw'\bSigma\bw\leq \sigma^2,
	\end{equation}
	when  $\mu=\sigma\sqrt{\boldm'\bTheta\boldm}$, the solution to either of the constraints is given by
	$\bw_{\text{MRC}}=\frac{\sigma}{\sqrt{\boldm'\bTheta\boldm}}\bTheta\boldm$. Hence, even though the training period was increased, the maximum achievable SR remained the same since neither target risk nor target return were changed.
\clearpage
\newpage
	\begin{landscape}
	\begin{table}[]
		\caption{\small {Monthly portfolio returns, risk, SR and turnover. In the upper part corresponding to the results w/o transactions costs, p-values are in parentheses. In the lower part corresponding to the results with transaction costs, $^{***}$ indicates p-value $<$ 0.01, $^{**}$ indicates p-value  $<$ 0.05, and $^{*}$ indicates p-value $<$ 0.10.  In-sample: January 1, 1980 - December 31, 1999 (240 obs), Out-of-sample: January 1, 2000 - December 31, 2019 (240 obs).} }
		\centering
		\label{tab2a}
		\resizebox{0.8\textwidth}{!}{%
\begin{tabular}{ccccccccccccc} 
	\toprule
	& \multicolumn{4}{c}{Markowitz Risk-Constrained} & \multicolumn{4}{c}{Markowitz Weight-Constrained} & \multicolumn{4}{c}{Global Minimum-Variance} \\ 
	\midrule
	& \textbf{Return} & \textbf{Risk} & \textbf{SR} & \textbf{Turnover} & \textbf{Return} & \textbf{Risk} & \textbf{SR} & \textbf{Turnover} & \textbf{Return} & \textbf{Risk} & \textbf{SR} & \textbf{Turnover} \\ 
	\midrule
	\textbf{Without TC} &  &  &  &  &  &  &  &  &  &  &  &  \\
	EW & 0.0061 & 0.0029 & 0.1126 & - & 0.0061 & 0.0029 & 0.1126 & - & 0.0061 & 0.0029 & 0.1126 & - \\
	Index & 0.0032 & 0.0022 & 0.0692 & - & 0.0032 & 0.0022 & 0.0692 & - & 0.0032 & 0.0022 & 0.0692 & - \\
	FGL & 0.0223 & 0.0805 & \begin{tabular}[c]{@{}c@{}}0.2770\\(0.0609)\end{tabular} & - & 0.0054 & 0.0343 & \begin{tabular}[c]{@{}c@{}}0.1581\\(0.0939)\end{tabular} & - & 0.0062 & 0.0333 & \begin{tabular}[c]{@{}c@{}}0.1848\\(0.1189)\end{tabular} & - \\
	FClime & 0.0339 & 0.2642 & \begin{tabular}[c]{@{}c@{}}0.1285\\(0.0989)\end{tabular} & - & 0.0068 & 0.0510 & \begin{tabular}[c]{@{}c@{}}0.1341\\(0.1119)\end{tabular} & - & 0.0072 & 0.0483 & \begin{tabular}[c]{@{}c@{}}0.1482\\(0.1419)\end{tabular} & - \\
	FLW & 0.0268 & 0.1030 & \begin{tabular}[c]{@{}c@{}}0.2606\\(0.0609)\end{tabular} & - & 0.0047 & 0.0400 & \begin{tabular}[c]{@{}c@{}}0.1173\\(0.1119)\end{tabular} & - & 0.0056 & 0.0396 & \begin{tabular}[c]{@{}c@{}}0.1423\\(0.1419)\end{tabular} & - \\
	FNLW & 0.0234 & 0.0887 & \begin{tabular}[c]{@{}c@{}}0.2633\\(0.0609)\end{tabular} & - & 0.0050 & 0.0360 & \begin{tabular}[c]{@{}c@{}}0.1397\\(0.1119)\end{tabular} & - & 0.0060 & 0.0354 & \begin{tabular}[c]{@{}c@{}}0.1701\\(0.1189)\end{tabular} & - \\
	POET & NaN & NaN & NaN & - & 0.2800 & 3.7119 & \begin{tabular}[c]{@{}c@{}}0.0754\\(0.2617)\end{tabular} & - & -0.0239 & 0.6870 & \begin{tabular}[c]{@{}c@{}}-0.0348\\(0.7729)\end{tabular} & - \\
	Projected POET & -0.0371 & 0.8637 & \begin{tabular}[c]{@{}c@{}}-0.0430\\(0.9137)\end{tabular} & - & 0.0032 & 0.0447 & \begin{tabular}[c]{@{}c@{}}0.0708\\(0.2617)\end{tabular} & - & 0.0014 & 0.0823 & \begin{tabular}[c]{@{}c@{}}0.0174\\(0.7729)\end{tabular} & - \\
	FGL(FF1) & 0.0224 & 0.0775 & \begin{tabular}[c]{@{}c@{}}0.2891\\(0.0579)\end{tabular} & - & 0.0056 & 0.0325 & \begin{tabular}[c]{@{}c@{}}0.1721\\(0.0839)\end{tabular} & - & 0.0064 & 0.0307 & \begin{tabular}[c]{@{}c@{}}0.2070\\(0.0889)\end{tabular} & - \\
	FGL(FF3) & 0.0223 & 0.0772 & 0.2885 & - & 0.0056 & 0.0324 & \begin{tabular}[c]{@{}c@{}}0.1728\\(0.0839)\end{tabular} & - & 0.0064 & 0.0307 & \begin{tabular}[c]{@{}c@{}}0.2075\\(0.0889)\end{tabular} & - \\
	FGL(FF5) & 0.0222 & 0.0769 & \begin{tabular}[c]{@{}c@{}}0.2887\\(0.0579)\end{tabular} & - & 0.0056 & 0.0325 & \begin{tabular}[c]{@{}c@{}}0.1719\\(0.0839)\end{tabular} & - & 0.0063 & 0.0307 & \begin{tabular}[c]{@{}c@{}}0.2058\\(0.0889)\end{tabular} & - \\
	FF1 & -0.0074 & 1.0260 & \begin{tabular}[c]{@{}c@{}}-0.0072\\(0.9512)\end{tabular} & - & 0.0090 & 0.1158 & \begin{tabular}[c]{@{}c@{}}0.0777\\(0.2617)\end{tabular} & - & 0.0099 & 0.1162 & \begin{tabular}[c]{@{}c@{}}0.0853\\(0.1738)\end{tabular} & - \\
	FF3 & 0.0234 & 0.9301 & \begin{tabular}[c]{@{}c@{}}0.0252\\(0.9512)\end{tabular} & - & 0.0073 & 0.1011 & \begin{tabular}[c]{@{}c@{}}0.0726\\(0.2617)\end{tabular} & - & 0.0083 & 0.1040 & \begin{tabular}[c]{@{}c@{}}0.0796\\(0.1738)\end{tabular} & - \\
	FF5 & 0.0111 & 0.7406 & \begin{tabular}[c]{@{}c@{}}0.0149\\(0.9512)\end{tabular} & - & 0.0064 & 0.0906 & \begin{tabular}[c]{@{}c@{}}0.0708\\(0.2617)\end{tabular} & - & 0.0056 & 0.0942 & \begin{tabular}[c]{@{}c@{}}0.0590\\(0.1738)\end{tabular} & - \\ 
	\midrule
	\textbf{With TC} &  &  &  &  &  &  &  &  &  &  &  &  \\
	EW & 0.0060 & 0.0029 & 0.1109 & 0.0627 & 0.0060 & 0.0029 & 0.1109 & 0.0627 & 0.0060 & 0.0029 & 0.1109 & 0.0627 \\
	FGL & 0.0190 & 0.1826 & 0.2351* & 3.0398 & 0.0044 & 0.0344 & 0.1280 & 0.9940 & 0.0052 & 0.0334 & 0.1548 & 0.9519 \\
	FClime & 0.0305 & 0.7518 & 0.1158 & 5.1354 & 0.0063 & 0.0510 & 0.1231 & 0.5606 & 0.0067 & 0.0484 & 0.1378 & 0.4962 \\
	FLW & 0.0203 & 0.1992 & 0.1961* & 6.3365 & 0.0028 & 0.0401 & 0.0703 & 1.8813 & 0.0038 & 0.0397 & 0.0952 & 1.8528 \\
	FNLW & 0.0195 & 0.2199 & 0.2193* & 3.6238 & 0.0038 & 0.0361 & 0.1066 & 1.1847 & 0.0049 & 0.0355 & 0.1367 & 1.1596 \\
	POET & NaN & NaN & NaN & NaN & 0.1707 & 3.4886 & 0.0489 & 151.5813 & -0.0846 & 0.5652 & -0.1496 & 76.9912 \\
	Projected POET & -0.0556 & 0.1885 & -0.0757* & 33.2465 & -0.0014 & 0.0451 & -0.0313 & 4.3028 & -0.0031 & 0.0734 & -0.0429 & 15.5429 \\
	FGL(FF1) & 0.0194 & 0.2708 & 0.2497* & 2.7486 & 0.0047 & 0.0326 & 0.1454 & 0.8299 & 0.0055 & 0.0308 & 0.1803 & 0.7703 \\
	FGL(FF3) & 0.0192 & 0.1754 & 0.2485* & 2.7682 & 0.0047 & 0.0325 & 0.1460 & 0.8343 & 0.0055 & 0.0307 & 0.1807 & 0.7754 \\
	FGL(FF5) & 0.0192 & 0.1753 & 0.2488* & 2.7807 & 0.0047 & 0.0325 & 0.1451 & 0.8377 & 0.0055 & 0.0307 & 0.1789 & 0.7815 \\
	FF1 & -0.1143 & 0.1753 & -0.1499 & 338.4639 & -0.0039 & 0.1160 & -0.0337 & 13.1081 & -0.0041 & 0.1167 & -0.0353 & 14.2580 \\
	FF3 & -0.0763 & 0.3416 & -0.1047 & 285.5053 & -0.0038 & 0.1010 & -0.0380 & 11.3194 & -0.0042 & 0.1043 & -0.0407 & 12.6905 \\
	FF5 & -0.0547 & 0.3229 & -0.0922 & 604.2117 & -0.0038 & 0.0904 & -0.0423 & 10.3659 & -0.0052 & 0.0944 & -0.0548 & 10.9466 \\
	\bottomrule
\end{tabular}%
		}
		
	\end{table}
\end{landscape}
\begin{landscape}
	\begin{table}[]
		\centering
		\caption{\small{Daily portfolio returns, risk, SR and turnover. In the upper part corresponding to the results w/o transactions costs, p-values are in parentheses. In the lower part corresponding to the results with transaction costs, $^{***}$ indicates p-value $<$ 0.01, $^{**}$ indicates p-value  $<$ 0.05, and $^{*}$ indicates p-value $<$ 0.10. In-sample: January 20, 2000 - January 25, 2005  (1260 obs), Out-of-sample: January 26, 2005 - January 31, 2020 (3780 obs).}}
		\label{tab4}
		\resizebox{0.9\textwidth}{!}{%
\begin{tabular}{ccccccccccccc} 
	\toprule
	& \multicolumn{4}{c}{Markowitz Risk-Constrained} & \multicolumn{4}{c}{Markowitz Weight-Constrained} & \multicolumn{4}{c}{Global Minimum-Variance} \\ 
	\midrule
	& \textbf{Return} & \textbf{Risk} & \textbf{SR} & \textbf{Turnover} & \textbf{Return} & \textbf{Risk} & \textbf{SR} & \textbf{Turnover} & \textbf{Return} & \textbf{Risk} & \textbf{SR} & \textbf{Turnover} \\ 
	\midrule
	\textbf{Without TC} &  &  &  &  &  &  &  &  &  &  &  &  \\
	EW & 2.19E-04 & 1.98E-02 & 0.0111 & - & 2.19E-04 & 1.98E-02 & 0.0111 & - & 2.19E-04 & 1.98E-02 & 0.0111 & - \\
	Index & 2.15E-04 & 1.16E-02 & 0.0185 & - & 2.15E-04 & 1.16E-02 & 0.0185 & - & 2.15E-04 & 1.16E-02 & 0.0185 & - \\
	FGL & 8.86E-04 & 2.90E-02 & \begin{tabular}[c]{@{}c@{}}0.0305\\(0.0450)\end{tabular} & - & 3.51E-04 & 7.07E-03 & \begin{tabular}[c]{@{}c@{}}0.0496\\(0.0020)\end{tabular} & - & 3.51E-04 & 6.98E-03 & \begin{tabular}[c]{@{}c@{}}0.0503\\(0.0025)\end{tabular} & - \\
	FClime & 1.30E-03 & 8.36E-02 & \begin{tabular}[c]{@{}c@{}}0.0156\\(0.2513)\end{tabular} & - & 2.41E-04 & 1.04E-02 & \begin{tabular}[c]{@{}c@{}}0.0231\\(0.0315)\end{tabular} & - & 2.75E-04 & 1.10E-02 & \begin{tabular}[c]{@{}c@{}}0.0250\\(0.0415)\end{tabular} & - \\
	FLW & 4.24E-04 & 2.88E-02 & \begin{tabular}[c]{@{}c@{}}0.0147\\(0.2513)\end{tabular} & - & 3.12E-04 & 7.06E-03 & \begin{tabular}[c]{@{}c@{}}0.0443\\(0.0025)\end{tabular} & - & 3.15E-04 & 7.41E-03 & \begin{tabular}[c]{@{}c@{}}0.0425\\(0.0033)\end{tabular} & - \\
	FNLW & 3.20E-04 & 5.33E-02 & \begin{tabular}[c]{@{}c@{}}0.0060\\(0.6397)\end{tabular} & - & 3.23E-04 & 7.01E-03 & \begin{tabular}[c]{@{}c@{}}0.0461\\(0.0020)\end{tabular} & - & 3.49E-04 & 8.44E-03 & \begin{tabular}[c]{@{}c@{}}0.0414\\(0.0033)\end{tabular} & - \\
	POET & NaN & NaN & NaN & - & 5.39E-03 & 3.82E-01 & \begin{tabular}[c]{@{}c@{}}0.0141\\(0.1384)\end{tabular} & - & -8.23E-05 & 9.49E-02 & \begin{tabular}[c]{@{}c@{}}-0.0009\\(0.9218)\end{tabular} & - \\
	Projected POET & 7.86E-04 & 7.74E-02 & \begin{tabular}[c]{@{}c@{}}0.0101\\(0.2513)\end{tabular} & - & -1.70E-04 & 1.09E-02 & \begin{tabular}[c]{@{}c@{}}-0.0156\\(0.9713)\end{tabular} & - & -1.78E-04 & 1.15E-02 & \begin{tabular}[c]{@{}c@{}}-0.0155\\(0.9218)\end{tabular} & - \\
	FGL(FF1) & 6.03E-04 & 3.56E-02 & \begin{tabular}[c]{@{}c@{}}0.0169\\(0.2513)\end{tabular} & - & 3.58E-04 & 6.98E-03 & \begin{tabular}[c]{@{}c@{}}0.0513\\(0.0010)\end{tabular} & - & 3.68E-04 & 7.02E-03 & \begin{tabular}[c]{@{}c@{}}0.0523\\(0.0025)\end{tabular} & - \\
	FGL(FF3) & 6.02E-04 & 3.56E-02 & \begin{tabular}[c]{@{}c@{}}0.0169\\(0.2513)\end{tabular} & - & 3.58E-04 & 6.98E-03 & \begin{tabular}[c]{@{}c@{}}0.0514\\(0.0010)\end{tabular} & - & 3.68E-04 & 7.02E-03 & \begin{tabular}[c]{@{}c@{}}0.0524\\(0.0025)\end{tabular} & - \\
	FGL(FF5) & 6.01E-04 & 3.56E-02 & \begin{tabular}[c]{@{}c@{}}0.0169\\(0.2513)\end{tabular} & - & 3.57E-04 & 6.98E-03 & \begin{tabular}[c]{@{}c@{}}0.0512\\(0.0010)\end{tabular} & - & 3.67E-04 & 7.02E-03 & \begin{tabular}[c]{@{}c@{}}0.0522\\(0.0025)\end{tabular} & - \\
	FF1 & 6.13E-04 & 5.22E-02 & \begin{tabular}[c]{@{}c@{}}0.0117\\(0.2513)\end{tabular} & - & 2.93E-04 & 7.23E-03 & \begin{tabular}[c]{@{}c@{}}0.0405\\(0.0032)\end{tabular} & - & 2.99E-04 & 8.06E-03 & \begin{tabular}[c]{@{}c@{}}0.0371\\(0.0285)\end{tabular} & - \\
	FF3 & 6.13E-04 & 5.22E-02 & \begin{tabular}[c]{@{}c@{}}0.0117\\(0.2513)\end{tabular} & - & 2.93E-04 & 7.23E-03 & \begin{tabular}[c]{@{}c@{}}0.0405\\(0.0032)\end{tabular} & - & 2.99E-04 & 8.06E-03 & \begin{tabular}[c]{@{}c@{}}0.0371\\(0.0285)\end{tabular} & - \\
	FF5 & 6.13E-04 & 5.22E-02 & \begin{tabular}[c]{@{}c@{}}0.0117\\(0.2513)\end{tabular} & - & 2.93E-04 & 7.23E-03 & \begin{tabular}[c]{@{}c@{}}0.0405\\(0.0032)\end{tabular} & - & 2.99E-04 & 8.06E-03 & \begin{tabular}[c]{@{}c@{}}0.0371\\(0.0285)\end{tabular} & - \\ 
	\midrule
	\textbf{With TC} &  &  &  &  &  &  &  &  &  &  &  &  \\
	EW & 1.87E-04 & 1.98E-02 & 0.0094 & 0.0294 & 1.87E-04 & 1.98E-02 & 0.0094 & 0.0294 & 1.87E-04 & 1.98E-02 & 0.0094 & 0.0294 \\
	FGL & 5.07E-04 & 8.37E-02 & 0.0175 & 0.3845 & 2.64E-04 & 7.09E-03 & 0.0372*** & 0.0882 & 2.66E-04 & 7.00E-03 & 0.038** & 0.0863 \\
	FClime & 4.36E-04 & 3.03E-01 & 0.0052 & 0.9279 & 2.10E-04 & 1.04E-02 & 0.0201* & 0.0333 & 2.51E-04 & 1.10E-02 & 0.0228** & 0.0266 \\
	FLW & -1.26E-05 & 8.62E-02 & -0.0004 & 0.4399 & 1.86E-04 & 7.09E-03 & 0.0263** & 0.1267 & 1.91E-04 & 7.44E-03 & 0.0256** & 0.1251 \\
	FNLW & -4.74E-04 & 1.05E-01 & -0.009 & 0.806 & 1.38E-04 & 7.07E-03 & 0.0195 & 0.1856 & 1.65E-04 & 8.48E-03 & 0.0194 & 0.1846 \\
	POET & NaN & NaN & NaN & NaN & -6.70E-03 & 1.95E-01 & -0.0344 & 25.1772 & -7.54E-03 & 9.20E-02 & -0.082 & 11.6087 \\
	Projected POET & -3.51E-03 & 9.21E-02 & -0.0485 & 10.9077 & -6.81E-04 & 1.12E-02 & -0.0610 & 0.5127 & -7.37E-04 & 1.19E-02 & -0.0621 & 0.5609 \\
	FGL(FF1) & 2.02E-04 & 1.09E-01 & 0.0057 & 0.4028 & 2.68E-04 & 7.00E-03 & 0.0383*** & 0.0899 & 2.80E-04 & 7.04E-03 & 0.0397** & 0.0877 \\
	FGL(FF3) & 2.02E-04 & 8.39E-02 & 0.0057 & 0.4028 & 2.68E-04 & 7.00E-03 & 0.0383*** & 0.0901 & 2.80E-04 & 7.04E-03 & 0.0397** & 0.0879 \\
	FGL(FF5) & 1.99E-04 & 8.39E-02 & 0.0056 & 0.4032 & 2.67E-04 & 6.99E-03 & 0.0382*** & 0.0901 & 2.79E-04 & 7.04E-03 & 0.0396** & 0.088 \\
	FF1 & -7.16E-04 & 8.39E-02 & -0.0139 & 1.3523 & 1.61E-05 & 7.34E-03 & 0.0022 & 0.2748 & 2.35E-05 & 8.15E-03 & 0.0029 & 0.2736 \\
	FF3 & -7.16E-04 & 9.03E-02 & -0.0139 & 1.3523 & 1.61E-05 & 7.34E-03 & 0.0022 & 0.2748 & 2.35E-05 & 8.15E-03 & 0.0029 & 0.2736 \\
	FF5 & -7.16E-04 & 9.03E-02 & -0.0139 & 1.3523 & 1.61E-05 & 7.34E-03 & 0.0022 & 0.2748 & 2.35E-05 & 8.15E-03 & 0.0029 & 0.2736 \\
	\bottomrule
\end{tabular}%
		}
	\end{table}
\end{landscape}
\clearpage
\subsection{Less Risk-Averse Investors} \label{appendixC2}
Tables \ref{tab1a} and \ref{tab1b} provide the empirical results for higher target levels of risk and return for both monthly and daily data: target risk for monthly and daily data is set at $\sigma=0.08$ and $\sigma=0.02$, respectively. Target return for monthly and daily data is set at $1.1715\%$ and $0.0555\%$, respectively, both are equivalent to $15\%$ yearly return when compounded. Since GMV portfolio weight is not affected by target risk and return, only updated results for MRC and MWC are reported. Furthermore, since EW and Index portfolios are also not affected by target risk and return, their values are the same as in Table \ref{tab1} and, hence, are also not reported to avoid repetition. The conclusions that we highlighted when analyzing updated Tables \ref{tab1} and \ref{tab3} continue to hold.
\clearpage
\newpage
	\begin{table}[]
	\caption{ Monthly portfolio returns, risk, SR and turnover. Targeted risk is set at $\sigma=0.08$, monthly targeted return is $1.1715\%$ which is equivalent to $15\%$ yearly return when compounded. In the upper part corresponding to the results w/o transactions costs, p-values are in parentheses. In the lower part corresponding to the results with transaction costs, $^{***}$ indicates p-value $<$ 0.01, $^{**}$ indicates p-value  $<$ 0.05, and $^{*}$ indicates p-value $<$ 0.10. In-sample: January 1, 1980 - December 31, 1995 (180 obs), Out-of-sample: January 1, 1995 - December 31, 2019 (300 obs).} 
	\label{tab1a}
	\centering
	\resizebox{0.9\textwidth}{!}{%
\begin{tabular}{ccccccccc} 
	\toprule
	& \multicolumn{4}{c}{Markowitz Risk-Constrained} & \multicolumn{4}{c}{Markowitz Weight-Constrained} \\ 
	\hline
	& \textbf{Return} & \textbf{Risk} & \textbf{SR} & \textbf{Turnover} & \textbf{Return} & \textbf{Risk~} & \textbf{SR} & \textbf{Turnover} \\ 
	\hline
	\textbf{Without TC} &  &  &  &  &  &  &  &  \\
	FGL & 0.041 & 0.1324 & \begin{tabular}[c]{@{}c@{}}0.3099\\(0.0769)\end{tabular} & - & 0.0069 & 0.0317 & \begin{tabular}[c]{@{}c@{}}0.2187\\(0.028)\end{tabular} & - \\
	FClime & 0.0596 & 0.3739 & \begin{tabular}[c]{@{}c@{}}0.1593\\(0.1272)\end{tabular} & - & 0.0076 & 0.0441 & \begin{tabular}[c]{@{}c@{}}0.1717\\(0.034)\end{tabular} & - \\
	FLW & 0.0473 & 0.1679 & \begin{tabular}[c]{@{}c@{}}0.2817\\(0.0849)\end{tabular} & - & 0.007 & 0.0344 & \begin{tabular}[c]{@{}c@{}}0.2047\\(0.028)\end{tabular} & - \\
	FNLW & 0.0422 & 0.148 & \begin{tabular}[c]{@{}c@{}}0.2853\\(0.0849)\end{tabular} & - & 0.0071 & 0.0324 & \begin{tabular}[c]{@{}c@{}}0.2190\\(0.028)\end{tabular} & - \\
	POET & NaN & NaN & NaN & - & -0.1144 & 1.9928 & \begin{tabular}[c]{@{}c@{}}-0.0574\\(0.9471)\end{tabular} & - \\
	Projected POET & 0.0933 & 0.5281 & \begin{tabular}[c]{@{}c@{}}0.1766\\(0.1272)\end{tabular} & - & 0.0075 & 0.051 & \begin{tabular}[c]{@{}c@{}}0.1471\\(0.0837)\end{tabular} & - \\
	FGL(FF1) & 0.0439 & 0.128 & \begin{tabular}[c]{@{}c@{}}0.3433\\(0.0649)\end{tabular} & - & 0.0072 & 0.0303 & \begin{tabular}[c]{@{}c@{}}0.2369\\(0.0220)\end{tabular} & - \\
	FGL(FF3) & 0.0438 & 0.1275 & \begin{tabular}[c]{@{}c@{}}0.3437\\(0.0649)\end{tabular} & - & 0.0072 & 0.0301 & \begin{tabular}[c]{@{}c@{}}0.2385\\(0.0220)\end{tabular} & - \\
	FGL(FF5) & 0.0437 & 0.1269 & \begin{tabular}[c]{@{}c@{}}0.3443\\(0.0649)\end{tabular} & - & 0.0072 & 0.0301 & \begin{tabular}[c]{@{}c@{}}0.2377\\(0.0220)\end{tabular} & - \\
	FF1 & 0.0644 & 0.36 & \begin{tabular}[c]{@{}c@{}}0.1789\\(0.1272)\end{tabular} & - & 0.0038 & 0.0538 & \begin{tabular}[c]{@{}c@{}}0.0706\\(0.4833)\end{tabular} & - \\
	FF3 & 0.0623 & 0.3235 & \begin{tabular}[c]{@{}c@{}}0.1926\\(0.1272)\end{tabular} & - & 0.0045 & 0.0513 & \begin{tabular}[c]{@{}c@{}}0.0869\\(0.4833)\end{tabular} & - \\
	FF5 & 0.0566 & 0.2885 & \begin{tabular}[c]{@{}c@{}}0.1962\\(0.1272)\end{tabular} & - & 0.0047 & 0.0513 & \begin{tabular}[c]{@{}c@{}}0.0908\\(0.4833)\end{tabular} & - \\ 
	\hline
	\textbf{With TC} &  &  &  &  &  &  &  &  \\
	FGL & 0.0353 & 0.1792 & 0.2666* & 5.2184 & 0.006 & 0.0317 & 0.1897** & 0.8622 \\
	FClime & 0.0528 & 3.7772 & 0.1422 & 10.133 & 0.007 & 0.0442 & 0.1577* & 0.5971 \\
	FLW & 0.0375 & 0.1881 & 0.223 & 9.5001 & 0.0055 & 0.0345 & 0.1606* & 1.5019 \\
	FNLW & 0.0355 & 0.2159 & 0.2393* & 6.3769 & 0.006 & 0.0325 & 0.185** & 1.0653 \\
	POET & NaN & NaN & NaN & NaN & -0.1933 & 1.7451 & -0.1108 & 124.9832 \\
	Projected POET & 0.0313 & 0.1825 & 0.073 & 85.8766 & 0.0014 & 0.0505 & 0.0277 & 5.9556 \\
	FGL(FF1) & 0.0386 & 0.2476 & 0.3018* & 4.8017 & 0.0064 & 0.0303 & 0.2113** & 0.7219 \\
	FGL(FF3) & 0.0385 & 0.1738 & 0.3018* & 4.8312 & 0.0064 & 0.0301 & 0.2127** & 0.7245 \\
	FGL(FF5) & 0.0383 & 0.1733 & 0.3018* & 4.8537 & 0.0064 & 0.0302 & 0.2112** & 0.7335 \\
	FF1 & 0.0244 & 0.1733 & 0.0707 & 64.7017 & -0.0009 & 0.0535 & -0.0162 & 4.5438 \\
	FF3 & 0.028 & 0.2331 & 0.0896 & 168.9642 & 4.04E-05 & 0.051 & 0.0008 & 4.2854 \\
	FF5 & 0.0237 & 0.2268 & 0.0836 & 34.1596 & 0.0004 & 0.0509 & 0.0077 & 4.1438 \\
	\bottomrule
\end{tabular}
	}
\end{table}
\begin{table}[]
	\caption{ Daily portfolio returns, risk, SR and turnover. Targeted risk is set at $\sigma=0.02$, daily targeted return is $0.0555\%$ which is equivalent to $15\%$ yearly return when compounded.  In the upper part corresponding to the results w/o transactions costs, p-values are in parentheses. In the lower part corresponding to the results with transaction costs, $^{***}$ indicates p-value $<$ 0.01, $^{**}$ indicates p-value  $<$ 0.05, and $^{*}$ indicates p-value $<$ 0.10. In-sample: January 20, 2000 - January 24, 2002  (504 obs), Out-of-sample: January 17, 2002 - January 31, 2020 (4536 obs). }
	\label{tab1b}
	\centering
	\resizebox{0.9\textwidth}{!}{%
\begin{tabular}{ccccccccc} 
	\toprule
	& \multicolumn{4}{c}{Markowitz Risk-Constrained} & \multicolumn{4}{c}{Markowitz Weight-Constrained} \\ 
	\hline
	& \textbf{Return} & \textbf{Risk} & \textbf{SR} & \textbf{Turnover} & \textbf{Return} & \textbf{Risk~} & \textbf{SR} & \textbf{Turnover} \\ 
	\hline
	\textbf{Without TC} &  &  &  &  &  &  &  &  \\
	FGL & 1.25E-03 & 4.09E-02 & \begin{tabular}[c]{@{}c@{}}0.0305\\(0.0709)\end{tabular} & - & 3.10E-04 & 7.86E-03 & \begin{tabular}[c]{@{}c@{}}0.0394\\(0.0260)\end{tabular} & - \\
	FClime & 3.30E-03 & 1.30E-01 & \begin{tabular}[c]{@{}c@{}}0.0254\\(0.0814)\end{tabular} & - & 2.20E-04 & 9.61E-03 & \begin{tabular}[c]{@{}c@{}}0.0229\\(0.036)\end{tabular} & - \\
	FLW & 6.68E-04 & 4.08E-02 & \begin{tabular}[c]{@{}c@{}}0.0164\\(0.1539)\end{tabular} & - & 3.21E-04 & 9.36E-03 & \begin{tabular}[c]{@{}c@{}}0.0343\\(0.0280)\end{tabular} & - \\
	FNLW & 7.56E-04 & 1.02E-01 & \begin{tabular}[c]{@{}c@{}}0.0074\\(0.7312)\end{tabular} & - & 3.02E-04 & 1.16E-02 & \begin{tabular}[c]{@{}c@{}}0.0261\\(0.0360)\end{tabular} & - \\
	POET & NaN & NaN & NaN & - & -5.17E-04 & 2.89E-01 & \begin{tabular}[c]{@{}c@{}}-0.0018\\(0.7419)\end{tabular} & - \\
	Projected POET & 1.84E-03 & 2.63E-01 & \begin{tabular}[c]{@{}c@{}}0.0070\\(0.7312)\end{tabular} & - & -6.76E-05 & 1.58E-02 & \begin{tabular}[c]{@{}c@{}}-0.0043\\(0.7419)\end{tabular} & - \\
	FGL(FF1) & 1.24E-03 & 4.10E-02 & \begin{tabular}[c]{@{}c@{}}0.0303\\(0.0709)\end{tabular} & - & 3.10E-04 & 7.56E-03 & \begin{tabular}[c]{@{}c@{}}0.0410\\(0.0260)\end{tabular} & - \\
	FGL(FF3) & 1.25E-03 & 4.09E-02 & \begin{tabular}[c]{@{}c@{}}0.0306\\(0.0709)\end{tabular} & - & 3.15E-04 & 7.54E-03 & \begin{tabular}[c]{@{}c@{}}0.0417\\(0.0260)\end{tabular} & - \\
	FGL(FF5) & 1.24E-03 & 4.11E-02 & \begin{tabular}[c]{@{}c@{}}0.0301\\(0.0709)\end{tabular} & - & 3.15E-04 & 7.52E-03 & \begin{tabular}[c]{@{}c@{}}0.0419\\(0.0260)\end{tabular} & - \\
	FF1 & 1.14E-03 & 1.71E-01 & \begin{tabular}[c]{@{}c@{}}0.0067\\(0.7312)\end{tabular} & - & 3.78E-05 & 1.64E-02 & \begin{tabular}[c]{@{}c@{}}0.0023\\(0.5813)\end{tabular} & - \\
	FF3 & 1.16E-03 & 1.70E-01 & \begin{tabular}[c]{@{}c@{}}0.0068\\(0.7312)\end{tabular} & - & 3.14E-05 & 1.64E-02 & \begin{tabular}[c]{@{}c@{}}0.0019\\(0.5813)\end{tabular} & - \\
	FF5 & 1.17E-03 & 1.70E-01 & \begin{tabular}[c]{@{}c@{}}0.0069\\(0.7312)\end{tabular} & - & 2.47E-05 & 1.64E-02 & \begin{tabular}[c]{@{}c@{}}0.0015\\(0.5813)\end{tabular} & - \\ 
	\midrule
	\textbf{With TC} &  &  &  &  &  &  &  &  \\
	FGL & 6.14E-04 & 8.67E-02 & 0.0150 & 0.6385 & 2.43E-04 & 7.86E-03 & 0.0310** & 0.0673 \\
	FClime & 1.31E-03 & 6.49E-01 & 0.0101 & 2.4056 & 1.84E-04 & 9.61E-03 & 0.0191 & 0.0382 \\
	FLW & -1.58E-04 & 9.69E-02 & -0.0039 & 0.8283 & 2.01E-04 & 9.38E-03 & 0.0214** & 0.1218 \\
	FNLW & -4.50E-03 & 1.03E-01 & -0.0422 & 10.5211 & 5.71E-05 & 1.17E-02 & 0.0049 & 0.2461 \\
	POET & NaN & NaN & NaN & NaN & -2.50E-02 & 6.21E-01 & -0.0403 & 113.1667 \\
	Projected POET & -2.93E-02 & 1.15E-01 & -0.0315 & 84.1090 & -1.02E-03 & 1.65E-02 & -0.0615 & 0.9502 \\
	FGL(FF1) & 5.81E-04 & 1.43E-01 & 0.0141 & 0.6642 & 2.43E-04 & 7.57E-03 & 0.0321** & 0.0681 \\
	FGL(FF3) & 5.89E-04 & 8.51E-02 & 0.0144 & 0.6642 & 2.47E-04 & 7.55E-03 & 0.0327** & 0.0685 \\
	FGL(FF5) & 5.76E-04 & 8.50E-02 & 0.0140 & 0.6646 & 2.47E-04 & 7.53E-03 & 0.0328** & 0.0687 \\
	FF1 & -1.33E-02 & 8.49E-02 & -0.0858 & 15.6900 & -5.30E-04 & 1.66E-02 & -0.0319 & 0.5790 \\
	FF3 & -1.32E-02 & 1.28E-01 & -0.0854 & 15.6211 & -5.36E-04 & 1.66E-02 & -0.0323 & 0.5785 \\
	FF5 & -1.32E-02 & 1.28E-01 & -0.0852 & 15.5866 & -5.43E-04 & 1.66E-02 & -0.0327 & 0.5786 \\
	\bottomrule
\end{tabular}
	}
\end{table}
\clearpage
\subsection{Subperiod Analyses: MWC and GMV} \label{appendixC3}
Tables \ref{tab6} and \ref{tab7} report subperiod analyses for MWC and MRC portfolio formulations. The values of the EW and Index portfolios are the same as in Table \ref{tab5} and, hence, are also not reported to avoid repetition. In terms of relative comparison between the competing models, the conclusions are similar to those drawn when examining Table \ref{tab5} in the main text. However, in terms of relative magnitude, all models that use MWC or GMV portfolios exhibit deteriorated performance in terms of CER and SR during economic downturns (Downturn \#1 and Downturn \#2): MRC from Table \ref{tab5} is the only type of portfolio that produces positive CER during both recessions.
\begin{sidewaystable}[ph!]
	\centering
	\caption{Cumulative excess return (CER) and risk of MWC portfolios using daily data. Targeted risk is set at $\sigma=0.013$, daily targeted return is $0.0378\%$. P-values are in parentheses. In-sample: January 20, 2000 - January 24, 2002  (504 obs), Out-of-sample: January 17, 2002 - January 31, 2020 (4536 obs).}
	\label{tab6}
	\resizebox{\textwidth}{!}{%
\begin{tabular}{clccccccccccc} 
	\toprule
	& FGL & FClime & FLW & FNLW & POET & ProjPOET & FGL(FF1) & FGL(FF3) & FGL(FF5) & FF1 & FF3 & FF5 \\ 
	\midrule
	\multicolumn{5}{l}{\textbf{Downturn \#1: Argentine Great Depression (2002)}} & \multicolumn{1}{l}{} & \multicolumn{1}{l}{} & \multicolumn{1}{l}{} & \multicolumn{1}{l}{} & \multicolumn{1}{l}{} & \multicolumn{1}{l}{} & \multicolumn{1}{l}{} & \multicolumn{1}{l}{} \\
	CER & -0.0138 & -0.1045 & -0.0158 & -0.0195 & -0.2820 & -0.0217 & -0.0153 & -0.0176 & -0.0187 & -0.0334 & -0.0334 & -0.0334 \\
	Risk & 0.0082 & 0.0124 & 0.0080 & 0.0078 & 0.0324 & 0.0130 & 0.0078 & 0.0078 & 0.0078 & 0.0095 & 0.0095 & 0.0095 \\
	SR & -0.0031 & \begin{tabular}[c]{@{}c@{}}-0.0314\\(0.6753)\end{tabular} & \begin{tabular}[c]{@{}c@{}}-0.0045\\(0.6753)\end{tabular} & \begin{tabular}[c]{@{}c@{}}-0.0069\\(0.6753)\end{tabular} & \begin{tabular}[c]{@{}c@{}}-0.0265\\(0.6753)\end{tabular} & \begin{tabular}[c]{@{}c@{}}-0.0007\\(0.6753)\end{tabular} & \begin{tabular}[c]{@{}c@{}}-0.0044\\(0.6753)\end{tabular} & \begin{tabular}[c]{@{}c@{}}-0.0057\\(0.6753)\end{tabular} & \begin{tabular}[c]{@{}c@{}}-0.0063\\(0.6753)\end{tabular} & \begin{tabular}[c]{@{}c@{}}-0.0194\\(0.6414)\end{tabular} & \begin{tabular}[c]{@{}c@{}}-0.0194\\(0.6414)\end{tabular} & \begin{tabular}[c]{@{}c@{}}-0.0194\\(0.6414)\end{tabular} \\ 
	\midrule
	\multicolumn{5}{l}{\textbf{Downturn \#2: Financial Crisis (2008)}} & \multicolumn{1}{l}{} & \multicolumn{1}{l}{} & \multicolumn{1}{l}{} & \multicolumn{1}{l}{} & \multicolumn{1}{l}{} & \multicolumn{1}{l}{} & \multicolumn{1}{l}{} & \multicolumn{1}{l}{} \\
	CER & -0.1956 & -0.3974 & -0.2789 & -0.2811 & -0.9989 & -0.0842 & -0.2107 & -0.2074 & -0.2053 & -0.2669 & -0.2669 & -0.2669 \\
	Risk & 0.0135 & 0.0204 & 0.0126 & 0.0123 & 0.1198 & 0.0176 & 0.0134 & 0.0134 & 0.0133 & 0.0183 & 0.0183 & 0.0183 \\
	SR & \begin{tabular}[c]{@{}l@{}}0.0135\\(0.4555)\end{tabular} & \begin{tabular}[c]{@{}c@{}}0.0204\\(0.4715)\end{tabular} & \begin{tabular}[c]{@{}c@{}}0.0126\\(0.4715)\end{tabular} & \begin{tabular}[c]{@{}c@{}}0.0123\\(0.4715)\end{tabular} & \begin{tabular}[c]{@{}c@{}}0.1198\\(0.4715)\end{tabular} & \begin{tabular}[c]{@{}c@{}}0.0176\\(0.4715)\end{tabular} & \begin{tabular}[c]{@{}c@{}}0.0134\\(0.4645)\end{tabular} & \begin{tabular}[c]{@{}c@{}}0.0134\\(0.4685)\end{tabular} & \begin{tabular}[c]{@{}c@{}}0.0133\\(0.4715)\end{tabular} & \begin{tabular}[c]{@{}c@{}}0.0113\\(0.4486)\end{tabular} & \begin{tabular}[c]{@{}c@{}}0.0113\\(0.4486)\end{tabular} & \begin{tabular}[c]{@{}c@{}}0.0183\\(0.4486)\end{tabular} \\ 
	\midrule
	\multicolumn{5}{l}{\textbf{Boom \#1 (2017)}} & \multicolumn{1}{l}{} & \multicolumn{1}{l}{} & \multicolumn{1}{l}{} & \multicolumn{1}{l}{} & \multicolumn{1}{l}{} & \multicolumn{1}{l}{} & \multicolumn{1}{l}{} & \multicolumn{1}{l}{} \\
	CER & 0.1398 & 0.1309 & 0.1267 & -0.0361 & 0.5720 & -0.0877 & 0.1406 & 0.1407 & 0.1419 & -0.0361 & -0.0349 & -0.0676 \\
	Risk & 0.0044 & 0.0041 & 0.0037 & 0.0087 & 0.0630 & 0.0089 & 0.0046 & 0.0046 & 0.0046 & 0.0070 & 0.0070 & 0.0070 \\
	SR & \begin{tabular}[c]{@{}l@{}}0.1194\\(0.5814)\end{tabular} & \begin{tabular}[c]{@{}c@{}}0.1227\\(0.5884)\end{tabular} & \begin{tabular}[c]{@{}c@{}}0.1308\\(0.5814)\end{tabular} & \begin{tabular}[c]{@{}c@{}}-0.0124\\(0.7644)\end{tabular} & \begin{tabular}[c]{@{}c@{}}0.0510\\(0.5218)\end{tabular} & \begin{tabular}[c]{@{}c@{}}-0.0367\\(0.7644)\end{tabular} & \begin{tabular}[c]{@{}c@{}}0.1151\\(0.5814)\end{tabular} & \begin{tabular}[c]{@{}c@{}}0.1154\\(0.5814)\end{tabular} & \begin{tabular}[c]{@{}c@{}}0.1177\\(0.5814)\end{tabular} & \begin{tabular}[c]{@{}c@{}}-0.0173\\(0.7644)\end{tabular} & \begin{tabular}[c]{@{}c@{}}-0.0165\\(0.7644)\end{tabular} & \begin{tabular}[c]{@{}c@{}}-0.0361\\(0.7644)\end{tabular} \\ 
	\midrule
	\multicolumn{5}{l}{\textbf{Boom \#2 (2019)}} & \multicolumn{1}{l}{} & \multicolumn{1}{l}{} & \multicolumn{1}{l}{} & \multicolumn{1}{l}{} & \multicolumn{1}{l}{} & \multicolumn{1}{l}{} & \multicolumn{1}{l}{} & \multicolumn{1}{l}{} \\
	CER & 0.3787 & 0.2595 & 0.3018 & 0.4078 & 1.4756 & 0.5300 & 0.2492 & 0.2497 & 0.2506 & 0.3839 & 0.3845 & 0.3896 \\
	Risk & 0.0085 & 0.0078 & 0.0072 & 0.0098 & 0.0403 & 0.0176 & 0.0063 & 0.0064 & 0.0064 & 0.0175 & 0.0175 & 0.0175 \\
	SR & \begin{tabular}[c]{@{}l@{}}0.1533\\(0.5715)\end{tabular} & \begin{tabular}[c]{@{}c@{}}0.1215\\(0.5920)\end{tabular} & \begin{tabular}[c]{@{}c@{}}0.1495\\(0.5920)\end{tabular} & \begin{tabular}[c]{@{}c@{}}0.1432\\(0.5920)\end{tabular} & \begin{tabular}[c]{@{}c@{}}0.1092\\(0.6512)\end{tabular} & \begin{tabular}[c]{@{}c@{}}0.1046\\(0.6512)\end{tabular} & \begin{tabular}[c]{@{}c@{}}0.1423\\(0.5920)\end{tabular} & \begin{tabular}[c]{@{}c@{}}0.1423\\(0.5920)\end{tabular} & \begin{tabular}[c]{@{}c@{}}0.1427\\(0.5920)\end{tabular} & \begin{tabular}[c]{@{}c@{}}0.0816\\(0.8912)\end{tabular} & \begin{tabular}[c]{@{}c@{}}0.0817\\(0.8912)\end{tabular} & \begin{tabular}[c]{@{}c@{}}0.0826\\(0.8912)\end{tabular} \\
	\bottomrule
\end{tabular}%
	}
\end{sidewaystable}
\begin{sidewaystable}[ph!]
	\centering
	\caption{Cumulative excess return (CER) and risk of GMV portfolios using daily data. Targeted risk is set at $\sigma=0.013$, daily targeted return is $0.0378\%$. P-values are in parentheses. In-sample: January 20, 2000 - January 24, 2002  (504 obs), Out-of-sample: January 17, 2002 - January 31, 2020 (4536 obs).}
	\label{tab7}
	\resizebox{\textwidth}{!}{%
\begin{tabular}{clccccccccccc} 
	\toprule
	& FGL & FClime & FLW & FNLW & POET & ProjPOET & FGL(FF1) & FGL(FF3) & FGL(FF5) & FF1 & FF3 & FF5 \\ 
	\midrule
	\multicolumn{5}{l}{\textbf{Downturn \#1: Argentine Great Depression (2002)}} & \multicolumn{1}{l}{} & \multicolumn{1}{l}{} & \multicolumn{1}{l}{} & \multicolumn{1}{l}{} & \multicolumn{1}{l}{} & \multicolumn{1}{l}{} & \multicolumn{1}{l}{} & \multicolumn{1}{l}{} \\
	CER & -0.0044 & -0.1061 & -0.0151 & -0.0206 & -0.3190 & -0.0662 & -0.0038 & -0.0059 & -0.0076 & -0.0335 & -0.0335 & -0.0335 \\
	Risk & 0.0081 & 0.0129 & 0.0080 & 0.0078 & 0.0330 & 0.0135 & 0.0077 & 0.0077 & 0.0077 & 0.0096 & 0.0096 & 0.0096 \\
	SR & \begin{tabular}[c]{@{}l@{}}0.0017\\(0.6543)\end{tabular} & \begin{tabular}[c]{@{}c@{}}-0.0306\\(0.7564)\end{tabular} & \begin{tabular}[c]{@{}c@{}}-0.0041\\(0.6583)\end{tabular} & \begin{tabular}[c]{@{}c@{}}-0.0075\\(0.6583)\end{tabular} & \begin{tabular}[c]{@{}c@{}}-0.0322\\(0.7564)\end{tabular} & \begin{tabular}[c]{@{}c@{}}-0.0148\\(0.6583)\end{tabular} & \begin{tabular}[c]{@{}c@{}}0.0017\\(0.6543)\end{tabular} & \begin{tabular}[c]{@{}c@{}}0.0006\\(0.6543)\end{tabular} & \begin{tabular}[c]{@{}c@{}}-0.0004\\(0.6543)\end{tabular} & \begin{tabular}[c]{@{}c@{}}-0.0193\\(0.6583)\end{tabular} & \begin{tabular}[c]{@{}c@{}}-0.0193\\(0.6583)\end{tabular} & \begin{tabular}[c]{@{}c@{}}-0.0193\\(0.6583)\end{tabular} \\ 
	\midrule
	\multicolumn{5}{l}{\textbf{Downturn \#2: Financial Crisis (2008)}} & \multicolumn{1}{l}{} & \multicolumn{1}{l}{} & \multicolumn{1}{l}{} & \multicolumn{1}{l}{} & \multicolumn{1}{l}{} & \multicolumn{1}{l}{} & \multicolumn{1}{l}{} & \multicolumn{1}{l}{} \\
	CER & -0.2113 & -0.4410 & -0.2926 & -0.2959 & -0.9928 & 0.0829 & -0.2291 & -0.2251 & -0.2226 & -0.2938 & -0.2938 & -0.2938 \\
	Risk & 0.0138 & 0.0241 & 0.0128 & 0.0124 & 0.0931 & 0.0247 & 0.0136 & 0.0136 & 0.0136 & 0.0186 & 0.0186 & 0.0186 \\
	SR & \begin{tabular}[c]{@{}l@{}}0.0138\\(0.4146)\end{tabular} & \begin{tabular}[c]{@{}c@{}}0.0241\\(0.4296)\end{tabular} & \begin{tabular}[c]{@{}c@{}}0.0128\\(0.4296)\end{tabular} & \begin{tabular}[c]{@{}c@{}}0.0124\\(0.4296)\end{tabular} & \begin{tabular}[c]{@{}c@{}}0.0931\\(0.4296)\end{tabular} & \begin{tabular}[c]{@{}c@{}}0.0247\\(0.4296)\end{tabular} & \begin{tabular}[c]{@{}c@{}}0.0136\\(0.4196)\end{tabular} & \begin{tabular}[c]{@{}c@{}}0.0136\\(0.4276)\end{tabular} & \begin{tabular}[c]{@{}c@{}}0.0136\\(0.4296)\end{tabular} & \begin{tabular}[c]{@{}c@{}}0.0116\\(0.4096)\end{tabular} & \begin{tabular}[c]{@{}c@{}}0.0116\\(0.4096)\end{tabular} & \begin{tabular}[c]{@{}c@{}}0.0186\\(0.4096)\end{tabular} \\ 
	\midrule
	\multicolumn{5}{l}{\textbf{Boom \#1 (2017)}} & \multicolumn{1}{l}{} & \multicolumn{1}{l}{} & \multicolumn{1}{l}{} & \multicolumn{1}{l}{} & \multicolumn{1}{l}{} & \multicolumn{1}{l}{} & \multicolumn{1}{l}{} & \multicolumn{1}{l}{} \\
	CER & 0.1384 & 0.1264 & 0.1323 & -0.0388 & -1.0000 & -0.1106 & 0.1387 & 0.1388 & 0.1402 & -0.0389 & -0.0366 & -0.0698 \\
	Risk & 0.0045 & 0.0041 & 0.0037 & 0.0090 & 0.2414 & 0.0115 & 0.0047 & 0.0047 & 0.0046 & 0.0065 & 0.0065 & 0.0066 \\
	SR & \begin{tabular}[c]{@{}l@{}}0.1177\\(0.5994)\end{tabular} & \begin{tabular}[c]{@{}c@{}}0.1183\\(0.6044)\end{tabular} & \begin{tabular}[c]{@{}c@{}}0.1366\\(0.6044)\end{tabular} & \begin{tabular}[c]{@{}c@{}}-0.0131\\(0.8024)\end{tabular} & \begin{tabular}[c]{@{}c@{}}-0.0723\\(0.9115)\end{tabular} & \begin{tabular}[c]{@{}c@{}}-0.0347\\(0.8024)\end{tabular} & \begin{tabular}[c]{@{}c@{}}0.1131\\(0.6014)\end{tabular} & \begin{tabular}[c]{@{}c@{}}0.1133\\(0.6044)\end{tabular} & \begin{tabular}[c]{@{}c@{}}0.1157\\(0.6044)\end{tabular} & \begin{tabular}[c]{@{}c@{}}-0.0211\\(0.8024)\end{tabular} & \begin{tabular}[c]{@{}c@{}}-0.0196\\(0.8024)\end{tabular} & \begin{tabular}[c]{@{}c@{}}-0.0404\\(0.9023)\end{tabular} \\ 
	\midrule
	\multicolumn{1}{l}{\textbf{Boom \#2 (2019)}} &  & \multicolumn{1}{l}{} & \multicolumn{1}{l}{} & \multicolumn{1}{l}{} & \multicolumn{1}{l}{} & \multicolumn{1}{l}{} & \multicolumn{1}{l}{} & \multicolumn{1}{l}{} & \multicolumn{1}{l}{} & \multicolumn{1}{l}{} & \multicolumn{1}{l}{} & \multicolumn{1}{l}{} \\
	CER & 0.3703 & 0.2829 & 0.2994 & 0.3287 & 1.6301 & 0.6870 & 0.2503 & 0.2504 & 0.2504 & 0.4031 & 0.4038 & 0.4087 \\
	Risk & 0.0072 & 0.0081 & 0.0084 & 0.0097 & 0.0318 & 0.0186 & 0.0063 & 0.0063 & 0.0063 & 0.0185 & 0.0185 & 0.0184 \\
	SR & \begin{tabular}[c]{@{}l@{}}0.1521\\(0.5644)\end{tabular} & \begin{tabular}[c]{@{}c@{}}0.1266\\(0.5714)\end{tabular} & \begin{tabular}[c]{@{}c@{}}0.1478\\(0.5714)\end{tabular} & \begin{tabular}[c]{@{}c@{}}0.1419\\(0.5714)\end{tabular} & \begin{tabular}[c]{@{}c@{}}0.1366\\(0.5714)\end{tabular} & \begin{tabular}[c]{@{}c@{}}0.1209\\(0.5714)\end{tabular} & \begin{tabular}[c]{@{}c@{}}0.1441\\(0.5684)\end{tabular} & \begin{tabular}[c]{@{}c@{}}0.1440\\(0.5684)\end{tabular} & \begin{tabular}[c]{@{}c@{}}0.1439\\(0.5684)\end{tabular} & \begin{tabular}[c]{@{}c@{}}0.0810\\(0.7592)\end{tabular} & \begin{tabular}[c]{@{}c@{}}0.0811\\(0.7592)\end{tabular} & \begin{tabular}[c]{@{}c@{}}0.0819\\(0.7592)\end{tabular} \\
	\bottomrule
\end{tabular}%
	}
\end{sidewaystable}
\end{spacing}
\end{appendices}
\end{document}